\DeclareMathOperator{\DF}{DF}
\DeclareMathOperator{\MF}{MF}
\DeclareMathOperator{\CDF}{CDF}
\DeclareMathOperator{\CMF}{CMF}
\DeclareMathOperator{\Tr}{Tr}
\DeclareMathOperator{\app}{app}
\DeclareMathOperator{\trunc}{trunc}
\renewcommand{\phi}{\varphi}
\newcommand{\C}{{\mathbb C}}
\newcommand{\F}{{\mathbb F}}
\newcommand{\R}{{\mathbb R}}
\newcommand{\Rplus}{\R_+}
\newcommand{\Z}{{\mathbb Z}}
\newcommand{\Fp}{\F_p}
\newcommand{\Fpu}{\F_p^*}
\newcommand{\Fpn}{\F_{p^n}}
\newcommand{\Fpnu}{\F_{p^n}^*}
\newcommand{\Fu}{F^*}
\newcommand{\Fi}{F_\iota}
\newcommand{\norm}[2]{\Vert{#1}\Vert_{#2}}
\newcommand{\normp}[3]{\norm{#1}{#2}^{#3}}
\newcommand{\normt}[1]{\norm{#1}{2}}
\newcommand{\normtt}[1]{\normp{#1}{2}{2}}
\newcommand{\normtf}[1]{\normp{#1}{2}{4}}
\newcommand{\normff}[1]{\normp{#1}{4}{4}}
\newcommand{\cnr}[2]{\normtt{#1 #2}/(\normtt{#1} \normtt{#2})}
\newcommand{\cnrd}[2]{\frac{\normtt{{#1} {#2}}}{\normtt{#1}  \normtt{#2}}}
\newcommand{\Lambdaapp}{\Lambda_{\app}}
\newcommand{\Lambdatrunc}{\Lambda_{\trunc}}
\newcommand{\achars}{\widehat{F}}
\newcommand{\mchars}{\widehat{\Fu}}
\newcommand{\card}[1]{\left|{#1}\right|}
\newcommand{\conj}[1]{\overline{#1}}
\newcommand{\sums}[1]{\sum_{\substack{#1}}}
\newcommand{\floor}[1]{\lfloor{#1}\rfloor}
\newtheorem{theorem}{Theorem}[section]
\newtheorem{lemma}[theorem]{Lemma}
\title[Crosscorrelation of Sequences Derived from Characters]{Aperiodic Crosscorrelation of Sequences Derived from Characters}
\author{Daniel J.~Katz}
\date{first version: 29 February 2016; this version: 05 July 2016}
\keywords{crosscorrelation, autocorrelation, aperiodic, merit factor, m-sequence, Legendre sequence}
\thanks{Daniel J.~Katz is at California State University, Northridge}
\thanks{This work was supported in part by: NSF Grant DMS 1500856; a Research, Scholarship, and Creative Activity Award from California State University, Northridge; and funding from an NSERC grant awarded to Jonathan Jedwab.}
\begin{document}
\begin{abstract}
It is shown that pairs of maximal linear recursive sequences (m-sequences) typically have mean square aperiodic crosscorrelation on par with that of random sequences, but that if one takes a pair of m-sequences where one is the reverse of the other, and shifts them appropriately, one can get significantly lower mean square aperiodic crosscorrelation.  Sequence pairs with even lower mean square aperiodic crosscorrelation are constructed by taking a Legendre sequence, cyclically shifting it, and then cutting it (approximately) in half and using the halves as the sequences of the pair.  In some of these constructions, the mean square aperiodic crosscorrelation can be lowered further if one truncates or periodically extends (appends) the sequences.  Exact asymptotic formulae for mean squared aperiodic crosscorrelation are proved for sequences derived from additive characters (including m-sequences and modified versions thereof) and multiplicative characters (including Legendre sequences and their relatives).  Data is presented that shows that sequences of modest length have performance that closely approximates the asymptotic formulae.
\end{abstract}
\maketitle
\section{Introduction}
Sequences with low autocorrelation and sequence pairs with low crosscorrelation are used in a great variety of applications, including remote sensing, design of scientific instruments, and communications networks.  See \cite{Golomb,Turyn,Golay-1972,Scholtz-Welch,Sarwate-Pursley,Sarwate} for some foundational works, \cite[Parts V, IX]{Schroeder} for an extensive discussion of applications, and \cite{Golomb-Gong} for an entire monograph on design of sequences with favorable correlation properties.
In this paper we investigate the aperiodic autocorrelation and crosscorrelation of sequences derived from finite field characters.
These are finite sequences whose terms lie in the complex field $\C$, and are typically $m$th roots of unity (when $m=2$ the terms are $\pm 1$ and we obtain {\it binary} sequences).
If $f=(f_0,\ldots,f_{\ell-1})$ and $g=(g_0,\ldots,g_{m-1})$ are sequences of respective lengths $\ell$ and $m$, and $s$ is an integer, then the {\it aperiodic crosscorrelation of $f$ with $g$ at shift $s$} is defined to be
\[
C_{f,g}(s) = \sum_{j \in \Z} f_j \conj{g_{j+s}},
\]
where $\conj{u}$ denotes the complex conjugate of $u \in \C$, and where we use the convention that $f_j=0$ when $j\not\in\{0,1,\ldots,\ell-1\}$ and $g_k=0$ when $k \not \in \{0,1,\ldots,m-1\}$.
This makes $C_{f,g}(s)=0$ whenever $s \leq -\ell$ or $s \geq m$.
For the rest of this paper, we shall only consider the case where the sequences are of equal length, that is, where $\ell=m$.

The {\it aperiodic autocorrelation of $f$ at shift $s$} is just the aperiodic crosscorrelation of $f$ with $f$ at shift $s$, that is, $C_{f,f}(s)$.
If our complex sequence has terms that are {\it unimodular} (complex numbers of unit magnitude), then its aperiodic autocorrelation at shift $0$ is equal to the length of the sequence.

We are interested in studying the mean squared magnitude of autocorrelation and crosscorrelation.
For autocorrelation, Golay \cite{Golay-1972} devised the {\it merit factor} of a sequence
\begin{equation}\label{Amanda}
\MF(f)= \frac{|C_{f,f}(0)|^2}{\sums{s \in \Z \\s\not=0} |C_{f,f}(s)|^2},
\end{equation}
which is large for sequences where the autocorrelation values at nonzero shifts are collectively small.
In this paper we shall call $\MF(f)$ the {\it autocorrelation merit factor} to emphasize that it is a measure of autocorrelation performance, with high merit factor indicating a desirable sequence.
As Sarwate \cite[p.~102]{Sarwate} notes, it is more convenient to study the reciprocal of the autocorrelation merit factor, which we shall call the {\it autocorrelation demerit factor}, and write
\begin{equation}\label{Adam}
\DF(f)= \frac{\sums{s \in \Z \\s\not=0} |C_{f,f}(s)|^2}{|C_{f,f}(0)|^2}.
\end{equation}

In a similar vein, as we would like to keep all crosscorrelation values small when compared to the length of the sequences, we consider the analogous ratio, which we call the {\it crosscorrelation demerit factor},
\begin{equation}\label{Cedric}
\CDF(f,g)=\frac{\sum_{s \in \Z} |C_{f,g}(s)|^2}{|C_{f,f}(0)| \cdot |C_{g,g}(0)|},
\end{equation}
When $f$ and $g$ are sequences of the same length $\ell$ whose terms are unimodular complex numbers, the denominator becomes $\ell^2$.
Since $\CDF(f,g)$ is small for sequence pairs where the crosscorrelation values are collectively small, we see that small crosscorrelation demerit factor indicates a desirable sequence pair.
We define the reciprocal to be the {\it crosscorrelation merit factor},
\begin{equation}\label{Carmine}
\CMF(f,g) =\frac{|C_{f,f}(0)| \cdot |C_{g,g}(0)|}{\sum_{s \in \Z} |C_{f,g}(s)|^2}.
\end{equation}
which is of course large for desirable sequence pairs.
It has been commented by K\"arkk\"ainen \cite[p.~149]{Karkkainen} that for code division multiple access (CDMA) applications, mean square crosscorrelation may be a better measure of crosscorrelation performance than maximum absolute value of crosscorrelation values.

Sarwate \cite[eq.~(13)]{Sarwate} has calculated the expected value of the autocorrelation demerit factor when one averages over all binary sequences of a given length $\ell$.  He found that
\begin{equation}\label{David}
E[\DF(f)] = 1 - \frac{1}{\ell}.
\end{equation}
For crosscorrelation, Sarwate \cite[eq.~(38)]{Sarwate} found that when one averages over all pairs of distinct binary sequences $f$ and $g$, one obtains
\begin{equation}\label{Earl}
E[\CDF(f,g)] = 1.
\end{equation}

Pursley and Sarwate \cite[eqs.~(3) and (4)]{Pursley-Sarwate} proved a useful bound that relates autocorrelation and crosscorrelation demerit factors via the Cauchy-Schwarz inequality.
When translated into the notation of this paper, their bound states that
\begin{equation}\label{Gary}
1-\sqrt{\DF(f)\DF(g)} \leq \CDF(f,g) \leq 1+\sqrt{\DF(f) \DF(g)}.
\end{equation}

One should note that the lower bound means that if both sequences in a pair have very low autocorrelation, then their crosscorrelation demerit factor cannot be much lower than $1$, that is, much better than what one typically obtains for randomly selected sequences.
And conversely, a pair of sequences with very low crosscorrelation cannot also both have very low autocorrelation.
Many of our results are close to the lower bound in \eqref{Gary}, and we shall often observe the resulting tradeoff between good autocorrelation and crosscorrelation.

Much effort has gone into finding sequence families with superior autocorrelation properties.
We consider some infinite families, and their merit factor in the limit as length tends to infinity.
In 1968, Littlewood \cite{Littlewood-1968} proved that sequences derived from the Rudin-Shapiro polynomials have asymptotic merit factor $3$, that is, $\MF(f) \to 3$ as the length of $f$ tends to infinity.
This record was exceeded in 1988, when H\o holdt and Jensen \cite{Hoholdt-Jensen}, inspired by observations of Turyn reported by Golay \cite{Golay-1983}, proved that shifted Legendre sequences have asymptotic merit factor $6$.
Although H\o holdt and Jensen conjectured that $6$ could not be asymptotically exceeded, later observations by Kirilusha and Narayanaswamy \cite{Kirilusha-Narayanaswamy} and Borwein, Choi, and Jedwab \cite{Borwein-Choi-Jedwab} suggested that one could increase the asymptotic merit factor of shifted Legendre sequences by appending to the end a repetition of an initial segment of the original.
The third and current asymptotic merit factor record was established in 2013 by Jedwab, Schmidt, and the author of this paper \cite[Theorem 1.1]{Jedwab-Katz-Schmidt-2013-Littlewood}, who proved that appending does enable one to achieve asymptotic merit factor of $6.342061\ldots$, the largest root of $29 x^3-249 x^2+417 x -27$.

On the other hand, relatively little is known about the more difficult problem of characterizing the crosscorrelation of important sequence families.
Since there is only one Legendre sequence of each length, Legendre sequences are not suitable for use in communications systems where pairs (or larger sets) of sequences are used.
Therefore, one generally uses maximal linear recursive sequences (m-sequences), since one can obtain sets of many having the same length and they are easy to implement through linear feedback shift registers.

This paper provides rigorous calculations of asymptotic crosscorrelation merit factors for m-sequences and other sequences derived from finite field characters in the limit as the sequence length tends to infinity.
It also presents computer calculations that show that this limiting behavior is already closely approximated for sequences of modest length.
In Section \ref{Abigail}, we present our main results on m-sequences with minimal technical details.
We show that one can obtain asymptotic crosscorrelation merit factors in excess of $1$ by using appropriately selected pairs of m-sequences.
This gives families of sequence pairs where both the autocorrelation of the individual sequences and the crosscorrelation of the pairs are asymptotically better than randomly sequences.
In Section \ref{Gilbert}, we then consider what happens in the more general class of {\it additive character sequences}, which we obtain from m-sequences by shortening them (truncation) or periodically extending them (appending).
In Section \ref{Milton} we present our main results on sequences derived from the Legendre symbol and other multiplicative characters.
Here we can get even higher crosscorrelation merit factors than is possible for m-sequences, but typically there is a tradeoff between autocorrelation performance and crosscorrelation performance.  The rest of the paper (Sections \ref{Gail}--\ref{Elizabeth}) is devoted to proving the claims we present in Sections \ref{Abigail}--\ref{Milton}.

We should comment that there are other sequence families derived from finite field characters.  These include the Gold and Kasami sequences (themselves constructed from m-sequences), the Gordon-Mills-Welch sequences (which generalize m-sequences), and Sidel'nikov sequences (derived from multiplicative characters in a different way than the standard multiplicative character sequences discussed in this paper).
These other sequences have more complicated structures than the sequences studied in this paper, and thus exact analyses are expected to be more difficult.
Empirical studies by K\"arkk\"ainnen \cite[p.~148]{Karkkainen} suggest that m-sequences, Gold sequences, and Kasami sequences typically have similar mean square aperiodic crosscorrelation. 

\section{M-Sequences}\label{Abigail}

\subsection{Definition of an M-Sequence}\label{Harold}
To present our findings on crosscorrelation of m-sequences, we first need a terminology that enables us to state our results.
Since aperiodic crosscorrelation behavior sometimes depends on which cyclic shift of an m-sequence is used, it is necessary to define m-sequences in such a way that we can unambiguously indicate which cyclic shift we are talking about.
For $p$ a prime and $n$ a positive integer, the {\it canonical additive character} of the finite field $\Fpn$ of order $p^n$ is the function $\epsilon(x)=\exp(2\pi i \Tr(x)/p)$, where $\Tr \colon \Fpn \to \Fp$ is the absolute trace from Galois theory: $\Tr(x)=x+x^p+\cdots+x^{p^{n-1}}$.
For $\alpha$ a primitive element of $\Fpn$, the {\it $p$-ary Galois sequence associated to $\alpha$} (also called the {\it $p$-ary m-sequence of natural shift associated to $\alpha$}) is 
\[
(\epsilon(\alpha^0),\epsilon(\alpha^1),\ldots,\epsilon(\alpha^{p^n-2})).
\]
Now we define the {\it $p$-ary m-sequence associated to $\alpha$ with shift $s$} to be
\[
(\epsilon(\alpha^s),\epsilon(\alpha^{s+1}),\ldots,\epsilon(\alpha^{s+p^n-2})),
\]
which is just what one obtains by taking the Galois sequence associated to $\alpha$ and cyclically shifting it $s$ places to the left.
We usually write ``binary'' rather than ``$2$-ary'' and ``ternary'' rather than ``$3$-ary.''
M-sequences are examples of {\it additive character sequences}, since one obtains them from the values of additive characters.\footnote{We have always used the canonical additive character in our definitions, but if $\epsilon \colon \Fpn \to \C$ is the canonical additive character, then any other nontrivial additive character $\psi\colon \Fpn \to \C$ has the form $\psi(x)=\epsilon(c x)$ for some $c \in \Fpnu$.  If $\alpha$ is a primitive element of $\Fpn$, we write $c=\alpha^s$ and then note that $\psi(x)=\epsilon(\alpha^s x)$, so that using $\psi$ instead of $\epsilon$ has the effect of shifting the sequence cyclically by $s$ places.  Thus we get no new sequences in this way.  And we do not use the trivial additive character, as it produces a sequence all of whose terms are $1$, which does not have good correlation properties.}  For the definition of a generic additive character sequence, see Section \ref{Albert}.

The full set of m-sequences of length $p^n-1$ is obtained as we let $\alpha$ run through the primitive elements of $\Fpn$ and let $s$ run through all possible cyclic shifts.
Since $\Tr(x^p)=\Tr(x)$, we see that $\epsilon(x^p)=\epsilon(x)$, so the Galois sequence associated to $\alpha^p$ is the same as the Galois sequence associated to $\alpha$.
We say that two elements $a, b \in \Fpn$ are {\it Galois conjugates} to mean that $a=b^{p^k}$ for some $k \in \Z$.
If one wants to list all m-sequences of length $p^n$ without repetition, one should only let the primitive element $\alpha$ run through a set of representatives of the primitive elements of $\Fpn$ modulo the Galois conjugate relation, and should only let the shift $s$ run through $0,1,\ldots,p^n-2$.
The field $\Fpn$ has $\phi(p^n-1)/n$ distinct representatives of primitive elements modulo the Galois conjugate relation, where $\phi$ is the Euler phi-function, so there are are that many distinct Galois sequences of length $p^n-1$ and $p^n-1$ times as many m-sequences.

If we have two Galois sequences of length $p^n-1$, say $f$ associated to primitive element $\alpha$ and $g$ associated to primitive element $\beta$, then there is some integer $d$ with $\gcd(d,p^n-1)=1$ such that $\beta=\alpha^d$.  Then we say that {\it $g$ is the decimation by $d$ of $f$} or that {\it the pair $(f,g)$ has relative decimation $d$}, because the $k$th term of $g$ is $\epsilon(\alpha^{d k})$ which is the $(d k)$th term of $f$ (where we read term indices modulo $p^n-1$), and so $g$ can be obtained by taking every $d$th term of $f$ (proceeding cyclically modulo $p^n-1$).
Since $\epsilon(x^p)=\epsilon(x)$ as noted above, decimation by a power of $p$ does not change an m-sequence, and so we say that a decimation $d$ is {\it trivial} if $d \equiv p^k \pmod{p^n-1}$ for some $k \in \Z$.
If $d \equiv -p^k \pmod{p^n-1}$ for some $k \in \Z$, we say that $d$ is a {\it reversing decimation}, for then decimation by $d$ has the same effect as decimation by $-1$, which is reversal of the sequence (considered cyclically).

A generic m-sequence is a cyclic shift of a Galois sequence.
If $f$ and $g$ are generic m-sequences of the same length $p^n-1$, we say the pair {\it $(f,g)$ has relative decimation $d$} to mean that if $f$ is associated to primitive element $\alpha$ and $g$ is associated to primitive element $\beta$, then $\beta$ is a Galois conjugate of $\alpha^d$.
Equivalently, $(f,g)$ has relative decimation $d$ if we can obtain $g$ by taking every $d$th element of $f$ (proceeding cyclically modulo $p^n-1$) and then applying some cyclic shift.
Thus when we say that $(f,g)$ has relative decimation $d$, we are not specifying anything about the cyclic shifts of $f$ or $g$.
Just as with Galois sequences, a {\it trivial} relative decimation is a power of $p$ modulo $p^n-1$, and a {\it reversing} relative decimation is the negative of a power of $p$ modulo $p^n-1$.

\subsection{Known Autocorrelation Properties}\label{Joseph}

It was shown in \cite[Section 5]{Jensen-Hoholdt} and \cite[Theorem 2.2]{Jensen-Jensen-Hoholdt} (for binary sequences) and \cite[Theorem 1.3]{Katz-2013} (in general) that the autocorrelation merit factor of m-sequences tends to $3$ as their length tends to infinity, so we say that they have asymptotic autocorrelation merit factor $3$, or equivalently, asymptotic autocorrelation demerit factor $1/3$.
This is considerably better autocorrelation performance than one gets for a randomly selected sequence, since \eqref{David} shows that a random binary sequence of length $\ell$ has average autocorrelation demerit factor $1-1/\ell$.

\subsection{Known Crosscorrelation Properties}

For typical applications, we use m-sequence pairs $(f,g)$ where $g$ is not a cyclic shift of $f$, which is equivalent to saying that $f$ and $g$ are associated to primitive elements that are not Galois conjugates of each other, and this in turn is equivalent to saying that $(f,g)$ has nontrivial relative decimation.
If $(f,g)$ is a pair of m-sequences of length $\ell$ with nontrivial decimation, and if we use $T^k f$ to denote the sequence obtained from $f$ by cyclically shifting it $k$ places to the left, then Sarwate \cite[eq.~(54)]{Sarwate} shows that the average crosscorrelation demerit factor is
\begin{equation}\label{Sophia}
\frac{1}{\ell^2} \sum_{j,k \in \{0,1,\ldots,\ell-1\}} \CDF(T^j f, T^k g) = 1 + \frac{2}{3\ell} - \frac{1}{\ell^2} + \frac{1}{3 \ell^3},
\end{equation}
which is about $1$ for large $\ell$, and so about the same as the average over all binary sequences (see \eqref{Earl}).

We are thus tempted to conclude that although m-sequences have autocorrelation behavior superior to that of randomly selected sequences, nonetheless pairs of m-sequences typically have crosscorrelation behavior on par with pairs of randomly selected sequences.
This paper shows that in many cases, this judgement is true, but there are important exceptions where m-sequences can significantly exceed the performance of randomly selected sequences both in autocorrelation and crosscorrelation.

\subsection{Distribution of Crosscorrelation Demerit Factors}

It becomes apparent that something interesting is happening with the crosscorrelation of m-sequences when one looks at the distribution of crosscorrelation demerit factors for m-sequences of a given length.  We computed the crosscorrelation demerit factors for all pairs of m-sequences of length $255$ with nontrivial relative decimation, and we display the results in Figure \ref{Andrew}.
\begin{center}
\begin{figure}
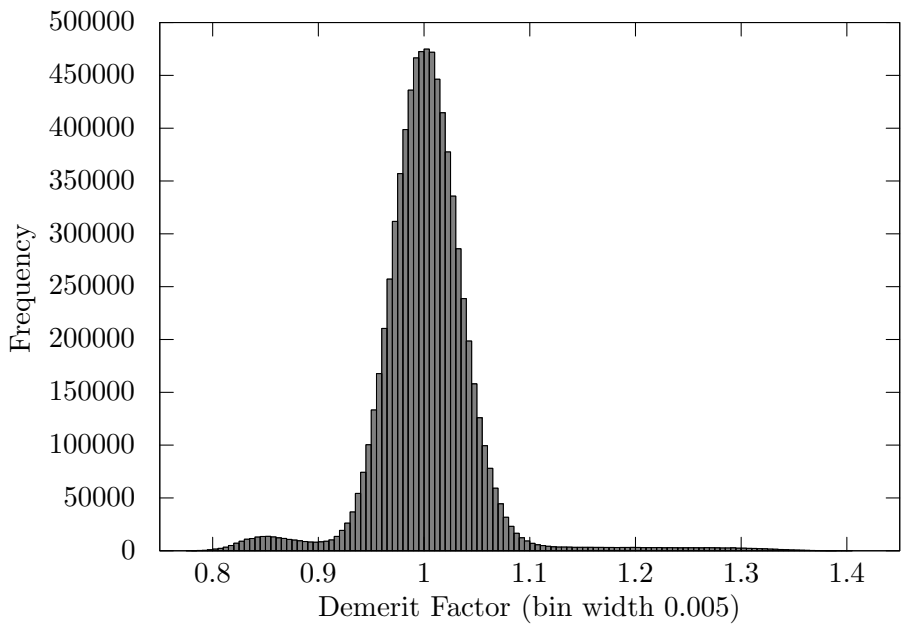

\begin{center}
\caption{Distribution of crosscorrelation demerit factors for all pairs of m-sequences of length $255$ with nontrivial relative decimation}\label{Andrew}

\end{center}
\end{figure}
\end{center}

One can see that although most of the demerit factors are concentrated in a peak centered at about $1$, there is a satellite peak centered at about $0.85$.
One finds that this satellite peak is due to sequence pairs with reversing relative decimation: when we separate the distribution of Figure \ref{Andrew} into those contributions from sequence pairs with reversing relative decimation and those with nonreversing relative decimation (see Figure \ref{Bartholomew}), we see that both the satellite peak and also the tail to the right of the main peak in Figure \ref{Andrew} are accounted for by reversing decimations.
So reversing relative decimation is responsible for crosscorrelation demerit factors both markedly superior and markedly inferior to the typical case.
This phenomenon is not detectable if one averages the crosscorrelation demerit factor over cyclic shifts, as Sarwate does (recall equation \eqref{Sophia} above).
\begin{center}
\begin{figure}
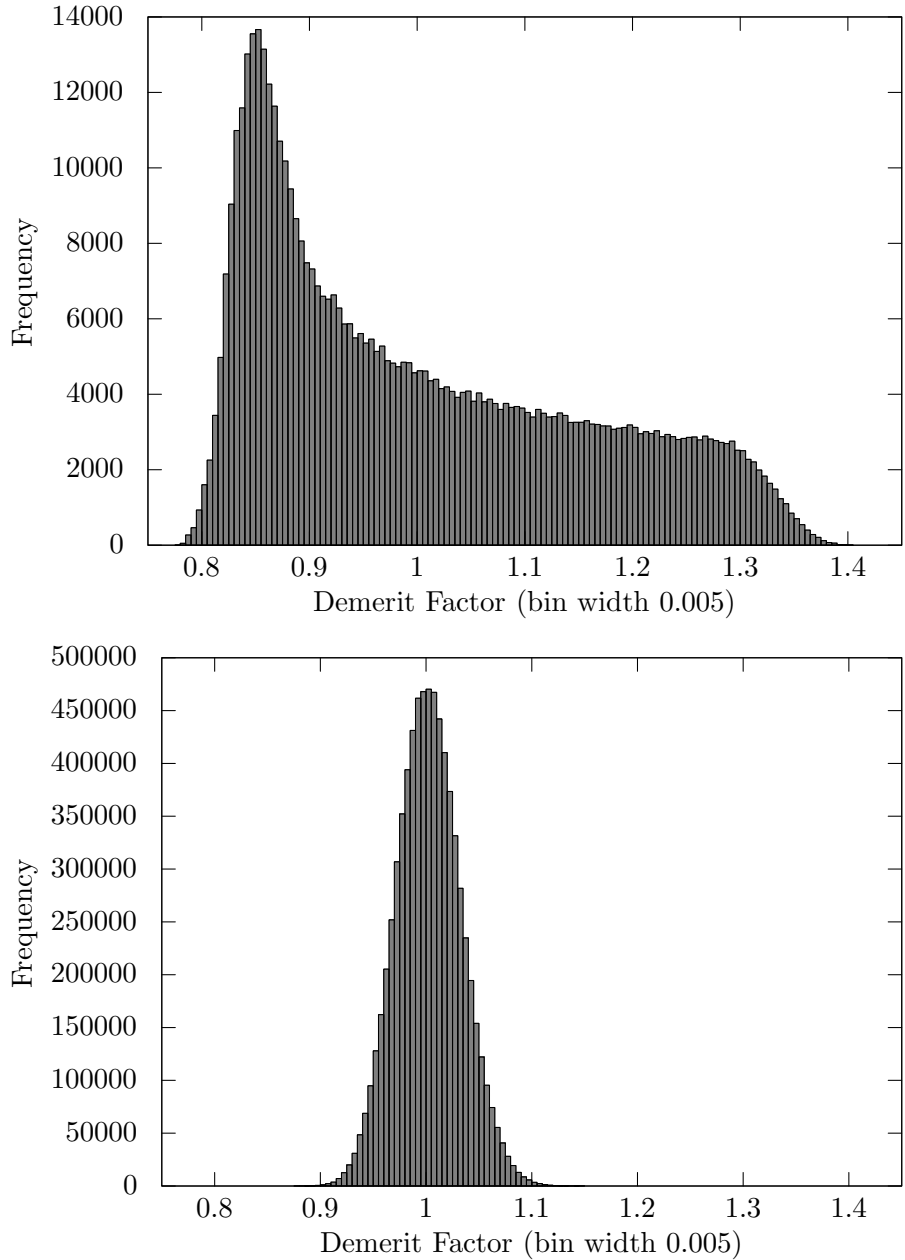

\begin{center}
\caption{Partitioned distribution of crosscorrelation demerit factors for pairs of m-sequences of length $255$ with nontrivial relative decimation.  Pairs with reversing relative decimation are shown on the top figure, while pairs with nonreversing relative decimation are shown on the bottom figure.  (Vertical scales differ to reveal the shape of the former.)}\label{Bartholomew}

\end{center}
\end{figure}
\end{center}

We want to determine the behavior of crosscorrelation demerit factor without averaging over cyclic shifts, especially to explain the reason why the reversing decimation produces a wider range of behavior than other nontrivial decimations.
Lack of averaging makes the calculations become much more difficult, and we only determine the asymptotic behavior.

\subsection{New Results: Asymptotic Crosscorrelation}

In this paper we show that for a fixed nontrivial, non-reversing decimation, the crosscorrelation merit factor tends to $1$ as the length of our m-sequences tends to infinity.
\begin{theorem}\label{Thomas}
Let $p$ be a prime and let $d$ be an integer such that $|d|$ is not a power of $p$.
Let $\{(f_\iota,g_\iota)\}_{\iota \in I}$ be an infinite family of pairs of $p$-ary m-sequences, where for each $\iota \in I$, the sequences $f_\iota$ and $g_\iota$ have the same length $q_\iota-1$ with $\gcd(d,q_\iota-1)=1$ and the pair $(f_\iota,g_\iota)$ has relative decimation $d$.
In the limit as $q_\iota \to \infty$, we have $\CDF(f_\iota,g_\iota) \to 1$.
\end{theorem}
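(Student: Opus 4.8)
Since every term of an m-sequence is a $p$-th root of unity, each $f_\iota,g_\iota$ is unimodular, so $C_{f_\iota,f_\iota}(0)=C_{g_\iota,g_\iota}(0)=q_\iota-1$; writing $\ell=q_\iota-1$, the object of study is $\CDF(f_\iota,g_\iota)=\ell^{-2}\sum_{s}|C_{f_\iota,g_\iota}(s)|^2$. I would begin with the elementary Parseval-type identity $\sum_{s\in\Z}|C_{f,g}(s)|^2=\sum_{t\in\Z}C_{f,f}(t)\overline{C_{g,g}(t)}$ (expand the square and interchange the order of summation). Isolating $t=0$ yields $\CDF(f_\iota,g_\iota)=1+\ell^{-2}\sum_{t\ne0}C_{f_\iota,f_\iota}(t)\overline{C_{g_\iota,g_\iota}(t)}$, so the theorem is equivalent to the estimate $E_\iota:=\sum_{t\ne0}C_{f_\iota,f_\iota}(t)\overline{C_{g_\iota,g_\iota}(t)}=o(\ell^2)$. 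Note that the Pursley--Sarwate inequality \eqref{Gary} together with the known fact that $\DF\to1/3$ for m-sequences gives only $|E_\iota|\le(\tfrac13+o(1))\ell^2$; what must really be shown is that the autocorrelation sidelobes $C_{f_\iota,f_\iota}(t)$ and $C_{g_\iota,g_\iota}(t)$ become \emph{asymptotically uncorrelated} as $t$ ranges over the nonzero integers, and it is exactly here that the hypothesis on $d$ enters.

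Fix $\iota$ and drop the subscript; write $f_j=\epsilon(A\alpha^j)$ and $g_k=\epsilon(B\alpha^{dk})$ for a primitive element $\alpha$ of $\F_q$ and constants $A,B\in\F_q^{*}$ encoding the cyclic shifts. Because $\epsilon(x^p)=\epsilon(x)$, replacing $\alpha^d$ by a Galois conjugate changes nothing; if $p\mid d$ then $g$ is literally the decimation by $d/p$, so we may assume $\gcd(d,p)=1$; and negative $d$ is absorbed by the identity $1-x^{d}=-x^{d}(1-x^{-d})$ in the character sums below, so only $|d|$ will matter. Then $C_{f,f}(t)=\sum_{j\in I_t}\epsilon\!\big(A\alpha^{j}(1-\alpha^{t})\big)$ is an \emph{incomplete} additive-character sum over a window $I_t$ of $\ell-|t|$ consecutive powers of $\alpha$, and likewise for $C_{g,g}(t)$ with $\alpha$ replaced by $\alpha^d$. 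Detecting the window $I_t$ by multiplicative characters of $\F_q^{*}$ and evaluating sums of the form $\sum_{y\in\F_q^{*}}\psi(y)\epsilon(cy)$ in terms of a Gauss sum rewrites
\[
C_{f,f}(t)=\frac1\ell\sum_{\psi}\kappa_\psi(t)\,\overline\psi(1-\alpha^{t})\,G(\psi)\cdot(\text{shift phase}),
\]
where $G(\psi_0)=-1$, $|G(\psi)|=\sqrt q$ for $\psi\ne\psi_0$, $\kappa_{\psi_0}(t)=\ell-|t|$, and the crucial point is that for $\psi\ne\psi_0$ one has $\kappa_\psi(t)=(1-\psi(\alpha^{t}))/(1-\overline{\psi(\alpha)})$ — so the $t$-dependence of the weight is itself a character value. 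Substituting the analogous formula for $C_{g,g}(t)$ into $E$ turns it into a double sum over character pairs $(\psi,\psi')$ with an inner sum over $t$; by conjugate symmetry it suffices to handle $t=1,\dots,\ell-1$.

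The diagonal term $\psi=\psi'=\psi_0$ contributes $2\ell^{-2}\sum_{t=1}^{\ell-1}(\ell-t)^2\sim\tfrac23\ell=o(\ell^2)$ — this is the benign ``periodic'' part, and reassuringly it matches the leading $\tfrac{2}{3\ell}$ correction in Sarwate's shift-averaged formula \eqref{Sophia}. For every other pair, an Abel summation removes the Fej\'er-type weight $\ell-|t|$ whenever one of $\psi,\psi'$ equals $\psi_0$, and the identity $\kappa_\psi(t)=(1-\psi(\alpha^{t}))/(1-\overline{\psi(\alpha)})$ absorbs the remaining $t$-weights into the character; after substituting $x=\alpha^{t}$ (resp.\ $x=\alpha^{dt}$) the inner sum becomes a bounded $\C$-linear combination of complete or incomplete multiplicative-character sums of the shape $\sum_{x}\chi\!\big((1-x)^{a}(1-x^{|d|})^{b}x^{c}\big)$. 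This is the heart of the argument: since $\gcd(|d|,p)=1$ the $|d|$-th roots of unity are distinct in $\overline{\F_q}$, and the order of $(1-x)^{a}(1-x^{|d|})^{b}x^{c}$ at a primitive $|d|$-th root of unity equals $b$, which forces $b\equiv0$, hence $a\equiv0$, hence $c\equiv0$ before the function can be a perfect $(q-1)$-st power; thus no such function is a perfect power unless $(\psi,\psi')=(\psi_0,\psi_0)$. This is precisely where ``$|d|$ is not a power of $p$'' is used — and it also guarantees that for all but finitely many $\iota$ the relative decimation is both nontrivial and non-reversing modulo $q_\iota-1$. Hence Weil's bound gives square-root cancellation $O_d(\sqrt q)$ for each of these sums, with one further factor $O(\log q)$ for the incomplete ones via the standard completion estimate.

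Finally I would assemble the bound. The $\ell^1$-norms $\sum_{\psi\ne\psi_0}|1-\overline{\psi(\alpha)}|^{-1}$ and $\sum_{\psi\ne\psi_0}\max_t|\kappa_\psi(t)|$ of the relevant interval and Fej\'er transforms are $O(q\log q)$; each Gauss sum has modulus $\sqrt q$; the Weil input is $O_d(\sqrt q\,\log q)$ (or $O_d(\sqrt q\,\log^{2}q)$ after the Abel step); there are at most $q$ characters; and everything is divided by $\ell^2=(q-1)^2$. The largest resulting contribution works out to $O_d(q^{3/2}\log^{2}q)=o(q^2)$, giving $E_\iota=o(\ell^2)$ and hence $\CDF(f_\iota,g_\iota)\to1$. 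The main obstacle is exactly this accounting: it is the \emph{incompleteness} of the sums — the feature that distinguishes the genuinely hard aperiodic problem from the trivial periodic one, and whose effect the shift-averaged formula \eqref{Sophia} cannot detect — that forces one to complete many character sums, and one must organize the algebra so that only $\ell^1$-norms of size $O(q\log q)$, never sup- or $L^2$-quantities of size $\Theta(q)$ or $\Theta(q^2)$, ever multiply the Weil savings, and so that the accumulated logarithmic factors still vanish after division by $\ell^2$.
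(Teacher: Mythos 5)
Your proposal is correct in substance, and it takes a genuinely different route from the paper. The paper derives Theorem \ref{Thomas} as the case $\Lambda=1$ of Theorem \ref{Victor}\eqref{Ophelia}: starting from $\normtt{f g}=\sum_{t+u=v+w}f_tg_u\conj{f_vg_w}$ it expands each of the four additive character values into multiplicative characters weighted by Gauss sums, averages over an auxiliary character to form the fourfold Gauss-sum correlation $H(\kappa,\lambda,\mu,\nu)$ evaluated in Lemma \ref{Edward} by a single Weil-type sum, identifies the surviving main terms with counts of solutions of $t+u=v+w$ under congruence constraints (the $\Omega$ values), and disposes of the error in one stroke via Lemma \ref{Oliver}, an estimate imported from \cite{Katz-2013}. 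You instead pass through the Parseval identity $\sum_s|C_{f,g}(s)|^2=\sum_tC_{f,f}(t)\conj{C_{g,g}(t)}$, so the problem becomes the asymptotic decorrelation of the two aperiodic autocorrelations; each autocorrelation is expanded by a two-character formula (interval detection plus one Gauss sum, and your closed form for $\kappa_\psi(t)$ is right under your convention), the $(\psi_0,\psi_0)$ diagonal gives the benign $\tfrac23\ell$, and every other pair leads, because $t$ runs over essentially a full period, to complete sums $\sum_x\chi\bigl(x^c(1-x)^a(1-x^{|d|})^b\bigr)$ to which Weil applies; your multiplicity argument at the $|d|$-th roots of unity (legitimate after the reductions to $\gcd(d,p)=1$ and $|d|\ge2$) is exactly the right way to see that the hypothesis that $|d|$ is not a power of $p$ excludes all degeneracies, and it also explains structurally why Theorem \ref{Deidre} behaves differently: for $d\equiv\pm p^k$ the factor $1-x^{|d|}$ collapses onto a power of $1-x$ and a $q$-sized family of character pairs escapes Weil. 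The bookkeeping does close, provided (as you yourself warn) the two $O(q\log q)$ character sums are paired against the supremum of the Weil-bounded complete sums rather than against a count of characters: the block with both characters nontrivial is $\frac{q}{(q-1)^2}\cdot O(q\log q)\cdot O(q\log q)\cdot O_d(\sqrt q)=O_d(q^{3/2}\log^2q)$, and the blocks with one trivial character, after Abel summation and completion, are $O_d(q\log^2 q)$, so $E_\iota=o(\ell^2)$ as required. What your route buys is a self-contained argument with an explicit error term, needing only standard interval detection and Weil; what the paper's heavier fourfold expansion buys is uniformity in the fractional length and in the decimation type, which is what yields the truncated/appended cases of Theorem \ref{Victor} and the shift-dependent limits of Theorem \ref{Deidre}, whereas your inner sum over $t$ being essentially complete ties the method to natural-length m-sequences.
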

On the other hand, for trivial and reversing decimations, the asymptotic crosscorrelation merit factors depend on the relative shifts of the sequences.  Full results for both these kinds of decimations are found in our main result, Theorem \ref{Victor} below, which also includes Theorem \ref{Thomas} above as a special case.  Here we present the result for the reversing decimation, since the trivial decimation is not suitable for most applications.
\begin{theorem}\label{Deidre}
Let $p$ be a prime.
Let $\{(f_\iota,g_\iota)\}_{\iota \in I}$ be an infinite family of pairs of $p$-ary m-sequences, where for each $\iota \in I$, the sequences $f_\iota$ and $g_\iota$ have the same length $q_\iota-1$.
Suppose that each $f_\iota$ is the m-sequence associated to some $\alpha_\iota$ with shift $r_\iota$ and $g_\iota$ is the m-sequence associated to $\alpha_\iota^{-1}$ with shift $s_\iota$, so that $(f_\iota,g_\iota)$ has reversing relative decimation.
If there is a real number $\Sigma$ such that $(r_\iota+s_\iota)/(q_\iota-1) \to \Sigma$ as $q_\iota \to \infty$, then in this limit we also have
\[
\CDF(f_\iota,g_\iota) \to \frac{1}{3}+(\Sigma^\prime-\floor{\Sigma^\prime})^2+(\Sigma^\prime-\floor{\Sigma^\prime}-1)^2,
\]
where $\floor{\cdot}$ is the floor function and where $\Sigma^\prime=\Sigma$ if $p=2$ and $\Sigma^\prime=\Sigma+1/2$ if $p$ is odd.
Thus the asymptotic crosscorrelation demerit factor varies from $5/6$ to $4/3$, with the minimum value of $5/6$ achieved whenever $\Sigma^\prime$ is half an odd integer.
\end{theorem}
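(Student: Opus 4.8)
\section*{Proof proposal}

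The plan is to reduce $\CDF(f_\iota,g_\iota)$ to a weighted sum of products of aperiodic autocorrelations, express those autocorrelations through Gauss sums, and then extract a main term while bounding everything else by square-root cancellation. Since $f_\iota$ and $g_\iota$ have the same length $N_\iota:=q_\iota-1$ and unimodular terms, $\CDF(f_\iota,g_\iota)=N_\iota^{-2}\sum_{s\in\Z}|C_{f_\iota,g_\iota}(s)|^2$, and with the generating polynomials $F_\iota(\theta)=\sum_j(f_\iota)_j e^{2\pi i j\theta}$, $G_\iota(\theta)=\sum_j(g_\iota)_j e^{2\pi i j\theta}$ a Parseval computation gives the identity
\[
\sum_{s\in\Z}|C_{f_\iota,g_\iota}(s)|^2=\sum_{u\in\Z}C_{f_\iota,f_\iota}(u)\,\overline{C_{g_\iota,g_\iota}(u)}=\int_0^1|F_\iota(\theta)|^2\,|G_\iota(\theta)|^2\,d\theta .
\]
So it suffices to find the asymptotics of $\sum_u C_{f_\iota,f_\iota}(u)\,\overline{C_{g_\iota,g_\iota}(u)}$.

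Writing $(f_\iota)_j=\epsilon(\alpha_\iota^{r_\iota+j})$ and $(g_\iota)_k=\epsilon(\alpha_\iota^{-s_\iota-k})$, I would use that the periodic autocorrelation of an m-sequence is $-1$ at every nonzero shift, so $C_{f_\iota,f_\iota}(u)=-1-W_\iota(u)$ for $u\neq0$, where $W_\iota(u)$ is a sum of $\epsilon(x)$ over a cyclic interval of $|u|$ consecutive powers of $\alpha_\iota$ (the ``wrap-around'' overlap), and likewise for $C_{g_\iota,g_\iota}(u)$, the corresponding interval for $g_\iota$ being governed by $u$ and by the offset $r_\iota+s_\iota$; the $u=0$ term of the outer sum contributes $N_\iota^{2}$. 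Expanding $\epsilon$ on $\F_{q_\iota}^{*}$ in multiplicative characters, $\epsilon(x)=N_\iota^{-1}\sum_\chi\overline{\chi(x)}\,\tau(\chi)$, writes $W_\iota(u)$ as $-|u|/N_\iota$ (the contribution of $\tau(\chi_0)=-1$) plus $N_\iota^{-1}\sum_{\chi\neq\chi_0}(\text{root-of-unity weight})\,\tau(\chi)$, where the nontrivial-character weights involve both $\chi(1-\alpha_\iota^{u})$ and geometric factors $(\omega_\chi^{u}-1)/(\omega_\chi-1)$ with $\omega_\chi=\overline{\chi(\alpha_\iota)}$. (When $p$ is odd the factors $\chi(-1)=\omega_\chi^{\pm N_\iota/2}$ that appear, coming from $1-\alpha_\iota^{-u}=-\alpha_\iota^{-u}(1-\alpha_\iota^{u})$ together with $-1=\alpha_\iota^{(q_\iota-1)/2}$, shift the effective exponent by $N_\iota/2$; since the limit depends on $\Sigma'$ only through $\Sigma'-\floor{\Sigma'}$ and $x\mapsto\tfrac13+x^2+(x-1)^2$ is symmetric about $x=\tfrac12$, this is precisely the replacement of $\Sigma$ by $\Sigma'=\Sigma+\tfrac12$.)

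Multiplying the two character expansions and summing over $u$ gives a double sum over characters $\chi,\chi'$. On the diagonal $\chi'=\chi$ the factor $\chi(1-\alpha_\iota^{u})\,\overline{\chi(1-\alpha_\iota^{-u})}$ collapses to a pure root of unity in $u$ (because $1-\alpha_\iota^{-u}$ is a simple twist of $1-\alpha_\iota^{u}$), so the $u$-sum becomes an elementary geometric sum; carrying this out and then summing over $\chi$ turns the diagonal into a classical cotangent-type sum $\sum_{\zeta^{N_\iota}=1,\,\zeta\neq1}\zeta^{M_\iota}/(\zeta-1)^{2}$, a quadratic polynomial in $M_\iota$, with $M_\iota\equiv r_\iota+s_\iota$ (adjusted by $N_\iota/2$ when $p$ is odd); this is the main term, and together with the $u=0$ contribution it equals $N_\iota^{2}$ times a quadratic in $M_\iota/N_\iota$. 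The off-diagonal terms $\chi'\neq\chi$ carry Jacobi sums $\sum_{t\in\F_{q_\iota}}\chi_1(t)\chi_2(1-t)$ of size $\sqrt{q_\iota}$ (Weil), which, after handling the interval weights by partial summation in $u$, makes their total $o(N_\iota^{2})$. Letting $q_\iota\to\infty$ with $(r_\iota+s_\iota)/N_\iota\to\Sigma$, the quadratic in $M_\iota/N_\iota$ converges to its value at the fractional part of $\Sigma'$; dividing by $N_\iota^{2}$ yields the asserted limit, and the range $[5/6,4/3]$ with minimum at half-odd-integer $\Sigma'$ is the elementary minimization of $x\mapsto\tfrac13+x^2+(x-1)^2$ on $[0,1]$.

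The main obstacle I anticipate is the off-diagonal estimate made uniform over the family: one has to sum square-root-cancellation bounds over all ordered pairs of nontrivial characters and all $u$, losing no more than $o(N_\iota^{2})$, which requires treating the characters close to the trivial one separately (there the interval weights $(\omega_\chi^{u}-1)/(\omega_\chi-1)$ are large and the $u$-sum barely decays), while simultaneously keeping the dependence on $r_\iota+s_\iota$ intact so that the surviving main term has exactly the claimed $\Sigma$-dependence.
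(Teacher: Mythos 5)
Your outline is correct in its main mechanism, and it is a genuinely different route from the paper's. The paper obtains Theorem \ref{Deidre} as the special case $d=-1$, $\Lambda=1$ of Theorem \ref{Victor}: it expands $\normtt{f g}=\sum_{t+u=v+w}f_tg_u\conj{f_vg_w}$ in a \emph{four-fold} multiplicative-character (Gauss-sum) expansion, averages over an auxiliary character to reduce to the quantity $H(\kappa,\lambda,\mu,\nu)$ of Lemma \ref{Edward}, reads off the main term as lattice-point counts (the sets $A,B,C,D$, expressed through $\Omega$), and disposes of the error with the logarithmic-loss bound of Lemma \ref{Oliver}. You instead factor the same integral as $\sum_u C_{f,f}(u)\conj{C_{g,g}(u)}$, exploit the two-level periodic autocorrelation of m-sequences to write each aperiodic autocorrelation as $-1$ minus a wrap-around partial exponential sum, and then expand only \emph{two} characters; the diagonal pairing (where $\conj{\chi\chi'}(1-\alpha^u)$ trivializes because $1-\alpha^{-u}=-\alpha^{-u}(1-\alpha^u)$) is exactly what couples the shifts into $r_\iota+s_\iota$ and produces the $\chi(-1)$ factor behind $\Sigma'$, and it is evaluated by classical root-of-unity sums that are quadratic in $r_\iota+s_\iota$. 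This is a H\o holdt--Jensen-style computation: more elementary and self-contained for the main term, but tailored to the reversing case (the identity only couples $r+s$ because $g$ is built from $\alpha^{-1}$), whereas the paper's machinery handles all decimations and all fractional lengths uniformly.

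The one step that would not go through as written is the off-diagonal bound by ``partial summation in $u$.'' The total variation over $u$ of the product of interval weights $\bigl((\omega_1^{N-u}-1)/(\omega_1-1)\bigr)\bigl((\conj{\omega_2}^{\,N-u}-1)/(\conj{\omega_2}-1)\bigr)$ is of order $N$ (and can be of order $N^2$ when one character is close to trivial), so pairing it with incomplete Jacobi-type partial sums of size $O(\sqrt{q}\log q)$ and summing over the roughly $N^2$ character pairs gives a bound of order $N^{5/2+o(1)}$, which swamps the $N^2$ main term; treating near-trivial characters separately does not remove this loss. The repair stays inside your framework: do not use partial summation at all, but expand the numerators $(\omega^{N-u}-1)$ so that every off-diagonal $u$-sum is a \emph{complete} sum $\sum_{x\neq 0,1}\chi''(1-x)\nu(x)$, i.e.\ a Jacobi sum of modulus at most $\sqrt{q}$ (or $O(1)$ in degenerate cases), while the denominators contribute $\sum_{\omega\neq 1}|1-\omega|^{-1}=O(N\log N)$ per variable; the whole off-diagonal is then $O(q^{3/2}\log^2 q)=o(N^2)$, with no special analysis near the trivial character. (Alternatively, one can import the paper's Lemma \ref{Oliver}.) With that replacement, your diagonal evaluation, the $N/2$ exponent shift in odd characteristic, and the elementary minimization of $x\mapsto\tfrac13+x^2+(x-1)^2$ complete the proof as you describe.
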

This theorem shows that we can achieve an asymptotic crosscorrelation merit factor of $6/5$ with m-sequence pairs where one sequence in each pair is a suitably shifted reverse of the other.
This gives families of sequence pairs where both the autocorrelation of the individual sequences and the crosscorrelation of the pairs are asymptotically better than randomly sequences.
Since the asymptotic autocorrelation merit factor of m-sequences is $3$ (see Section \ref{Joseph}), Theorem \ref{Deidre} gives us families $\{(f_\iota,g_\iota)\}_{\iota \in I}$ of sequence pairs where
\[
\CDF(f_\iota,g_\iota)+\sqrt{\DF(f_\iota) \DF(g_\iota)} \to \frac{7}{6}
\]
in the limit as the length of the sequences tends to infinity.
This comes quite close to the lower bound of Pursley-Sarwate in \eqref{Gary}.

Recall Sarwate's result (see \eqref{Sophia} above) showing that if we average over all cyclic shifts, the mean crosscorrelation demerit factor is close to $1$ (for long sequences).
Theorem \ref{Thomas} suggests that for nontrivial, nonreversing decimations, the demerit factors are likely to be close to $1$ for all shifts, while Theorem \ref{Deidre} suggests that for reversing decimations, this average comes about from a balance of sequences with demerit factors significantly lower than $1$ and sequences with demerit factors significantly higher than $1$.  (Indeed the average value of the formula for $\CDF$ in Theorem \ref{Deidre} is $1$ as $\Sigma^\prime$ runs through $[0,1]$.)

We illustrate the dependence on shift in the crosscorrelation demerit factor of pairs of binary m-sequences in Figure \ref{Lester}.
In order to discuss this figure, we call the quantity $(r_\iota+s_\iota)/(q_\iota-1)$ from Theorem \ref{Deidre} the {\it fractional sum of shifts}, and its limit $\Sigma$ is the {\it limiting fractional sum of shifts}.
Note that Theorem \ref{Deidre} is only sensitive to the value of $\Sigma$ modulo $1$, and so we often report it modulo $1$.

In the top graph in Figure \ref{Lester}, the data points correspond to crosscorrelation demerit factors for $511$ pairs of binary m-sequences of length $511$, where in each pair the first sequence is the Galois sequence associated to a root of $x^9+x^4+1$ over $\F_2$ and the second sequence ranges over the m-sequences obtained by decimating the first sequence by $3$ (a nontrivial, nonreversing decimation) and applying all possible shifts.  The crosscorrelation demerit factor is then shown as a function of the fractional sum of shifts.  The line in the top graph of Figure \ref{Lester} indicates the value of the asymptotic crosscorrelation demerit factor from Theorem \ref{Thomas}: a value of $1$, independent of shifting.  We see that the data agree well with the asymptotic values.

In the bottom graph of Figure \ref{Lester}, the data points show the crosscorrelation demerit factors as a function of fractional sum of shifts for a similar collection of m-sequence pairs, with the only difference being that the second sequence in each pair is obtained using the reversing decimation of $-1$ rather than the decimation $3$.   The curve in the bottom graph of Figure \ref{Lester} indicates the value of the asymptotic crosscorrelation demerit factor from Theorem \ref{Deidre}, this time showing dependence on shifting.  Again, the data agree well with the asymptotic values.

We note that the skewed distribution of crosscorrelation demerit factors in Figure \ref{Bartholomew} is an expected consequence of the parabolic shape of the curve in the bottom graph of Figure \ref{Lester}: there is a greater density of low demerit factors since the slope of the curve in Figure \ref{Lester} is gentle when the demerit factor is low and steep when the demerit factor is high.
\begin{center}
\begin{figure}
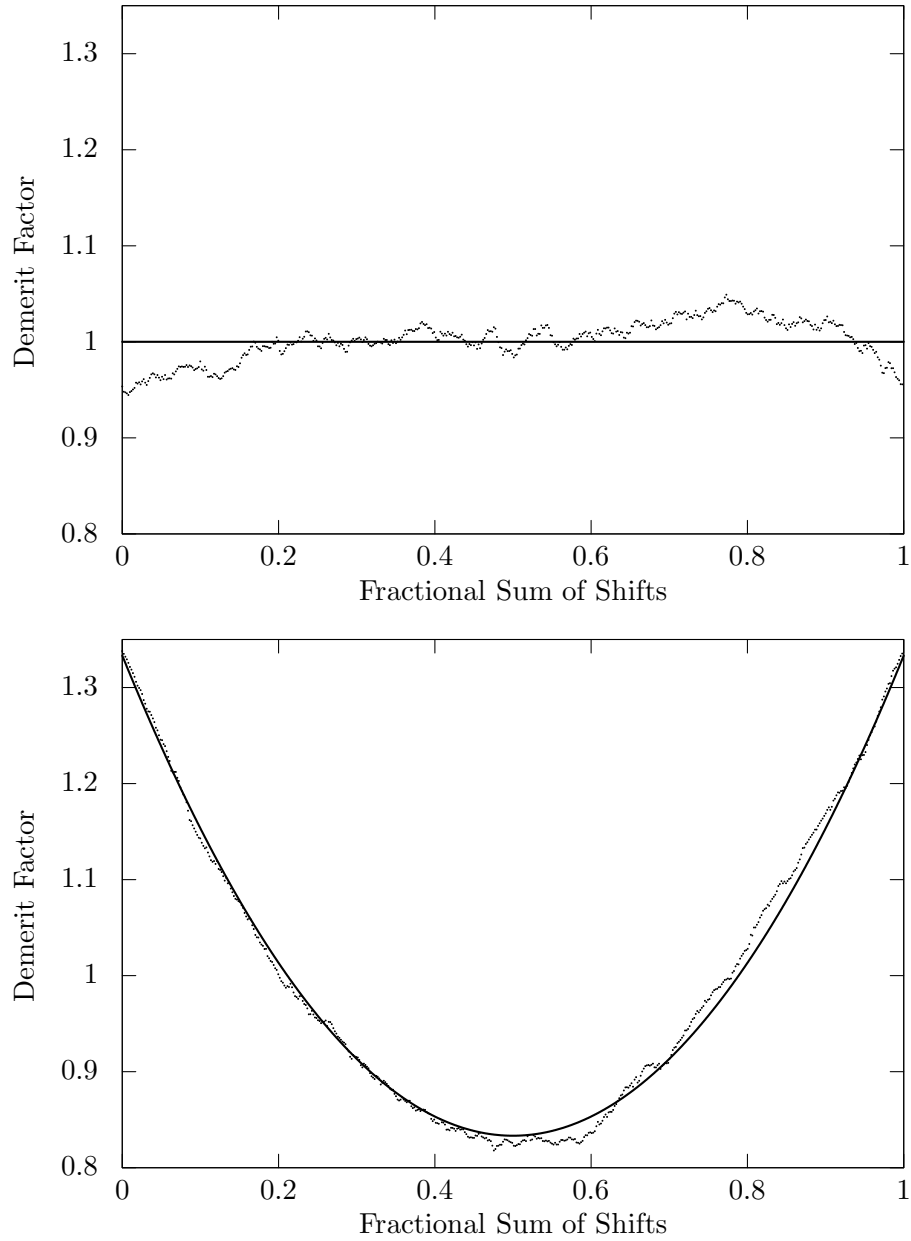

\begin{center}
\caption{Crosscorrelation demerit factor for binary m-sequence pairs as a function of fractional sum of shifts with nontrivial, nonreversing relative decimation (top) and reversing relative decimation (bottom).  Lines and curves are asymptotic values, points are data.}\label{Lester}

\end{center}
\end{figure}
\end{center}
\FloatBarrier

Figure \ref{Laura} repeats the illustration for ternary m-sequences, now of length $728$, with the first sequence in each pair always being the Galois sequence associated to a root of $x^6+x-1$ over $\F_3$ and the second sequence running through all shifts of a particular decimation of the first, with the nontrivial, nonreversing decimation of $5$ in the top graph and the reversing decimation $-1$ in the bottom graph.
Figure \ref{Laura} illustrates the same behavior as Figure \ref{Lester}, except that (in the case of reversing relative decimation) the dependence of crosscorrelation demerit factor on fractional shift is itself translated by $1/2$ (modulo $1$), just as predicted for odd characteristic by Theorem \ref{Deidre}.
This shift is clearly understood in the derivation of Theorem \ref{Deidre} as a consequence of a fundamental character sum that underlies the calculation (worked out in Lemma \ref{Edward}).
\begin{center}
\begin{figure}
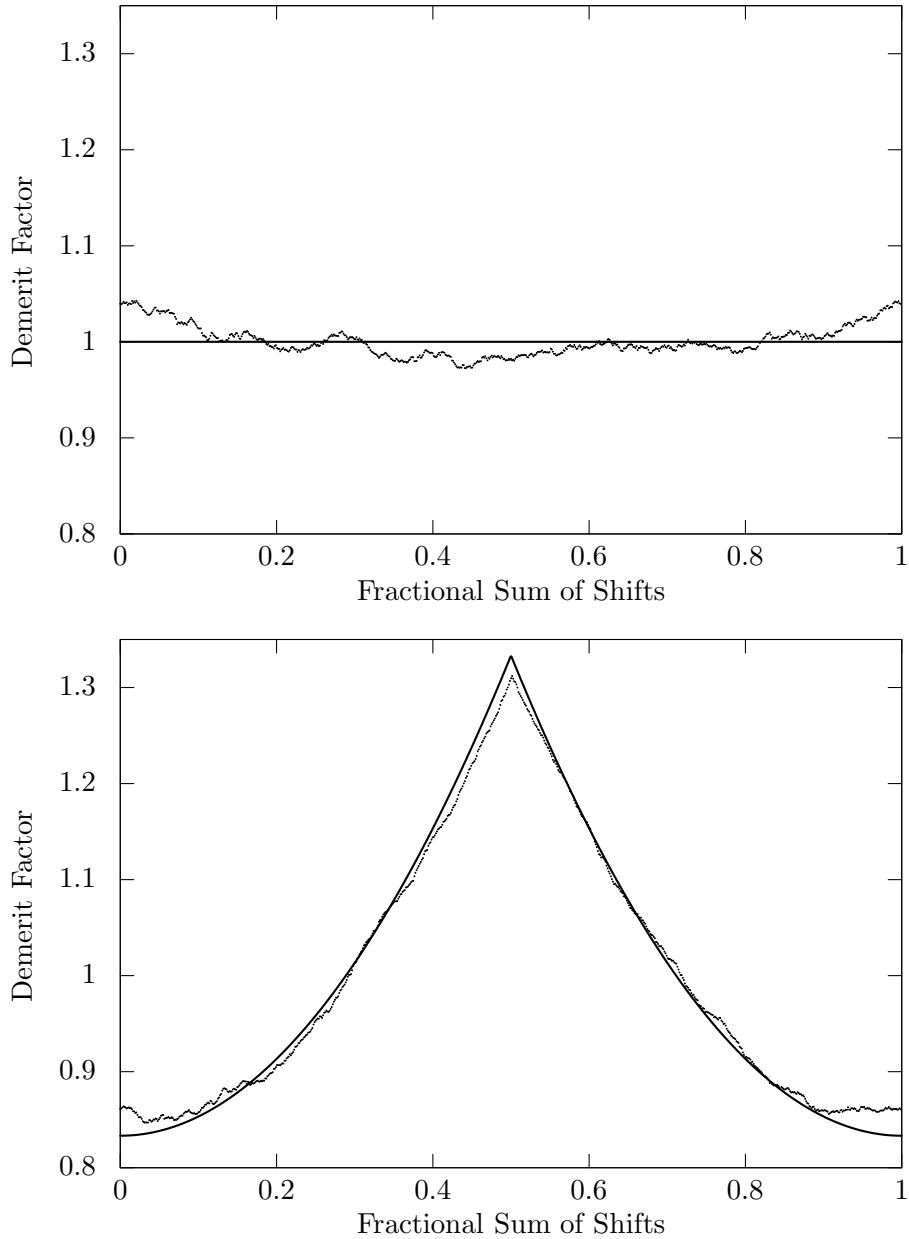

\begin{center}
\caption{Crosscorrelation demerit factor for ternary m-sequence pairs as a function of fractional sum of shifts with nontrivial, nonreversing relative decimation (top) and reversing relative decimation (bottom).  Lines and curves are asymptotic values, points are data.}\label{Laura}

\end{center}
\end{figure}
\end{center}
\FloatBarrier

\section{Additive Character Sequences}\label{Gilbert}

It was shown in \cite[Theorem 1.3]{Katz-2013} and \cite[Theorem 2.2]{Jedwab-Katz-Schmidt-2013-Advances} that one can modify m-sequences by a process known as appending (described in Section \ref{Albert} below) to achieve an autocorrelation merit factor as large as $3.342065\ldots$, the largest root of $7x^3-33 x^2+33 -3$.
These modified m-sequences are known as {\it additive character sequences}, and we now define them and then look at their crosscorrelation properties.

\subsection{Definition of an Additive Character Sequence}\label{Albert}

Additive character sequences are derived from m-sequences by modifying their length.
On the one hand, one can {\it truncate} an m-sequence of length $q-1=p^n-1$ by only retaining the first $\ell$ terms where $\ell < q-1$.
Or one can {\it append} the same m-sequence by periodic extension, that is, we increase the number of terms, and each new term (say term $k$) is the same as term $k-(q-1)$.
A true m-sequence where one retains exactly $q-1$ terms is said to be an {\it additive character sequence of natural length}.
More generally, if $\alpha$ is a primitive element of some finite field $F$ of characteristic $p$ and order $q$, and if $s, \ell \in \Z$ with $\ell>0$, then the {\it $p$-ary additive character sequence associated to $\alpha$ with shift $s$ and length $\ell$} is
\[
(\epsilon(\alpha^{s}),\ldots,\epsilon(\alpha^{s+\ell-1})),
\]
where $\epsilon$ is the canonical additive character of $F$.
So the $p$-ary $m$-sequence associated to $\alpha$ with shift $s$ is just the additive character sequence associated to $\alpha$ with shift $s$ and length $\card{\Fu}=q-1$.
The {\it fractional length} of an additive character sequence is the ratio of its length to the length of an m-sequence associated to the same element.
So for our additive character sequence associated to the primitive element $\alpha$ of $F$ with shift $s$ and length $\ell$, the fractional length is $\ell/\card{\Fu}$, and an m-sequence is precisely an additive character sequence of fractional length $1$.

\subsection{Known Autocorrelation Properties}

Recall from Section \ref{Joseph} that it was shown in \cite[Section 5]{Jensen-Hoholdt}, \cite[Theorem 2.2]{Jensen-Jensen-Hoholdt}, and \cite[Theorem 1.3]{Katz-2013} that m-sequences have an asymptotic autocorrelation merit factor of $3$.
Appending can be used to raise the autocorrelation merit factor of m-sequences to $3.342065\ldots$, the largest root of $7 x^3-33 x^2+33 x-3$ (see \cite[Theorem 1.3]{Katz-2013}, \cite[Theorem 2.2]{Jedwab-Katz-Schmidt-2013-Advances}).

\subsection{New Results: Asymptotic Crosscorrelation}

Our theory enables us to compute the asymptotic crosscorrelation demerit factors for pairs of additive character sequences, just as we computed them for m-sequences in Theorems \ref{Thomas} and \ref{Deidre} above.
The asymptotic crosscorrelation demerit factor depends both on fractional length and shifting.
To express our result, we introduce the function
\begin{equation}\label{Orlando}
\Omega(x,y)=\sum_{n \in \Z} \max(0,1-|n x- y|)^2,
\end{equation}
which we define on $\{(x,y) \in \R^2: x\not=0\}$.
Although $\Omega(x,y)$ appears to involve an infinite sum, only finitely many terms are nonzero for any given point $(x,y)$ in its domain, and $\Omega$ is continuous on its whole domain.

The following is our main result on asymptotic crosscorrelation for additive character sequences, proved in Section \ref{Simon}.
\begin{theorem}\label{Victor}
Let $p$ be a prime and let $d$ be an integer.
Let $\{(f_\iota,g_\iota)\}_{\iota \in I}$ be an infinite family of pairs of sequences, where for each $\iota \in I$, the sequence $f_\iota$ is the $p$-ary additive character sequence associated to primitive element $\alpha_\iota$ of field $F_\iota$ with shift $r_\iota$ and length $\ell_\iota$, while $g_\iota$ is the $p$-ary additive character sequence associated to primitive element $\alpha_\iota^d$ of field $\Fi$ with shift $s_\iota$ and length $\ell_\iota$.
Let $q_\iota=\card{\Fi}$ for each $\iota \in I$, and note that the primitivity of $\alpha_\iota^d$ means that $\gcd(d,q_\iota-1)=1$.
Suppose $\{q_\iota\}_{\iota \in I}$ is unbounded and that there is a positive real number $\Lambda$ such that in the limit as $q_\iota \to \infty$, we have $\ell_\iota/(q_\iota-1) \to \Lambda$.
\begin{enumerate}[(i).]
\item\label{Alan} If $d$ is a power of $p$ (including $d=p^0=1$), and if $(r_\iota-s_\iota)/(q_\iota-1)$ tends to a real number $\Delta$ as $q_\iota \to \infty$, then
\[
\CDF(f_\iota,g_\iota) \to -\frac{2}{3} \Lambda + \Omega\left(\frac{1}{\Lambda},0\right) + \Omega\left(\frac{1}{\Lambda},\frac{\Delta}{\Lambda}\right)
\]
as $q_\iota \to \infty$.
\item\label{Reginald} If $-d$ is a power of $p$ (including $d=-p^0=-1$), and if $(r_\iota+s_\iota)/(q_\iota-1)$ tends to a real number $\Sigma$ as $q_\iota \to \infty$, then
\[
\CDF(f_\iota,g_\iota) \to -\frac{2}{3} \Lambda + \Omega\left(\frac{1}{\Lambda},0\right) + \Omega\left(\frac{1}{\Lambda},1+\frac{\Sigma^\prime}{\Lambda}\right)
\]
as $q_\iota \to \infty$, where $\Sigma'=\Sigma$ if $p=2$ and $\Sigma'=\Sigma+1/2$ if $p$ is odd.
\item\label{Ophelia} If $|d|$ is not a power of $p$, then
\[
\CDF(f_\iota,g_\iota) \to \Omega\left(\frac{1}{\Lambda},0\right)
\]
as $q_\iota \to \infty$.
\end{enumerate}
\end{theorem}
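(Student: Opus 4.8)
The plan is to reduce $\CDF(f_\iota,g_\iota)$ to a sum of aperiodic autocorrelations and then isolate the main terms by expanding those autocorrelations into Gauss sums. Since the sequences are unimodular, $|C_{f_\iota,f_\iota}(0)| = |C_{g_\iota,g_\iota}(0)| = \ell_\iota$, so $\CDF(f_\iota,g_\iota) = \ell_\iota^{-2}\sum_{s \in \Z}|C_{f_\iota,g_\iota}(s)|^2$, and the elementary identity $\sum_{s \in \Z}|C_{f,g}(s)|^2 = \sum_{t \in \Z}C_{f,f}(t)\overline{C_{g,g}(t)}$, valid for any two equal-length sequences, reduces the problem to the autocorrelations of $f_\iota$ and $g_\iota$ separately. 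Using $\Tr(x^p)=\Tr(x)$, so that $\epsilon(x^p)=\epsilon(x)$: in case (i) the sequence $g_\iota$ is simply the additive character sequence of the same length and shift $s_\iota$ associated to $\alpha_\iota$ itself, and in case (ii) it is the one associated to $\alpha_\iota^{-1}$; thus in those cases we may and do take $d=1$ and $d=-1$ respectively.

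For an additive character sequence $h$ of length $\ell$ and shift $w$ associated to a primitive $\gamma$ of a field $F$ with $\card{\Fu}=q-1$, one has $C_{h,h}(t) = \sum_x u_t(x)\,\epsilon\bigl((1-\gamma^t)x\bigr)$, where $u_t(x)$ counts the $j$ in the overlap index set $J_t$ (a block of $\ell-|t|$ consecutive integers) with $\gamma^{w+j}=x$. Expanding $u_t$ over $\widehat{\Fu}$ and introducing the Gauss sum $G(\chi)=\sum_x\chi(x)\epsilon(x)$ gives $C_{h,h}(t) = -\tfrac{\ell-|t|}{q-1}+E_h(t)$ with $E_h(t)=\sum_{\chi\neq\chi_0}\hat u_t(\chi)\,\overline{\chi(1-\gamma^t)}\,G(\chi)$, and a standard bound on incomplete geometric character sums gives $|E_h(t)| = O(\sqrt q\,\log q)$ whenever $1-\gamma^t\neq 0$, i.e.\ whenever $(q-1)\nmid t$; at $t=n(q-1)$ instead $C_{h,h}(t)=\ell-|n|(q-1)$ exactly. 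Both $C_{f_\iota,f_\iota}$ and $C_{g_\iota,g_\iota}$ resonate at precisely the multiples of $q_\iota-1$ (their fields have the same order and $\gcd(d,q_\iota-1)=1$), so those shifts contribute $\sum_n\bigl(\ell_\iota-|n|(q_\iota-1)\bigr)^2 = \Omega(1/\Lambda,0)\,\ell_\iota^2+o(\ell_\iota^2)$ to $\sum_t C_{f_\iota,f_\iota}(t)\overline{C_{g_\iota,g_\iota}(t)}$; this is the universal $\Omega(1/\Lambda,0)$ appearing in all three formulas. After discarding the $\chi_0$-contributions and the cross terms $\tfrac{\ell-|t|}{q-1}E(t)$ (negligible by the bound above together with $\sum_t(\ell-|t|)^2=O(\ell^3)$), everything else comes from $\sum_{t\not\equiv 0}E_{f_\iota}(t)\,\overline{E_{g_\iota}(t)}$.

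Expanding this over pairs $(\chi,\psi)$ of nontrivial characters and writing characters additively ($\chi_a(\gamma^j)=\zeta^{aj}$, $\zeta=e^{2\pi i/(q-1)}$), the diagonal $\chi=\psi$ — after carrying out the inner sum over $a$ — equals
\[
\frac{q}{q-1}\sum_{t\not\equiv 0}\#\bigl\{(j,k)\in J_t^2 : dk-j\equiv c_t \pmod{q-1}\bigr\}\;-\;\frac{q}{(q-1)^2}\sum_{t\not\equiv 0}(\ell-|t|)^2,
\]
where $c_t$ depends on $t$ only through the discrete logarithm of $(1-\alpha_\iota^{dt})/(1-\alpha_\iota^{t})$; the second sum always tends to $\tfrac23\Lambda\,\ell_\iota^2$. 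When $|d|$ is not a power of $p$, the factor $(1-z^d)/(1-z)=1+z+\cdots+z^{d-1}$ is a nonconstant polynomial that, up to a monomial, is not a perfect power of the order of any fixed character, and a Weil estimate for $\sum_z\chi_a\bigl(\tfrac{1-z^d}{1-z}\bigr)\chi_\beta(z)$ then forces the first sum also to equal $\tfrac23\Lambda\,\ell_\iota^2+o(\ell_\iota^2)$; the diagonal — and, by a similar Weil bound, the off-diagonal — are therefore $o(\ell_\iota^2)$, giving $\CDF(f_\iota,g_\iota)\to\Omega(1/\Lambda,0)$, which is case (iii). When $d=1$ the factor is identically $1$ and when $d=-1$ it is $-\alpha_\iota^{-t}$; in both cases $c_t$ becomes affine in $t$, the sum-of-shifts (for $d=-1$) or difference-of-shifts (for $d=1$) constraint on $J_t^2$ has a $t$-independent offset from the block's center, and the first sum becomes a superposition of triangle functions whose limit is $\Omega(1/\Lambda,\Delta/\Lambda)$ for $d=1$ and $\Omega(1/\Lambda,1+\Sigma'/\Lambda)$ for $d=-1$ — the minus sign (reversal) producing the ``$1+$'' and, when $p$ is odd, the value $\chi_a(-1)=(-1)^a=\zeta^{\pm a(q-1)/2}$ producing the shift $\Sigma'=\Sigma+\tfrac12$; this last point is exactly the fundamental character sum evaluated in Lemma~\ref{Edward}. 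The off-diagonal $\chi\neq\psi$ terms are controlled by Jacobi-sum bounds (the inner $t$-sum turns into $J(\chi_m,\chi_{b-a})$, of modulus $\le\sqrt q$ for $a\neq b$) and, with the residual terms, are $o(\ell_\iota^2)$. Assembling the pieces yields the three formulas; throughout one uses Riemann-sum convergence in $t/(q_\iota-1)$, continuity of $\Omega$, and — to absorb the wraparound in $J_t$ when $\Lambda>1$ — the elementary identity $\Omega(1/\Lambda,0)-\tfrac23\Lambda = \tfrac{2}{\Lambda^2}\int_0^\Lambda(u-\floor u)(1-u+\floor u)\,du$.

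The main obstacle is the uniform-in-$a$ control of the inner $t$-sums $\sum_t\chi_a(w_t)\zeta^{\beta t}$ and of the off-diagonal Weil/Jacobi sums: since these must be summed against the slowly decaying Fej\'er-type coefficients produced by the interval $J_t$ (whose $\ell^1$-norm over $a$ is only $O((q-1)\log^2 q)$), one must show the accumulated error stays $o(\ell_\iota^2)$, and — the delicate point separating cases (i), (ii) from (iii) — that when $|d|$ is not a power of $p$ the would-be main term from the first diagonal sum exactly cancels the $-\tfrac23\Lambda$ term, so that only the periodicity resonance $\Omega(1/\Lambda,0)$ survives. Pinning down the precise affine form of $c_t$, hence the argument of the second $\Omega$ including the $\tfrac12$ in odd characteristic, is the other step requiring care.
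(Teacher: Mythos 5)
Your skeleton is sound and genuinely different from the paper's. The identity $\sum_{s}|C_{f,g}(s)|^2=\sum_{t}C_{f,f}(t)\overline{C_{g,g}(t)}$ is correct; the Frobenius relation $\epsilon(x^p)=\epsilon(x)$ legitimately reduces cases (i) and (ii) to $d=1$ and $d=-1$; the resonances at $t\equiv 0\pmod{q-1}$ do produce the universal $\Omega(1/\Lambda,0)$; the splitting $C_{h,h}(t)=-(\ell-|t|)/(q-1)+E_h(t)$ is right; and since for $d=\pm1$ the ratio $(1-\alpha^{dt})/(1-\alpha^{t})$ equals $1$ or $-\alpha^{-t}$, your diagonal reduces by exact orthogonality to lattice counts that coincide (up to $O(\ell)$ edge terms) with the sets the paper calls $C$ and $D$, giving the correct $\Omega(1/\Lambda,\Delta/\Lambda)$ and $\Omega(1/\Lambda,1+\Sigma'/\Lambda)$ terms and the correct origin of $\Sigma'$ via the discrete logarithm of $-1$. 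The paper instead expands all four sequence entries into Gauss sums at once, introduces an averaging twist $\xi$, and isolates the main terms through Lemma \ref{Edward}, with the error mass controlled by Lemma \ref{Oliver}.

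The genuine gap is the error analysis, which you assert in the body and then yourself flag as the main unresolved obstacle. In case (iii), and for the off-diagonal pairs $\chi\neq\psi$ in every case, one must bound a sum over all $q-2$ nontrivial characters of terms of size $|G(\chi)|^2$ times an incomplete $t$-sum weighted by the interval coefficients; applying Weil or Jacobi bounds character by character and summing trivially over the roughly $q$ characters overshoots the target $o(\ell^2(q-1)^2/q)=o(q^3)$ by about $\sqrt{q}$ (even Polya--Vinogradov on the $j,k$ sums gives $O(q^2\log^2q)$ per character, hence $O(q^3\log^2q)$ in total). An additional structural saving is indispensable: for instance, after completing the $j$- and $k$-sums the complete sums become Kronecker deltas pinning the additive frequencies to the character index $a$ (respectively $-ad$), and the resulting frequency mass $\sum_{a}\gamma(a)\gamma(-ad)$, where $\gamma$ is the convolution of the absolute Fourier coefficients of the interval indicator, is $O(\log^2q)$ by Young/Cauchy--Schwarz, yielding a total error $O(|d|\,q^{3/2}\log^{O(1)}q)=o(\ell^2)$; the paper achieves the analogous saving differently, via the $\xi$-average in Lemma \ref{Edward} (error $O(|d|q^{3/2}/(q-1)^2)$ per character quadruple) combined with the $\ell^1$-mass bound of Lemma \ref{Oliver}. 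Without some such argument the proof does not close, and note also that the wraparound when $\Lambda>1$ must be handled inside this completion (as in the $\max(1,\ell/q)^3$ factor of Lemma \ref{Oliver}), not only in the limiting identity for $\Omega$ that you quote.
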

The global minima of these three limiting formulae for asymptotic crosscorrelation demerit factor are found in Lemmata \ref{Jake}--\ref{Leonard}, which give the values of parameters $\Lambda$, $\Delta$, and $\Sigma^\prime$  where these minima occur.
We then look at the reciprocals of these minima to get the global maximum asymptotic crosscorrelation merit factor:
\begin{itemize}
\item In parts \eqref{Alan} and \eqref{Reginald} we achieve a maximum asymptotic crosscorrelation merit factor of $1.539389\ldots$, the largest root of the polynomial $19 x^3-54 x^2+42 x-6$.
\item In part \eqref{Ophelia} we achieve a maximum asymptotic crosscorrelation merit factor of $1$.
\end{itemize}

Note that Theorems \ref{Thomas} and \ref{Deidre} are just specializations of parts \eqref{Ophelia} and \eqref{Reginald}, respectively, of Theorem \ref{Victor} since m-sequences are additive character sequences of natural length (so $\Lambda=1$).  Furthermore, the known results for autocorrelation of additive character sequences (\cite[Theorem 1.3]{Katz-2013}, \cite[Theorem 2.2]{Jedwab-Katz-Schmidt-2013-Advances}) are specializations of part \eqref{Alan} where we set $f_\iota=g_\iota$ (so then $\Delta=0$).

\subsection{Appending to Minimize Autocorrelation}\label{Francis}

One can increase the autocorrelation merit factor of m-sequences by appending: the asymptotic merit factor is maximized (see \cite[Theorem 1.3]{Katz-2013} \cite[Theorem 2.2]{Jedwab-Katz-Schmidt-2013-Advances})  when one uses families of m-sequences whose fractional length tends to $\Lambdaapp=1.115749\ldots$, the middle root of $x^3-12 x+12$.  In this case, one obtains an asymptotic autocorrelation merit factor equal to $3.342065\ldots$, the largest root of $7 x^3-33 x^2+33 x-3$.
Although this appending raises the autocorrelation merit factor above the asymptotic value of $3$ for m-sequences of natural length, we shall now see that this appending decreases crosscorrelation performance.

For sequence pairs with nontrivial, nonreversing relative decimations, Theorem \ref{Victor}\eqref{Ophelia} shows that appending to limiting fractional length of $\Lambdaapp=1.115749\ldots$ increases the limiting crosscorrelation demerit factor to a value of $1.021524\ldots$, thus reducing the corresponding crosscorrelation merit factor to $0.978928\ldots$, which is worse than the asymptotic value of $1$ that is obtained with natural length.
And indeed a slight worsening of average crosscorrelation performance can be seen by comparing the demerit factors for the m-sequences of natural length $511$ in the top graph of Figure \ref{Lester} with those obtained by appending them to length $570\approx 511 \cdot \Lambdaapp$ in the top graph of Figure \ref{Linus}.  Note that in both cases, no significant dependence on shift is seen, and indeed the asymptotic value has no shift-dependence, as seen in Theorem \ref{Victor}\eqref{Ophelia}.

Now consider sequence pairs with reversing relative decimation, where we append to limiting fractional length $\Lambdaapp=1.115749\ldots$, shift our sequences to minimize the asymptotic crosscorrelation demerit factor (letting $\Sigma'=-\Lambdaapp+1/2$ achieves this: see Lemma \ref{Rachel} for details).
In this case, Theorem \ref{Victor}\eqref{Reginald} shows that the limiting crosscorrelation demerit factor becomes $0.886814\ldots$, corresponding to asymptotic crosscorrelation merit factor of $1.127631\ldots$.
This is again worse than the asymptotic value of $6/5$ obtained for pairs of $m$-sequences of natural length with reversing decimations.
If we want to compare the performance of these sequence pairs with the lower bound of Pursley-Sarwate in \eqref{Gary}, we have
\[
\CDF(f_\iota,g_\iota)+\sqrt{\DF(f_\iota) \DF(g_\iota)} \to 1.1860303\ldots
\]
in the limit as the length of the sequences tends to infinity.

If one wants to see performance over a range of $\Sigma$ values, then one can compare the performance of m-sequences of natural length $511$ in the bottom graph of Figure \ref{Lester} with those obtained by appending them to length $570 \approx 511 \cdot \Lambdaapp$ in the bottom graph of Figure \ref{Linus}.  One sees that appending has caused a translation in the dependence on fractional shift and a decrease in the amplitude of the variation: the optimal performance (shifting to minimize demerit factor) is worse, while the worst-case performance is better.
\begin{center}
\begin{figure}
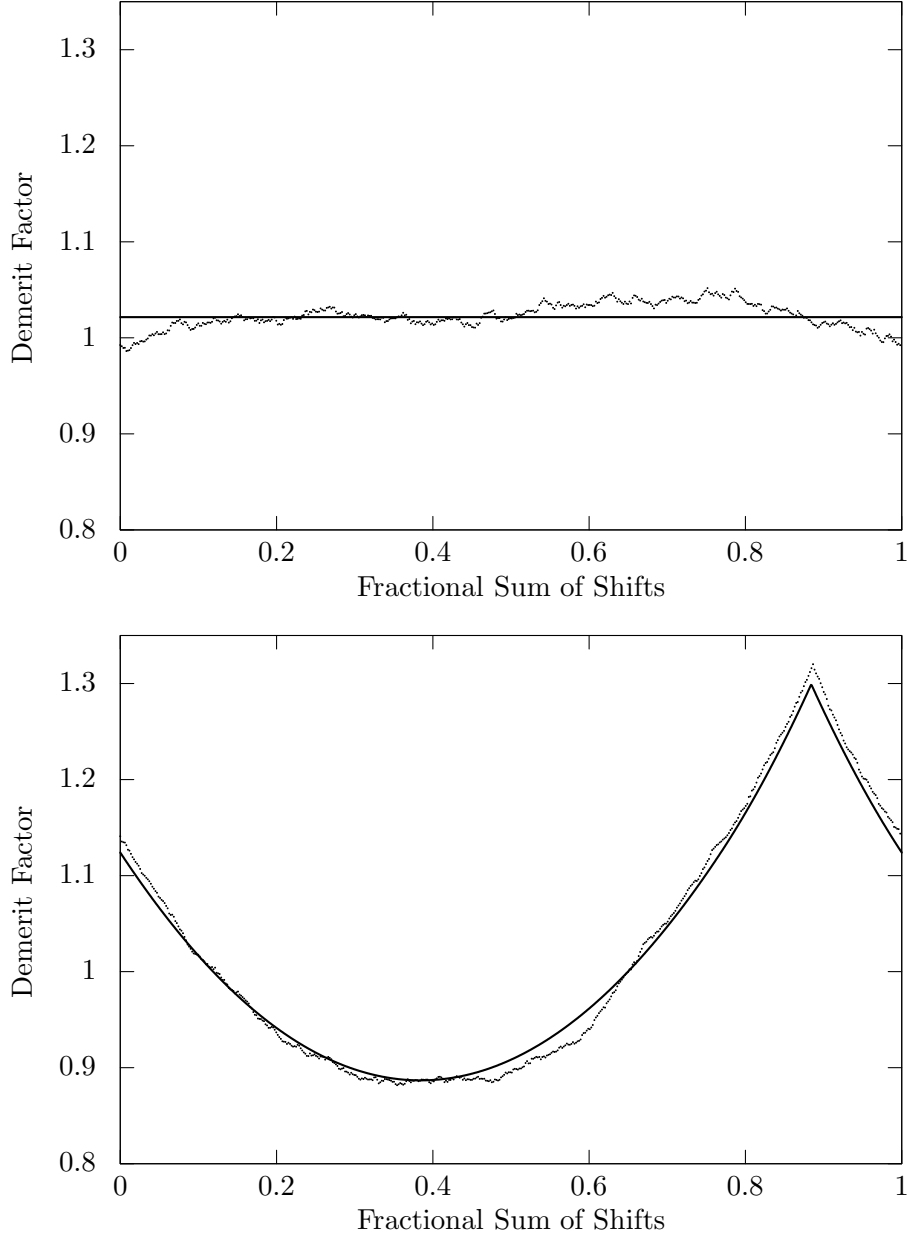

\begin{center}
\caption{Crosscorrelation demerit factor for appended binary m-sequence pairs as a function of fractional sum of shifts with nontrivial, nonreversing relative decimation (top) and reversing relative decimation (bottom).  Lines and curves are asymptotic values from Theorem \ref{Victor}, points are data.}\label{Linus}

\end{center}
\end{figure}
\end{center}
\FloatBarrier

\subsection{Truncating to Minimize Crosscorrelation}\label{Timothy}

Let us also discuss the effect of truncation.
For nontrivial, nonreversing relative decimations, Theorem \ref{Victor}\eqref{Ophelia} shows that the asymptotic crosscorrelation merit factor of truncated sequences is always $1$, which is precisely the same as for m-sequences of natural length.  For such a decimation, compare the performance of m-sequences of natural length $511$ in the top graph of Figure \ref{Lester} with those obtained by truncating them to length $285$ in the top graph of Figure \ref{Lisa}.  On the other hand, autocorrelation performance deteriorates with truncation (use \cite[Theorem 1.3]{Katz-2013} or Theorem \ref{Victor}\eqref{Alan} with $\Delta=0$ to see this), so there is no compelling reason to truncate m-sequence pairs with nontrivial, nonreversing decimations.

With reversing relative decimations, truncation can increase the crosscorrelation merit factor significantly beyond what is achievable with pairs of natural length m-sequences.
In Lemma \ref{Barbara} we show that the crosscorrelation merit factor, which is the reciprocal of the crosscorrelation demerit factor given in Theorem \ref{Victor}\eqref{Reginald}, achieves a global maximum value of $1.539389\ldots$, the largest root of the polynomial $19 x^3-54 x^2+42 x-6$, when we let the fractional length of our sequences tend to $\Lambdatrunc=0.557874\ldots$, the middle root of $2 x^3-6 x+3$, and shift our sequences so that the fractional sum of shifts tends to $1/2-\Lambdatrunc$ as the length of the sequences tends to infinity.\footnote{This is for binary sequences, where the factional sum of shifts $\Sigma$ equals $\Sigma'$ in Theorem \ref{Victor}.  For nonbinary sequences, we need $\Sigma' \to 1/2-\Lambdatrunc$ so that $\Sigma \to -\Lambdatrunc$ or $1-\Lambdatrunc$ (recall that we read these values modulo $1$).  Note also that $\Lambdatrunc=\Lambdaapp/2$ where $\Lambdaapp=1.115749\ldots$ is the limiting fractional length that maximizes asymptotic autocorrelation merit factor (as discussed in Section \ref{Francis}).}
To see this effect, compare the performance m-sequences of natural length $511$ in the bottom graph of Figure \ref{Lester} with those obtained by truncating them to length $285\approx 511\cdot \Lambdatrunc$ in the bottom graph of Figure \ref{Lisa}.
We should also note that this truncation does decrease the asymptotic autocorrelation merit factor of the sequences used to $1.592144\ldots$ (as compared to $3$ for m-sequences of natural length).  So one sees that there is a tradeoff between autocorrelation and crosscorrelation performance, and one can use truncation or appending to strike a balance between autocorrelation and crosscorrelation performance appropriate for one's specific application.
If we want to compare the performance of these sequence pairs with the lower bound of Pursley-Sarwate in \eqref{Gary}, we have
\[
\CDF(f_\iota,g_\iota)+\sqrt{\DF(f_\iota) \DF(g_\iota)} \to 1.277691\ldots
\]
in the limit as the length of the sequences tends to infinity.
\begin{center}
\begin{figure}
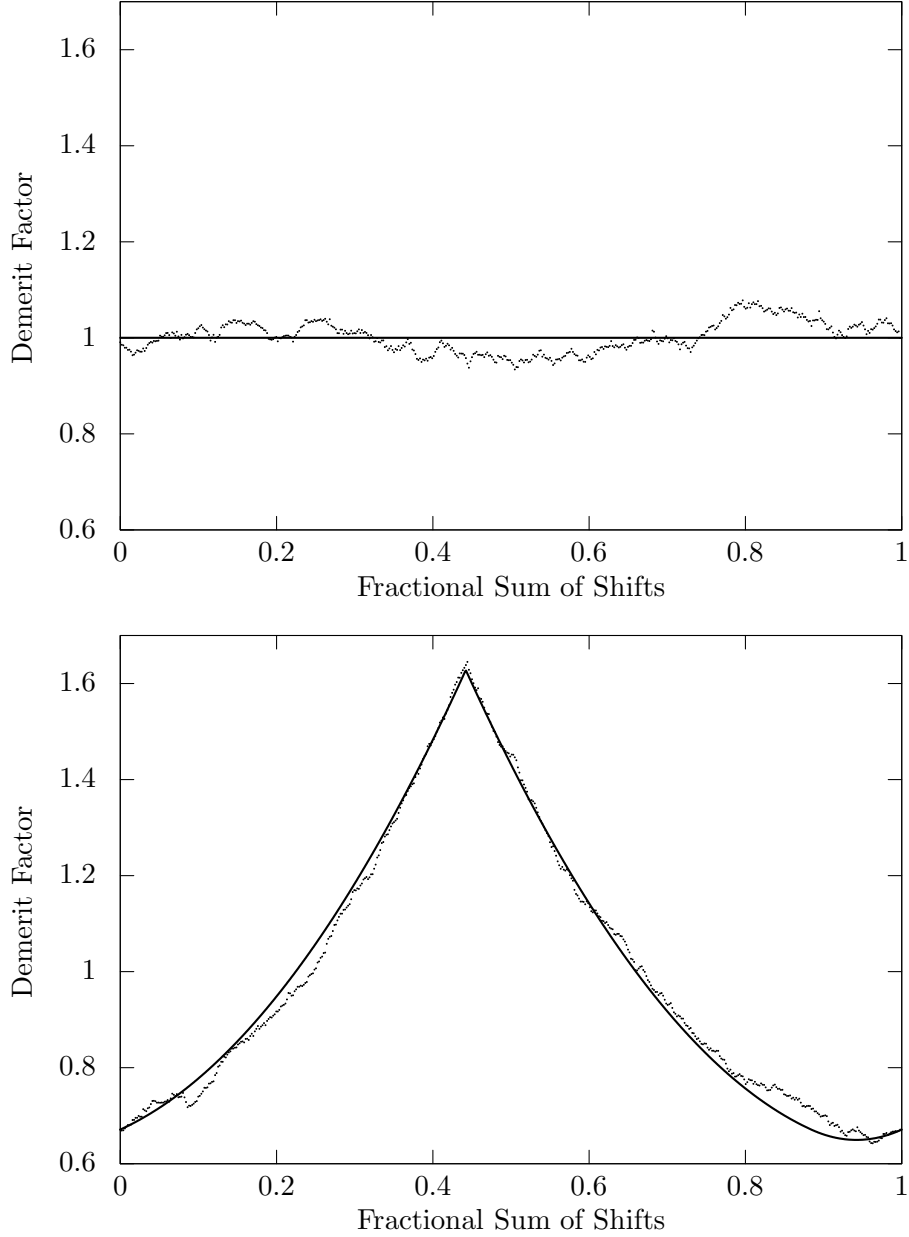

\begin{center}
\caption{Crosscorrelation demerit factor for truncated binary m-sequence pairs as a function of fractional sum of shifts with nontrivial, nonreversing relative decimation (top) and reversing relative decimation (bottom).  Lines and curves are asymptotic values from Theorem \ref{Victor}, points are data.}\label{Lisa}

\end{center}
\end{figure}
\end{center}
\FloatBarrier

\section{Multiplicative Character Sequences}\label{Milton}

\subsection{Definition of a Multiplicative Character Sequence}

Now we discuss a family of sequences that are in some sense dual to the additive character sequences.
We let $p$ be a prime, and let $\chi\colon \Fpu \to \C$ be a nontrivial multiplicative character.
If $\chi$ is of order $m$, then its values are complex $m$th roots of unity.
So binary $(\pm 1)$ sequences are produced when $\chi$ is of order $2$, that is, when $\chi$ is the quadratic character (Legendre symbol).
We extend the domain of $\chi$ to be all of $\Fp$ by setting $\chi(0)=0$, as is traditional for multiplicative characters.
If $s,\ell \in \Z$ with $\ell >0$, then the {\it multiplicative character sequence associated to character $\chi$ with shift $s$ and length $\ell$} is
\begin{equation}\label{Nicholas}
(\chi(s),\chi(s+1),\ldots,\chi(s+\ell-1)).
\end{equation}
We say that the sequence has {\it natural shift} if $s=0$.
If $\ell=p$, then we say that the sequence is of {\it natural length}.
If $\ell < p$, we say that it is {\it truncated}, and if $\ell > p$, we say that it is {\it appended}.
When $\chi$ is the quadratic character (Legendre symbol), we call our multiplicative character sequence a {\it quadratic character sequence}.
If $\chi$ is not the quadratic character, then we call our multiplicative character sequence a {\it nonquadratic character sequence}.
The {\it fractional length} of a multiplicative character sequence associated to character $\chi$ of length $\ell$ is $\ell/p$, where $p$ is the order of the field for which $\chi$ is a character.
So a multiplicative character sequence of natural length is precisely a multiplicative character sequence of fractional length $1$.

In practical applications involving multiplicative character sequences such as \eqref{Nicholas}, one generally replaces those terms $\chi(j)$ where $p\mid j$ (which makes $\chi(j)=0$) with $1$ or another complex root of unity to obtain a sequence where all terms have unit magnitude.
When we modify a sequence to replace all zero terms with complex numbers of unit magnitude, we say that the modified sequence is a {\it unimodularization} of the original sequence.
We shall show (in Lemma \ref{Oswald}) that unimodularization of the above multiplicative character sequences does not change their asymptotic autocorrelation and crosscorrelation merit factors, so that all our asymptotic results work just as well for the original multiplicative character sequences as for their unimodularizations (so we may omit the distinction in this context).

A {\it Legendre sequence} is a unimodularized quadratic character sequence of natural shift and natural length, and a {\it shifted Legendre sequence} is a unimodularized quadratic character sequence of natural length and arbitrary shift.
The excellent autocorrelation properties of shifted Legendre sequences were observed by Turyn and reported by Golay \cite{Golay-1983}, and their asymptotic merit factor was computed by H\o holdt and Jensen \cite{Hoholdt-Jensen}.

We said that our multiplicative character sequences are in some sense duals of additive character sequences.  To see this, recall that one obtains an additive character sequence by selecting a generator $\alpha$ of the multiplicative group $\Fpnu$ of a finite field $\Fpn$; then one lists the elements of $\Fpnu$ in {\it multiplicative order}, that is, as successive powers $\alpha^0,\alpha^1,\ldots,\alpha^{p^n-2}$ of the generator; and then one applies a nontrivial {\it additive character} $\psi$ to each of these powers, thus yielding the m-sequence $(\psi(\alpha^0),\ldots,\psi(\alpha^{p^n-2}))$, which can then be shifted and truncated or appended to get an additive character sequence.  One obtains a multiplicative character sequence by using the generator $1$ of the additive group\footnote{One could use any other nonzero element $c \in \Fp$ as a generator instead of $1$.  This change would scale every term of the sequence by $\chi(c)$, a unimodular complex number, and this would have no effect on the magnitude of autocorrelation or crosscorrelation values.} of the finite field $\Fp$; then listing the elements of $\Fp$ in {\it additive order}, that is, as successive multiples $0,1,\ldots,p-1$ of the generator; and then one applies a nontrivial {\it multiplicative character} $\chi$ to each of these powers, thus yielding the multiplicative character sequence $(\chi(0),\ldots,\chi(p-1))$ of natural shift and length, which can then be shifted and truncated or appended to get a multiplicative character sequence.  The interplay between additive and multiplicative structure of finite fields gives both sequence constructions their desirable properties: their pseudorandomness arises from the fact that the type of character used is opposite to the type of generator used.

\subsection{Known Autocorrelation Properties}

In \cite{Hoholdt-Jensen} H\o holdt and Jensen consider Legendre sequences and shifted Legendre sequences (that is, unimodularized quadratic character sequences of natural length).  They show that the asymptotic autocorrelation merit factor is dependent on how the sequences are shifted, and ranges from $3/2$ (for non-shifted sequences, known as Legendre sequences) to $6$ (for appropriately shifted versions).
Later Borwein and Choi \cite[Theorem 3.2]{Borwein-Choi-2000} showed that the asymptotic autocorrelation merit factor of nonquadratic multiplicative character sequences of natural length is $3$, regardless of how they are shifted.
Later it was shown that appending can be used to raise the autocorrelation merit factor of unimodularized quadratic character sequences to $6.342061\ldots$, the largest root of $29 x^3-249 x^2+417 x -27$ (see \cite[Theorem 1.1]{Jedwab-Katz-Schmidt-2013-Littlewood}, \cite[Theorem 1.5]{Katz-2013}, and \cite[Theorem 2.1]{Jedwab-Katz-Schmidt-2013-Advances}) and can be used to raise the autocorrelation merit factor of nonquadratic multiplicative character sequences to $3.342065\ldots$, the largest root of $7 x^3-33 x+33 x-3$ (see \cite[Theorem 1.4]{Katz-2013}).

\subsection{New Results: Asymptotic Crosscorrelation}

Now we are ready to state our results for crosscorrelation merit factors of pairs of multiplicative character sequences.
Note that if $\chi \colon \Fpu \to \C$ is a multiplicative character, then the conjugate character $\conj{\chi}$ indicates the multiplicative character with $\conj{\chi}(a)=\conj{\chi(a)}=\chi(a)^{-1}$ for all $a \in \Fpu$.
Also recall the definition of the function $\Omega$ from \eqref{Orlando}.
As with autocorrelation, there is a marked difference in behavior between the quadratic character (Legendre symbol) and all other nontrivial multiplicative characters.

The following is our main result on asymptotic crosscorrelation for multiplicative character sequences, proved in Section \ref{Veronica}.
\begin{theorem}\label{Anthony}
Let $\{(f_\iota,g_\iota)\}_{\iota \in I}$ be an infinite family of pairs of sequences, where for each $\iota \in I$, the sequence $f_\iota$ is the multiplicative character sequence associated with character $\phi_\iota$ of prime field $F_\iota$ with shift $r_\iota$ and length $\ell_\iota$, and the sequence $g_\iota$ is the multiplicative character sequence associated with the character $\chi_\iota$ of field $F_\iota$ with shift $s_\iota$ and length $\ell_\iota$.
Let $p_\iota=\card{\Fi}$ for each $\iota \in I$, and suppose that $\{p_\iota\}_{\iota \in I}$ is unbounded and there is a positive real number $\Lambda$ such that in the limit as $p_\iota \to \infty$, we have $\ell_\iota/p_\iota \to \Lambda$.
\begin{enumerate}[(i).]
\item\label{Bernice} If $\phi_\iota=\chi_\iota$ is a nonquadratic character for each $\iota \in I$, and if $(r_\iota-s_\iota)/p_\iota$ tends to a real number $\Delta$ as $p_\iota \to \infty$, then
\[
\CDF(f_\iota,g_\iota) \to -\frac{2}{3} \Lambda + \Omega\left(\frac{1}{\Lambda},0\right) + \Omega\left(\frac{1}{\Lambda},\frac{\Delta}{\Lambda}\right)
\]
as $p_\iota \to \infty$.
\item\label{Nora} If $\phi_\iota=\conj{\chi_\iota}$ with $\phi_\iota$ a nonquadratic character for each $\iota \in I$, and if $(r_\iota+s_\iota)/p_\iota$ tends to a real number $\Sigma$ as $p_\iota \to \infty$, then
\[
\CDF(f_\iota,g_\iota) \to -\frac{2}{3} \Lambda + \Omega\left(\frac{1}{\Lambda},0\right) + \Omega\left(\frac{1}{\Lambda},1+\frac{\Sigma}{\Lambda}\right)
\]
as $p_\iota \to \infty$.
\item\label{Walter} If $\phi_\iota\not\in\{\chi_\iota, \conj{\chi_\iota}\}$ for each $\iota \in I$, then
\[
\CDF(f_\iota,g_\iota) \to \Omega\left(\frac{1}{\Lambda},0\right)
\]
as $p_\iota \to \infty$.
\item\label{Hubert} If $\phi_\iota=\chi_\iota$ is a quadratic character for each $\iota \in I$, if $(r_\iota+s_\iota)/p_\iota$ tends to a real number $\Sigma$ as $p_\iota \to \infty$, and if $(r_\iota-s_\iota)/p_\iota$ tends to a real number $\Delta$ as $p_\iota \to \infty$, then
\[
\CDF(f_\iota,g_\iota) \to -\frac{4}{3} \Lambda + \Omega\left(\frac{1}{\Lambda},0\right) + \Omega\left(\frac{1}{\Lambda},\frac{\Delta}{\Lambda}\right) + \Omega\left(\frac{1}{\Lambda},1+\frac{\Sigma}{\Lambda}\right)
\]
as $p_\iota \to \infty$.
\end{enumerate}
All the conclusions of this theorem remain the same if we replace each multiplicative character sequence with a unimodularization of that sequence.
\end{theorem}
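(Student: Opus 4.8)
The plan follows the Gauss-sum/Parseval route. By Parseval, $\sum_{s\in\Z}|C_{f_\iota,g_\iota}(s)|^2=\frac{1}{2\pi}\int_0^{2\pi}|F_\iota(e^{i\theta})|^2\,|G_\iota(e^{i\theta})|^2\,d\theta$, where $F_\iota(z)=\sum_{j=0}^{\ell_\iota-1}\phi_\iota(r_\iota+j)z^j$ and $G_\iota$ is the analogue with $\chi_\iota$ and $s_\iota$; and the denominator $|C_{f_\iota,f_\iota}(0)|\,|C_{g_\iota,g_\iota}(0)|$ equals $\ell_\iota^2$ after unimodularization and $\ell_\iota^2(1+O(1/p_\iota))$ for the raw sequences (a length-$\ell_\iota$ window of residues has only $\floor{\ell_\iota/p_\iota}+O(1)$ multiples of $p_\iota$), so in either case it is $\sim\Lambda^2 p_\iota^2$. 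Hence it suffices to show $\ell_\iota^{-2}\sum_s|C_{f_\iota,g_\iota}(s)|^2$ tends to the asserted value. Expanding each character value by its Gauss-sum Fourier representation turns $F_\iota,G_\iota$ into weighted combinations of truncated geometric series $K_{\ell_\iota}(ze^{2\pi iy/p_\iota})$, and integrating the product collapses the left side to
\[
\sum_s|C_{f_\iota,g_\iota}(s)|^2=\frac{1}{p_\iota^2}\sum_{y,y',u,u'\in\Fiu}\conj{\phi_\iota}(y)\,\phi_\iota(y')\,\conj{\chi_\iota}(u)\,\chi_\iota(u')\,\psi\!\left(r_\iota(y-y')+s_\iota(u-u')\right)\mathcal K(y,y',u,u'),
\]
where $\psi(t)=e^{2\pi it/p_\iota}$ and $\mathcal K(y,y',u,u')=\sum\{\psi(yj-y'k+uj'-u'k'):j,k,j',k'\in\{0,\dots,\ell_\iota-1\},\ j-k+j'-k'=0\}$ is a purely combinatorial kernel that factors, upon setting $\delta=j-k$, into $\psi((y'-u')\delta)$ times two incomplete geometric sums.

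The main term is the contribution of the loci where the multiplicative weight $\conj{\phi_\iota}(y)\phi_\iota(y')\conj{\chi_\iota}(u)\chi_\iota(u')$ is trivial. When $\phi_\iota=\chi_\iota$ is nonquadratic this weight equals $\chi_\iota(y'u'/(yu))$, trivial on the three-dimensional variety $\{y'u'=yu\}$; parametrising it by $(y,u,\nu)$ with $y'=\nu y,\ u'=u/\nu$, the slice $\nu=1$ contributes $\ell_\iota^2\,\Omega(1/\Lambda,0)$ --- here $\Omega(1/\Lambda,0)=\ell_\iota^{-2}\sum_n(\ell_\iota-|np_\iota|)_+^2$ is the sum of squared overlaps of the length-$\ell_\iota$ window with its $p_\iota$-periodic translates --- while the fibres $\nu\neq1$ contribute collectively $\ell_\iota^2\,\Omega(1/\Lambda,\Delta/\Lambda)$, the second argument tracking $r_\iota-s_\iota$. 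When $\phi_\iota=\conj{\chi_\iota}$ is nonquadratic the weight is $\chi_\iota(yu'/(y'u))$, trivial on $\{yu'=y'u\}$, and the same analysis yields $\ell_\iota^2\bigl[\Omega(1/\Lambda,0)+\Omega(1/\Lambda,1+\Sigma/\Lambda)\bigr]$, the second argument now tracking $r_\iota+s_\iota$ --- with no half-integer shift, unlike Theorem~\ref{Victor}\eqref{Reginald}, because the window indexing for multiplicative character sequences is genuinely additive. When $\phi_\iota$ is the quadratic character $\conj{\chi_\iota}=\chi_\iota$, both varieties are relevant and all three $\Omega$-terms appear; when $\phi_\iota\notin\{\chi_\iota,\conj{\chi_\iota}\}$ the $y$- and $u$-dependences of the weight cannot combine linearly, so no mixed variety supports a main term and only the slice $y'=y,\ u'=u$ survives, giving $\ell_\iota^2\,\Omega(1/\Lambda,0)$ alone. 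Finally the defect terms in the incomplete Gauss-sum identity $\sum_{y\in\Fiu}\psi(vy)=p_\iota[v\equiv0]-1$, paired with the triangular kernel $\mathcal K(y,y,y,y)\sim\tfrac23\ell_\iota^3$, contribute $\tfrac23\Lambda\ell_\iota^2$ each, and careful bookkeeping of their signs gives the correction $-\tfrac23\Lambda\ell_\iota^2$ in cases \eqref{Bernice},\eqref{Nora}, $-\tfrac43\Lambda\ell_\iota^2$ in case \eqref{Hubert}, and cancellation in case \eqref{Walter} (here $\tfrac23=\int_{-1}^1(1-|t|)^2\,dt$). Dividing by $\ell_\iota^2(1+o(1))$ gives the four stated limits.

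The error term is the crux. Everything off the resonant loci is, after the inner sum over $y$ (or $u$) is carried out, an incomplete multiplicative character sum of a nontrivial character against a product of pairwise distinct linear factors --- pairwise distinct quadratic factors in the $\phi_\iota=\conj{\chi_\iota}$ case, via the Jacobi-sum identity $|C_{f_\iota,g_\iota}(t)|^2=|\sum_a\chi_\iota(a(a+e))|^2$ with $e$ linear in $t$ --- so each inner sum is $O(\sqrt{p_\iota}\log p_\iota)$ by Weil; but there are of order $\ell_\iota^2$ of them, and term-by-term bounding leaves an error of size $\sqrt{p_\iota}\log p_\iota$ after division by $\ell_\iota^2$, which does not vanish. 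The remedy, and the part demanding real work, is to bound the \emph{second moment} of these inner sums over the off-diagonal parameters: completing each incomplete sum and expanding the square reduces this to complete character sums over products of boundedly many linear factors, again governed by Weil, yielding a second moment of size $p_\iota^2$ up to logarithmic factors, whence Cauchy--Schwarz gives an $o(\ell_\iota^2)$ total error. Since the limiting formulae here are verbatim those of Theorem~\ref{Victor}, I would package this whole reduction --- combinatorial main term plus second-moment error bound --- as one abstract lemma about sequences built from character values, shared with the additive-character analysis, and in the present setting merely verify its hypotheses: square-root cancellation for the relevant complete multiplicative character sums (Weil) and the Gauss-sum expansion above. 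The clause that unimodularization leaves every conclusion unchanged is then Lemma~\ref{Oswald}: replacing the $O(\ell_\iota/p_\iota)$ zero entries of each sequence by unit-modulus values perturbs $|C_{f_\iota,f_\iota}(0)|$, $|C_{g_\iota,g_\iota}(0)|$ and each $C_{f_\iota,g_\iota}(s)$ by amounts negligible against $\ell_\iota^2$ once squared and summed. The principal obstacle throughout is precisely this off-diagonal estimate: the pointwise Weil bound is just short, so the second-moment completion argument must be executed carefully and uniformly in all parameters $(r_\iota,s_\iota,\ell_\iota,\phi_\iota,\chi_\iota)$.
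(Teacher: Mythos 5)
Your expansion and main-term bookkeeping match the paper's proof in all essentials: expanding $\phi_\iota$ and $\chi_\iota$ in additive characters with Gauss-sum coefficients (Lemma \ref{Theresa}) turns $\normtt{f_\iota g_\iota}$ into a sum over four frequencies weighted by Gauss-sum products against a constrained-index kernel; the resonant configurations (in the paper's frequency variables: $a=c,b=d$ always, $a=d,b=c$ when $\phi_\iota=\chi_\iota$, $a=b,c=d$ when $\phi_\iota=\conj{\chi_\iota}$) give exactly the $\Omega$ terms via the counting Lemmata \ref{Iago}--\ref{Leslie}, the inclusion--exclusion at the full diagonal gives the $-\tfrac{2}{3}\Lambda$ and $-\tfrac{4}{3}\Lambda$ corrections, and the denominator and unimodularization are handled by Lemmata \ref{Martin} and \ref{Oswald} just as you indicate.

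The gap is in the error term, which you rightly call the crux but do not close, and whose proposed remedy does not add up as stated. Under the natural reading of your plan there are $N\asymp\ell_\iota^2$ inner sums $S_i$ with bounded outer coefficients, so Cauchy--Schwarz gives an off-resonant total at most $\sqrt{N}\,\bigl(\sum_i|S_i|^2\bigr)^{1/2}\asymp \ell_\iota\, p_\iota(\log p_\iota)^{O(1)}$ if, as you claim, the second moment is of size $p_\iota^2$ up to logarithms; since $\ell_\iota\asymp\Lambda p_\iota$, this is $\asymp p_\iota^2(\log p_\iota)^{O(1)}$, which is \emph{not} $o(\ell_\iota^2)$ -- you are short by exactly the logarithmic factors that already defeat the pointwise Weil bound. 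To close along your route you would need a second moment that is genuinely $o(p_\iota^2)$, or an extra $1/p_\iota$ weight per inner sum, neither of which is justified; also, the "Jacobi-sum identity" you invoke for $|C_{f_\iota,g_\iota}(t)|^2$ holds only for complete sums, not truncated windows. The missing idea, which is the paper's actual device, is to exploit that the constraint $t+u=v+w$ makes the additive phase invariant under simultaneous translation of the four frequencies: averaging over that translation replaces the Gauss-sum weight by the complete sum $H(a,b,c,d)=p^{-3}\sum_{x}G_{a+x}(\phi)G_{b+x}(\chi)\conj{G_{c+x}(\phi)G_{d+x}(\chi)}$, which the Weil bound evaluates \emph{uniformly} as main term plus an error of at most $3/\sqrt{p}$ (Lemma \ref{Gordon}); combining this uniform bound with the $L^1$ bound on the constrained-index kernel over all four frequencies, $O\bigl(p^4(1+\log p)^3\bigr)$ (Lemma \ref{William}, i.e.\ Lemma A.1 of \cite{Katz-2013}), gives a total error $O\bigl(p^{3/2}(\log p)^3\bigr)=o(\ell_\iota^2)$. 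Without this completion-by-averaging step (or a correctly quantified substitute), your argument establishes the main terms but not that the off-resonant contribution is negligible.
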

The global minima of these four limiting formulae for asymptotic crosscorrelation demerit factor are found in Lemmata \ref{Jake}--\ref{Christopher}, which give the values of parameters $\Lambda$, $\Delta$, and $\Sigma^\prime$  where these minima occur.
We then look at the reciprocals of these minima to get the global maximum asymptotic crosscorrelation merit factor:
\begin{itemize}
\item In parts \eqref{Bernice} and \eqref{Nora} we achieve a maximum asymptotic crosscorrelation merit factor of $1.539389\ldots$, the largest root of the polynomial $19 x^3-54 x^2+42 x-6$.
\item In part \eqref{Walter} we achieve a maximum asymptotic crosscorrelation merit factor of $1$.
\item In part \eqref{Hubert}, we achieve a maximum asymptotic crosscorrelation merit factor of $3.342065\ldots$, the largest root of the polynomial $7 x^3-33 x^2+33 x-3$.
\end{itemize}

If we look at Theorem \ref{Anthony}, we see that part \eqref{Hubert} for quadratic characters is the analogue of both parts \eqref{Bernice} and \eqref{Nora} for nonquadratic characters since there is only one quadratic character $\eta$ in each prime field of odd order (and none in $\F_2$) and it has $\conj{\eta}=\eta$.
Also note that part \eqref{Walter} covers both quadratic and nonquadratic characters.
The previously known results for autocorrelation of multiplicative character sequences (see \cite[Theorems 1.4, 1.5]{Katz-2013}, \cite[Theorem 2.1]{Jedwab-Katz-Schmidt-2013-Littlewood}, and \cite[Theorem 2.1]{Jedwab-Katz-Schmidt-2013-Advances}) are specializations of Theorem \ref{Anthony}, specifically parts \eqref{Bernice} and \eqref{Hubert}, where we set $f_\iota=g_\iota$ (so then $\Delta=0$).

Comparison of parts \eqref{Bernice}, \eqref{Nora}, and \eqref{Walter} of Theorem \ref{Anthony} to the corresponding parts of Theorem \ref{Victor} show that sequences associated to nonquadratic multiplicative characters behave very much like binary additive character sequences: \eqref{Bernice} pairs of multiplicative character sequences associated to the same character behave like additive character sequence pairs with trivial relative decimation, \eqref{Nora} pairs of multiplicative character sequences associated to conjugate characters behave like additive character sequence pairs with reversing relative decimation, and \eqref{Walter} pairs of multiplicative character sequences associated to distinct, nonconjugate characters behave like additive character sequence pairs with nontrivial, nonreversing relative decimation.

\subsection{Quadratic Character Sequences}

As the asymptotic behavior of multiplicative character sequences associated to nonquadratic characters is so similar to that of additive character sequences, we focus on the quadratic character sequences, which are also especially interesting as they are binary ($\pm 1$) sequences, while all the other nontrivial multiplicative characters produce higher order roots of unity.
Recall that a {\it shifted Legendre sequence} is a unimodularized quadratic character sequence of natural length, and that a {\it Legendre sequence} is a shifted Legendre sequence with natural shift.
Since the zero term is typically replaced with a $1$ in the unimodularization, this produces a true binary sequence (all terms $\pm 1$).

We find that we can obtain a higher asymptotic crosscorrelation merit factor with truncated quadratic character sequences than we can with m-sequences.
To describe how to do this, we need some terminology: if $(f_\iota,g_\iota)$ is the sequence pair described in Theorem \ref{Anthony}, then $(r_\iota+s_\iota)/p_\iota$ is called the {\it fractional sum of shifts} and $(r_\iota-s_\iota)/p_\iota$ is called the {\it fractional difference of shifts}, and the parameters $\Sigma$ and $\Delta$ from part \eqref{Hubert} are respectively called the {\it limiting fractional sum of shifts} and the {\it limiting fractional difference of shifts}.
Note that our theorems are only sensitive to the values of $\Sigma$ and $\Delta$ taken modulo $1$, and so we often report them modulo $1$.
The parameter $\Lambda$ from the theorem is called the {\it limiting fractional length} since $\ell_\iota/p_\iota$ is the fractional length.

To maximize the crosscorrelation merit factor of quadratic character sequences, Lemma \ref{Christopher} shows that the optimal degree of truncation is exactly the same as for m-sequences: if we recall from Section \ref{Timothy} the number $\Lambdatrunc=0.557874\ldots$, which is the middle root of $2 x^3-6 x+3$, then we should truncate our quadratic character sequences so that their limiting fractional length is $\Lambdatrunc$.
We then arrange the shifts of our sequences to be such that (using the notation of Theorem \ref{Anthony}) we have $r_\iota/p_\iota \to (1-\Lambdatrunc)/2$ and $s_\iota/p_\iota=-\Lambdatrunc/2$ as $p_\iota \to \infty$, which makes $\Sigma=1/2-\Lambdatrunc$ and $\Delta=1/2$.
If we do all this, then we show (see Lemma \ref{Christopher} for details) that the asymptotic crosscorrelation demerit factor, as calculated in Theorem \ref{Anthony}\eqref{Hubert}, obtains a global minimum value of $0.299216\ldots$, the smallest root of $3 x^3-33 x^2+33 x-7$.  Or equivalently, the asymptotic crosscorrelation merit factor achieves a global maximum value of $3.342065\ldots$, the largest root of $7 x^3-33 x^2+ 33 x-3$.

The curves in Figure \ref{Percy} show the asymptotic values of autocorrelation and crosscorrelation demerit factors (calculated in Theorem \ref{Anthony}\eqref{Hubert}) as functions of limiting fractional sum of shifts $\Sigma$ for quadratic character sequences with limiting fractional length $\Lambdatrunc=0.557874\ldots$ and limiting fractional difference of shifts $\Delta=1/2$. 
The data points in the same figure show the autocorrelation and crosscorrelation demerit factors as functions of fractional sum of shifts for pairs of truncated, shifted Legendre sequences whose character is the quadratic character from the prime field $\F_{257}$.
The sequences have been truncated to length $143$ (to approximate fractional length $\Lambdatrunc$), and we run through all $257$ possible shifts of the first sequence, while always shifting the second sequence so as to make the difference in shifts between the first and second sequence $128$ (to approximate fractional difference of shifts $1/2$).
This process allows us to vary the fractional sum of shifts while keeping the fractional length and fractional difference of shifts constant.
(And recall that we always read the fractional sum of shifts modulo $1$.)
To avoid clutter we only plot the autocorrelation demerit factors for the first sequence of each pair: the autocorrelation demerit factor of the second sequence is about the same.  These data closely match the asymptotic values.
\begin{center}
\begin{figure}
\begin{center}
\caption{Autocorrelation (plusses, dashed curve) and crosscorrelation (dots, solid curve) demerit factors for pairs of shifted truncated Legendre sequences.  Curves are asymptotic values from Theorem \ref{Anthony} with $\Lambda=\Lambdatrunc$ and $\Delta=1/2$, points are data for Legendre sequences truncated from length $257$ to length $143$ with difference of shifts $128$.}\label{Percy}

\end{center}
\end{figure}
\end{center}

Note that both autocorrelation and crosscorrelation demerit factors are shift-dependent, and show opposite tendencies: when one is low, the other is high.  For instance, when the shifts are arranged to obtain the global maximum asymptotic crosscorrelation merit factor of $3.342065\ldots$ (demerit factor $0.299216\ldots$), the asymptotic autocorrelation merit factor becomes $0.796072\ldots$ (demerit factor 1.250167\ldots).
On the other hand, when the shifts are arranged to minimize autocorrelation, one obtains an autocorrelation merit factor of $3.601116\ldots$ (demerit factor $0.277691\ldots$), but then the crosscorrelation performance deteriorates to crosscorrelation merit factor $0.782661\ldots$ (demerit factor $1.277691\ldots$).
Between these extremes, we can achieve autocorrelation and crosscorrelation performance significantly better than randomly selected sequences (that is, we can make the autocorrelation demerit factors for both of our sequences and also their crosscorrelation demerit factors all smaller than $1$ simultaneously).
Regardless of whether we arrange the shifts to minimize autocorrelation or to minimize crosscorrelation, we obtain
\[
\CDF(f_\iota,g_\iota)+\sqrt{\DF(f_\iota) \DF(g_\iota)} \to 1.555383\ldots
\]
in the limit as the length of the sequences tends to infinity, which can be compared with the lower bound of Pursley-Sarwate in \eqref{Gary}.
If, on the other hand, we choose shifts that are halfway between these extremes, then we can obtain asymptotic autocorrelation merit factor $1.783394\ldots$ (demerit factor $0.560728\ldots$) and asymptotic crosscorrelation merit factor $1.717466\ldots$ (demerit factor $0.582252\ldots$) and thus
\begin{equation}\label{Donald}
\CDF(f_\iota,g_\iota)+\sqrt{\DF(f_\iota) \DF(g_\iota)} \to 1.1429812\ldots
\end{equation}
as the length of the sequences tends to infinity.

Unfortunately, the sequence pairs mentioned above that produce asymptotic crosscorrelation merit factor $3.342065\ldots$ are taken from slightly overlapping segments of the Legendre sequence (they have a limiting fractional difference of shifts of $1/2$, but their limiting fractional length is $\Lambdatrunc=0.557874\ldots$, which is slightly larger than $1/2$).
This will produce rather strong crosscorrelation values of $\approx 0.116 p$  at shifts $(p+1)/2$ and $(1-p)/2$, which is much stronger than the mean square value of $\approx 0.387 \sqrt{p}$ that the crosscorrelation merit factor implies.
We should also note that we can approach even closer to the Pursley-Sarwate bound (see \eqref{Gary}) than we observed in \eqref{Donald} if we allow pairs of sequences that include even larger overlapping portions of the original Legendre sequence,\footnote{For example, one can obtain $\CDF(f_\iota,g_\iota)+\sqrt{\DF(f_\iota) \DF(g_\iota)} \to 13/12$ as sequence length tends to infinity if $f_\iota$ and $g_\iota$ are Legendre sequences of natural length $p_\iota$ with fractional shifts tending to $1/8$ and $5/8$, respectively, as $p_\iota \to \infty$.  Unfortunately, the pair $(f_\iota,g_\iota)$ would have very large crosscorrelation peaks of magnitudes of $(p_\iota-1)/2$ and $(p_\iota+1)/2$ at shifts $(p_\iota+1)/2$ and $(1-p_\iota)/2$, respectively.} but this will only exacerbate the problem of having two very strong crosscorrelation peaks in spite of overall low mean square crosscorrelation.

If an application demands that there be no overlap, we can take a Legendre sequence of length $p_\iota$, shift $(p_\iota-1)/4$ places to the right, discard the last term, and obtain our pair $(f_\iota,g_\iota)$ by cutting what remains into two sequences of length $(p_\iota-1)/2$.  In terms of the notation of Theorem \ref{Anthony}, we have truncated to limiting fractional length $\Lambda=1/2$, and the shifts of our two sequences have $r_\iota/p_\iota \to 1/4$ and $s_\iota/p_\iota \to -1/4$, so that $\Sigma=0$ and $\Delta=1/2$.  This yields an asymptotic crosscorrelation merit factor of $3$, which is not quite as large as the maximum of $3.342065\ldots$ mentioned in the previous paragraph, but is still much higher than what one can achieve with m-sequences or nonquadratic character sequences of natural or modified length.

The curves in Figure \ref{Monica} show the asymptotic values of autocorrelation and crosscorrelation demerit factors as functions of limiting fractional sum of shifts for quadratic character sequences with limiting fractional length $1/2$ and limiting fractional difference of shifts $1/2$, as calculated in Theorem \ref{Anthony}\eqref{Hubert}.
The data points in the same figure show the autocorrelation and crosscorrelation demerit factors as functions of fractional sum of shifts for pairs of truncated, shifted Legendre sequences whose character is the quadratic character from the prime field $\F_{257}$.
The sequences have been truncated to length $128$ (to approximate fractional length $1/2$), and we run through all $257$ possible shifts of the first sequence, while always shifting the second sequence so as to make the difference in shifts between the first and second sequence $128$ (to approximate fractional difference of shifts $1/2$).  
This process allows us to vary the fractional sum of shifts while keeping the fractional length and fractional difference of shifts constant.
(And recall that we always read the fractional sum of shifts modulo $1$.)
To avoid clutter we only plot the autocorrelation demerit factors for the first sequence of each pair: the autocorrelation demerit factor of the second sequence is about the same.  These data closely match the asymptotic values.

As we saw in the previous figure, autocorrelation and crosscorrelation demerit factors are shift-dependent and show opposite tendencies: when one is low, the other is high.  One can achieve a crosscorrelation merit factor as high as $3$ (demerit factor $1/3$) with this construction, but then the autocorrelation merit factor becomes $3/4$ (demerit factor $4/3$).  And conversely, one can achieve an autocorrelation merit factor as high as $3$ (demerit factor $1/3$), but then the crosscorrelation merit factor drops to $3/4$ (demerit factor $4/3$).
Between these extremes, we can achieve autocorrelation and crosscorrelation performance significantly better than randomly selected sequences (that is, we can make the autocorrelation demerit factors for both of our sequences and also their crosscorrelation demerit factors all smaller than $1$ simultaneously).
Regardless of whether we arrange the shifts to minimize autocorrelation or to minimize crosscorrelation, we obtain
\[
\CDF(f_\iota,g_\iota)+\sqrt{\DF(f_\iota) \DF(g_\iota)} \to 5/3\ldots
\]
in the limit as the length of the sequences tends to infinity, which can be compared with the lower bound of Pursley-Sarwate in \eqref{Gary}.
If, on the other hand, we choose shifts that are halfway between these extremes, then we can obtain asymptotic autocorrelation merit factor $12/7$ (demerit factor $7/12$) and asymptotic crosscorrelation merit factor $12/7$ (demerit factor $7/12$) and thus
\[
\CDF(f_\iota,g_\iota)+\sqrt{\DF(f_\iota) \DF(g_\iota)} \to 7/6
\]
as the length of the sequences tends to infinity.
\begin{center}
\begin{figure}
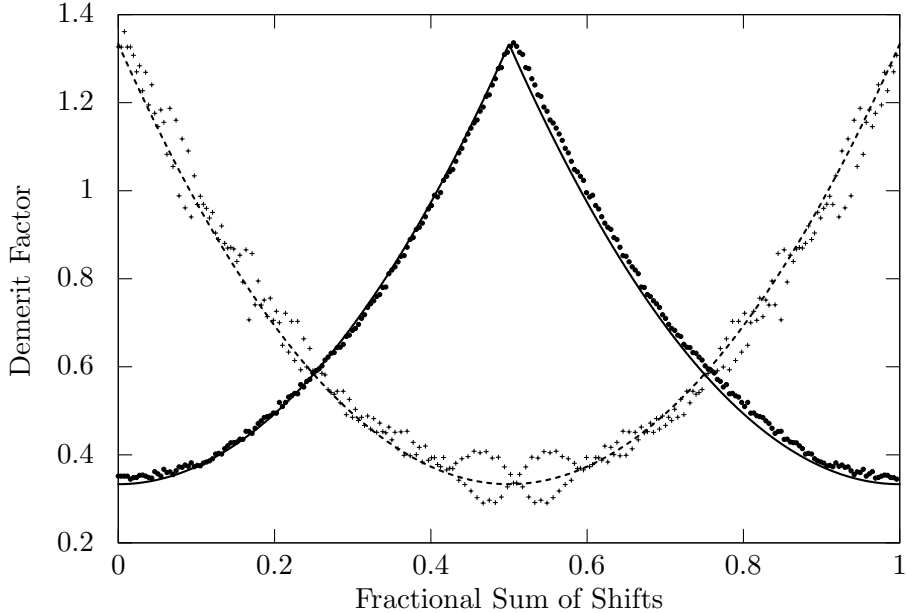

\begin{center}
\caption{Autocorrelation (plusses, dashed curve) and crosscorrelation (dots, solid curve) demerit factors for pairs of shifted truncated Legendre sequences.  Curves are asymptotic values from Theorem \ref{Anthony} with $\Lambda=1/2$ and $\Delta=1/2$, points are data for Legendre sequences truncated from length $257$ to length $128$ with difference of shifts $128$.}\label{Monica}

\end{center}
\end{figure}
\end{center}

The rest of the paper is devoted to proving the claims we have presented in Sections \ref{Abigail}--\ref{Milton}.  In Section \ref{Philip} we show the connection between merit factors and $L^p$ norms of polynomials on the complex unit circle: this provides a useful formalism for our proofs and shows that there is a deep connection between correlation problems and pure mathematics.
After some preliminary material in Section \ref{Gail} on Gauss sums (which are the Fourier coefficients for our sequences), we use Sections \ref{Simon}--\ref{Elizabeth} to furnish rigorous proofs of the results presented in Sections \ref{Abigail}--\ref{Milton}.
\FloatBarrier

\section{Connection with Polynomials and Norms}\label{Philip}

Recall that if $f=(f_0,\ldots,f_{\ell-1})$ and $g=(g_0,\ldots,g_{m-1})$ are sequences of complex numbers of lengths $\ell$ and $m$, and $s$ is an integer, then the (aperiodic) crosscorrelation of $f$ with $g$ at shift $s$ is defined to be
\begin{equation}\label{James}
C_{f,g}(s) = \sum_{j \in \Z} f_j \conj{g_{j+s}},
\end{equation}
where we use the convention that $f_j=0$ when $j\not\in\{0,1,\ldots,\ell-1\}$ and $g_k=0$ when $k \not \in \{0,1,\ldots,m-1\}$.
We can identify the sequence $f=(f_0,f_1,\ldots,f_{\ell-1})$ of complex numbers with the polynomial $f(z)=f_0+f_1 z+\cdots+f_{\ell-1} z^{\ell-1}$ in $\C[z]$.
We define $\conj{f}(z)=\conj{f_0}+\conj{f_1} z + \cdots + \conj{f_{\ell-1}} z^{\ell-1}$.
If $g=(g_0,\ldots,g_{m-1})$ is another sequence of complex numbers with associated polynomial $g(z)$, then one readily sees that the Laurent polynomial $f(z) \conj{g}(1/z)$ in $\C[z,z^{-1}]$ has $C_{f,g}(-s)$ for its coefficient of $z^s$.
As a special case, note that $C_{f,f}(0)=\sum_{s \in \Z} |f_s|^2$ is the constant coefficient of $f(z)\conj{f}(1/z)$.
By the same principle, we multiply $h(z)=f(z) \conj{g}(1/z)$ by $\conj{h}(1/z)=\conj{f}(1/z) g(z)$ to see that the constant coefficient of $f(z)\conj{f}(1/z) g(z) \conj{g}(1/z)$ is $\sum_{s\in\Z} |C_{f,g}(s)|^2$.

If $a(z)=\sum_{s \in \Z} a_s z^s$ is a Laurent polynomial (so all but finitely many $a_s$ are zero), integration around the complex unit circle extracts the constant coefficient
\[
a_0 = \frac{1}{2 \pi} \int_0^{2\pi} a(e^{i\theta}) d\theta.
\]
Furthermore, we note that $\conj{a}(1/z)=\conj{a(z)}$ if $z$ is on the complex unit circle, so our calculations in the previous paragraph show that
\[
C_{f,f}(0) = \frac{1}{2\pi} \int_0^{2\pi} |f(e^{i\theta})|^2 d\theta
\]
and
\[
\sum_{s \in \Z} |C_{f,g}(s)|^2 = \frac{1}{2 \pi} \int_0^{2 \pi} |f(e^{i \theta}) g(e^{i\theta})|^2 d\theta.
\]

For a real number $p \geq 1$, we define the $L^p$ norm of a complex-valued function $f$ on the complex unit circle as
\[
\norm{f}{p}=\left(\frac{1}{2\pi} \int_{0}^{2\pi} |f(e^{i\theta})|^p d\theta\right)^{1/p},
\]
so then the integrals in the previous paragraph can be expressed succinctly as $C_{f,f}(0) = \normtt{f}$ and $\sum_{s \in \Z} |C_{f,g}(s)|^2 = \normtt{f g}$.
Note that if $f=g$ (for autocorrelation calculations) then $\sum_{s \in \Z} |C_{f,f}(s)|^2=\normtt{f^2}=\normff{f}$.
Thus we can express the autocorrelation demerit factor \eqref{Adam} and merit factor \eqref{Amanda} as
\begin{align*}
\DF(f) & = \frac{\normff{f}}{\normtf{f}}-1 \\
\MF (f) & = \frac{\normtf{f}}{\normff{f}-\normtf{f}},
\end{align*}
and we can express the crosscorrelation demerit factor \eqref{Cedric} and merit factor \eqref{Carmine} as
\begin{align}
\CDF(f,g) & = \cnrd{f}{g} \label{Melissa} \\
\CMF(f,g) & = \frac{\normtt{f}\normtt{g}}{\normtt{f g}}. \nonumber
\end{align}
For the purposes of our mathematical analysis, it is actually most convenient to keep track of the ratio of norms $\cnr{f}{g}=\CDF(f,g)$, which becomes $\DF(f)+1$ when $f=g$ (i.e., in autocorrelation calculations).
The connection between the autocorrelation merit factor and the $L^4$ norm of polynomials on the complex unit circle furnishes a remarkable connection between research in pure mathematics going back to Littlewood (see \cite{Littlewood-1966} and \cite[Problem 19]{Littlewood-1968}) and in engineering going back to Golay (see \cite{Golay-1972}).

One can also obtain a useful summation formula for the sum of squared magnitudes of crosscorrelation values directly from \eqref{James}, as
\[
\sum_{s \in \Z} |C_{f,g}(s)|^2 = \sum_{s,j,k \in \Z} f_j \conj{g}_{j+s} \conj{f}_k g_{k+s}.
\]
which can be rewritten
\begin{equation}\label{Abraham}
\normtt{f g}=\sum_{s\in\Z} |C_{f,g}(s)|^2 = \sums{t,u,v,w \in \Z \\ t+u=v+w} f_t g_u \conj{f_v g_w}.
\end{equation}

\section{Characters and Gauss Sums}\label{Gail}

Fourier analysis is used in this work to study the crosscorrelation properties of additive and multiplicative characters of finite fields.  The Fourier coefficients for such characters are Gauss sums.
For the rest of this paper, $F$ is a finite field of characteristic $p$ and order $q$, and we write $\achars$ for the group of complex-valued additive characters of $F$ and $\mchars$ for the group of complex-valued multiplicative characters of $F$.
For any additive or multiplicative characters $\chi,\psi$, we write $\chi\psi$ for the product of characters $\chi\psi(x)=\chi(x)\psi(x)$, and when $n \in \Z$, we write $\chi^n$ for the function $\chi^n(x)=(\chi(x))^n$.  We write $\conj{\chi}$ for $\chi^{-1}$, so that $\chi^{-1}(x)=\conj{\chi(x)}$.
If $\chi\colon \Fu \to \C$ is any multiplicative character (including the trivial multiplicative character), we always extend $\chi$ to $F$ by setting $\chi(0)=0$.

Recall from Section \ref{Harold} that the canonical additive character of $F$ is the character $\epsilon \colon F \to \C$ with $\epsilon(x)=\exp(2\pi i \Tr_{F/\F_p}(x)/p)$, where $\Tr_{F/\F_p}$ is the absolute trace from Galois theory.
There are $|F|$ additive characters of $F$, given by the $|F|$ functions $x \mapsto \epsilon(a x)$ for each $a \in F$, where $a=0$ yields the trivial character.
We write $\epsilon_a$ for the additive character $x\mapsto \epsilon(a x)$.

We have the following {\it orthogonality relations} for characters.
\begin{lemma}\label{Orestes}
\mbox{}
\begin{enumerate}[(i).]
\item If $\psi \in \achars$, then $\sum_{a \in F} \psi(a)=\begin{cases} \card{F} & \text{if $\psi$ is trivial}, \\ 0 & \text{otherwise}.\end{cases}$
\item\label{Sarah} If $a \in F$, then $\sum_{\psi\in\achars} \psi(a)=\begin{cases} \card{F} & \text{if $a=0$}, \\ 0 & \text{otherwise}.\end{cases}$
\item\label{Theodore} If $\chi \in \mchars$, then $\sum_{b \in \Fu} \chi(b)=\begin{cases} \card{\Fu} & \text{if $\chi$ is trivial}, \\ 0 & \text{otherwise}.\end{cases}$
\item\label{Edith} If $b \in \Fu$, then $\sum_{\chi\in\mchars} \chi(b)=\begin{cases} \card{\Fu} & \text{if $b=1$}, \\ 0 & \text{otherwise}.\end{cases}$
\end{enumerate}
\end{lemma}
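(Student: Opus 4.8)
The statement to prove is the orthogonality relations of Lemma \ref{Orestes}. These are completely standard facts about characters of finite abelian groups, so the plan is to give the usual short argument, treating the four parts in two dual pairs.

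\textbf{Plan for parts (i) and (iii) (summing a fixed character over the group).} These are the easier direction. For part (i), let $S=\sum_{a\in F}\psi(a)$ where $\psi$ is an additive character of $F$, viewed as a character of the additive group $(F,+)$. If $\psi$ is trivial then every term is $1$, so $S=|F|$. If $\psi$ is nontrivial, pick $c\in F$ with $\psi(c)\neq 1$. Since $a\mapsto a+c$ is a bijection of $F$, we have $S=\sum_{a\in F}\psi(a+c)=\psi(c)\sum_{a\in F}\psi(a)=\psi(c)S$, hence $(1-\psi(c))S=0$ and $S=0$. Part (iii) is identical with $(F,+)$ replaced by the multiplicative group $(F^*,\cdot)$: if $\chi$ is a nontrivial multiplicative character, choose $c\in F^*$ with $\chi(c)\neq 1$, use the bijection $b\mapsto bc$ of $F^*$, and conclude $\sum_{b\in F^*}\chi(b)=\chi(c)\sum_{b\in F^*}\chi(b)$, so the sum is $0$; if $\chi$ is trivial the sum is $|F^*|$. (Here I am using the convention, recalled just before the lemma, that for multiplicative characters the summation index runs over $F^*$, so the extension $\chi(0)=0$ plays no role.)

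\textbf{Plan for parts (ii) and (iv) (summing over the character group at a fixed point).} For part (ii): the group $\widehat{F}$ of additive characters of $F$ has order $|F|$ (as recalled in the text, $\widehat F=\{\epsilon_a : a\in F\}$), and is isomorphic to $(F,+)$. If $a=0$ then $\psi(0)=1$ for every $\psi$, so $\sum_{\psi\in\widehat F}\psi(0)=|F|$. If $a\neq 0$, the map $\psi\mapsto\psi(a)$ is itself a character of the group $\widehat F$; it is nontrivial, because $a\neq 0$ means some additive character is nontrivial at $a$ — concretely, $\epsilon_{a^{-1}}(a)=\epsilon(1)=e^{2\pi i/p}\neq 1$ since $\Tr_{F/\F_p}$ is surjective and one can choose $a'$ with $\Tr(a'a)\ne 0$, giving $\epsilon_{a'}(a)\ne 1$. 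Then by the result of part (i) applied to this character of $\widehat F$, the sum $\sum_{\psi\in\widehat F}\psi(a)$ vanishes. Part (iv) is the same argument with $F^*$ in place of $F$: $\widehat{F^*}$ has order $|F^*|$, the evaluation map $\chi\mapsto\chi(b)$ is a character of $\widehat{F^*}$ that is trivial iff $b=1$ (for $b\neq 1$ some multiplicative character separates $b$ from $1$, since $\widehat{F^*}$ separates points of the finite abelian group $F^*$), and part (iii) then gives the sum $|F^*|$ when $b=1$ and $0$ otherwise.

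\textbf{Anticipated obstacle.} There is no real obstacle; this is a textbook lemma. The only point requiring a word of care is the fact underlying parts (ii) and (iv) that the characters of a finite abelian group separate its points — equivalently, that for $a\ne 0$ there is an additive character nontrivial at $a$, and for $b\ne 1$ a multiplicative character nontrivial at $b$. For additive characters this is immediate from surjectivity of the absolute trace $\Tr_{F/\F_p}$ (a nonzero $\F_p$-linear functional composed with trace) exhibited via $\epsilon_{a'}$ as above; for multiplicative characters it follows because $F^*$ is cyclic, so it embeds into $\widehat{F^*}\cong F^*$ and the evaluation pairing is nondegenerate. I would state this separation fact in one sentence and cite it as standard (e.g.\ via the structure theorem for finite abelian groups) rather than prove it from scratch, then deploy parts (i) and (iii) as a black box to kill the nonvanishing cases. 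The whole proof should run to well under half a page.
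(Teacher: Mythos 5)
Your proof is correct, but it is doing more work than the paper does: the paper disposes of this lemma in one sentence, citing \cite[Theorem 5.4]{Lidl-Niederreiter} for the vanishing cases and calling the nonvanishing cases trivial, whereas you reprove the orthogonality relations from scratch. Your route is the standard one and is sound: the translation trick $S=\psi(c)S$ (resp.\ $S=\chi(c)S$) for parts (i) and (iii), and for parts (ii) and (iv) the observation that evaluation at a fixed element is itself a character of the dual group, nontrivial precisely because characters separate points --- via surjectivity of $\Tr_{F/\F_p}$ for the additive case and cyclicity of $F^*$ for the multiplicative case. What the paper's citation buys is brevity; what your argument buys is self-containment, at the cost of having to justify the separation-of-points fact, which you do adequately. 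One small slip worth noting: your parenthetical claim that $\epsilon_{a^{-1}}(a)=\epsilon(1)=e^{2\pi i/p}$ is not right in general, since $\epsilon(1)=\exp(2\pi i\,\Tr_{F/\F_p}(1)/p)$ and $\Tr_{F/\F_p}(1)$ can vanish when the characteristic divides the extension degree; but you immediately correct course by choosing $a'$ with $\Tr(a'a)\neq 0$, which is the right argument, so this is a cosmetic blemish rather than a gap.
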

\begin{proof}
All the cases where the sum comes to zero can be found in \cite[Theorem 5.4]{Lidl-Niederreiter}), and all the cases where the sum comes out to be a positive number are trivial.
\end{proof}

The {\it Gauss sum} associated with an additive character $\psi \in \achars$ and a multiplicative character $\chi \in \mchars$, is
\[
G(\psi,\chi) = \sum_{x \in \Fu} \psi(x) \chi(x).
\]
When $\psi$ is expressed as $\epsilon_a$ for some $a \in F$, we have the shorthand
\begin{equation}\label{George}
G_a(\chi) = G(\epsilon_a,\chi) = \sum_{x \in \Fu} \epsilon_a(x) \chi(x),
\end{equation}
and when $\psi=\epsilon=\epsilon_1$, the canonical additive character, we have the shorthand
\begin{equation}\label{Gerald}
G(\chi)=G_1(\chi)=\sum_{x \in \Fu} \epsilon(x) \chi(x).
\end{equation}
We summarize the facts that we shall need about Gauss sums.
\begin{lemma}\label{Gabriel}
If $a \in F$ and $\chi \in \mchars$, then
\begin{enumerate}[(i).]
\item\label{Henry} $G_0(\chi)=\card{\Fu}$ if $\chi$ is trivial,
\item\label{Irene} $G_0(\chi)=0$ if $\chi$ is nontrivial,
\item\label{Jerry} $G_a(\chi)=G(\chi)=-1$ if $a\in \Fu$ and $\chi$ is trivial,
\item\label{Katherine} $|G_a(\chi)|=|G(\chi)|=\sqrt{\card{F}}$ if $a\in \Fu$ and $\chi$ is nontrivial,
\item\label{Lawrence} $G_a(\chi)=\conj{\chi}(a) G(\chi)$ if $a\in \Fu$ or $\chi$ is nontrivial,
\item\label{Manuel} $G_a(\conj{\chi})=\chi(-1) \conj{G_a(\chi)}$, so that $G(\conj{\chi})=\chi(-1) \conj{G(\chi)}$, and
\item\label{Nelson} if $F$ is of characteristic $p$ and order $q$ and $d \in \Z$ with $d \equiv p^k \pmod{q-1}$, then $G(\chi^d)=G(\chi)$ and $G(\chi^{-d})=\chi(-1) \conj{G(\chi)}$.
\end{enumerate}
\end{lemma}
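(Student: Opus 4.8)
The plan is to prove the seven parts of Lemma~\ref{Gabriel} by systematically reducing them to the orthogonality relations of Lemma~\ref{Orestes} and to standard identities about Gauss sums. Parts \eqref{Henry}--\eqref{Katherine} handle the degenerate cases (trivial additive character or trivial multiplicative character) and the basic absolute value formula, while parts \eqref{Lawrence}--\eqref{Nelson} are the substantive identities that will actually be used in the crosscorrelation computations.

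First I would dispose of the easy cases. For \eqref{Henry} and \eqref{Irene}: when $a=0$ we have $\epsilon_0(x)=1$ for all $x$, so $G_0(\chi)=\sum_{x\in\Fu}\chi(x)$, which is $\card{\Fu}$ if $\chi$ is trivial and $0$ otherwise by Lemma~\ref{Orestes}\eqref{Theodore}. For \eqref{Jerry}: when $\chi$ is trivial (so $\chi(x)=1$ on $\Fu$) and $a\in\Fu$, we get $G_a(\chi)=\sum_{x\in\Fu}\epsilon(ax)=\sum_{y\in\Fu}\epsilon(y)=\left(\sum_{y\in F}\epsilon(y)\right)-\epsilon(0)=0-1=-1$, using Lemma~\ref{Orestes}(i) and the fact that $\epsilon$ is the nontrivial additive character $\epsilon_1$. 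For \eqref{Katherine}, the equality $|G_a(\chi)|=|G(\chi)|$ will follow once \eqref{Lawrence} is established (since $|\conj\chi(a)|=1$ for $a\in\Fu$), and the evaluation $|G(\chi)|=\sqrt{\card F}$ for nontrivial $\chi$ is the classical Gauss sum magnitude, which I would cite from \cite[Theorem 5.11]{Lidl-Niederreiter} rather than reprove; alternatively one computes $G(\chi)\conj{G(\chi)}=\card F$ directly by expanding the product as a double sum over $\Fu\times\Fu$, substituting $x=ty$, and applying Lemma~\ref{Orestes}\eqref{Theodore} and Lemma~\ref{Orestes}(i).

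The core identity is \eqref{Lawrence}: $G_a(\chi)=\conj\chi(a)G(\chi)$ when $a\in\Fu$ or $\chi$ is nontrivial. When $a\in\Fu$, substitute $x=a^{-1}y$ in \eqref{George}: $G_a(\chi)=\sum_{y\in\Fu}\epsilon(y)\chi(a^{-1}y)=\chi(a^{-1})\sum_{y\in\Fu}\epsilon(y)\chi(y)=\conj\chi(a)G(\chi)$, using $\chi(a^{-1})=\chi(a)^{-1}=\conj\chi(a)$. When $a=0$ and $\chi$ is nontrivial, both sides are $0$: the left by \eqref{Irene}, and the right because $\conj\chi(0)=0$ by our convention extending $\chi$ to $F$. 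For \eqref{Manuel}: in \eqref{Gerald} replace $\chi$ by $\conj\chi$ and use $\epsilon(x)\conj\chi(x)=\conj{\conj{\epsilon(x)}\chi(x)}=\conj{\epsilon(-x)\chi(x)}$ since $\epsilon$ takes values on the unit circle and $\conj{\epsilon(x)}=\epsilon(-x)$; then substitute $x\mapsto -x$ to get $G(\conj\chi)=\sum_x\conj{\epsilon(x)\chi(-x)}=\chi(-1)\conj{\sum_x\epsilon(x)\chi(x)}=\chi(-1)\conj{G(\chi)}$, using $\conj{\chi(-1)}=\chi(-1)$ because $\chi(-1)\in\{\pm1\}$. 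The statement for $G_a$ follows by applying \eqref{Lawrence} and noting $\conj{\conj\chi}(a)=\chi(a)$. Finally \eqref{Nelson}: since $\Tr_{F/\F_p}(x^p)=\Tr_{F/\F_p}(x)$ we have $\epsilon(x^p)=\epsilon(x)$, so the substitution $x\mapsto x^{p^k}$ (a permutation of $\Fu$) in \eqref{Gerald} gives $G(\chi)=\sum_x\epsilon(x^{p^k})\chi(x^{p^k})=\sum_x\epsilon(x)\chi^{p^k}(x)=G(\chi^{p^k})=G(\chi^d)$ whenever $d\equiv p^k\pmod{q-1}$ (the congruence matters because $\chi$ has order dividing $q-1$). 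Then $G(\chi^{-d})=G(\conj{\chi^d})=\chi^d(-1)\conj{G(\chi^d)}=\chi(-1)\conj{G(\chi)}$ by \eqref{Manuel} applied to $\chi^d$, together with $\chi^d(-1)=\chi(-1)$ since $d$ is odd when $p$ is odd (a power of an odd prime) and $\chi(-1)=1$ when $p=2$.

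I do not expect any genuine obstacle here; the only step requiring care is \eqref{Katherine}, where the magnitude $\sqrt{\card F}$ is not a formal consequence of orthogonality but a classical theorem, so the cleanest route is to cite \cite[Theorem 5.11]{Lidl-Niederreiter}. The subtle bookkeeping points are the convention $\chi(0)=0$ (needed for the $a=0$ case of \eqref{Lawrence} and for \eqref{Manuel} to make sense on all of $F$) and the parity observation that $\chi(-1)=\chi^d(-1)$ in \eqref{Nelson}, which should be stated explicitly since it is what lets the characteristic-$2$ and odd-characteristic cases be handled uniformly.
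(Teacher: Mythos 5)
Your proof is correct, but it takes a different route from the paper: the paper's entire proof is a citation, deferring parts \eqref{Henry}--\eqref{Manuel} to \cite[Theorems 5.11, 5.12]{Lidl-Niederreiter} and only adding that the $a=0$ case of \eqref{Lawrence} follows from \eqref{Irene} and that \eqref{Nelson} follows from \cite[Theorem 5.12(v)]{Lidl-Niederreiter} together with \eqref{Manuel}, whereas you reprove essentially everything from the orthogonality relations and the substitutions $x\mapsto a^{-1}x$, $x\mapsto -x$, $x\mapsto x^{p^k}$, citing the reference only for the magnitude $|G(\chi)|=\sqrt{\card F}$ (and even sketching the standard double-sum computation of $G(\chi)\conj{G(\chi)}=\card F$ as an alternative). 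Your derivation of \eqref{Nelson} from Frobenius-invariance of $\epsilon$ plus \eqref{Manuel} is the same idea the paper gets from LN 5.12(v), so the mathematics agrees; what your version buys is self-containment, at the cost of length. Two small points of bookkeeping: in \eqref{Manuel} your reduction of $G_a$ to $G$ via \eqref{Lawrence} does not cover the degenerate case $a=0$ with $\chi$ trivial (where \eqref{Lawrence} is not available), but there both sides equal $\card{\Fu}$ by \eqref{Henry}, so it is a one-line check; and in \eqref{Nelson} the parenthetical justification that $d$ is odd should read that $d\equiv p^k \pmod{q-1}$ with $q-1$ even (for $p$ odd) forces $d$ odd — or, more directly, $\chi^d=\chi^{p^k}$ as characters, so $\chi^d(-1)=\chi(-1)^{p^k}=\chi(-1)$ — since $d$ itself need not be a prime power. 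Neither point is a genuine gap.
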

\begin{proof}
All of these are proved in \cite[Theorems 5.11, 5.12]{Lidl-Niederreiter}, except the case where $a=0$ in \eqref{Lawrence}, which follows from \eqref{Irene}, and except case \eqref{Nelson}, which follows easily from \cite[Theorem 5.12(v)]{Lidl-Niederreiter} and \eqref{Manuel}.
\end{proof}

The following lemma, proved in \cite[eq.~(5.17)]{Lidl-Niederreiter}, gives the Fourier expansion of additive characters in terms of multiplicative characters.
\begin{lemma}\label{Stephen}
For $a \in F$ and $b \in \Fu$ we have
\[
\epsilon_a(b) = \frac{1}{\card{\Fu}} \sum_{\xi\in\mchars} G_a(\xi) \conj{\xi}(b).
\]
\end{lemma}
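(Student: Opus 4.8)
The plan is to prove the identity by unwinding the definition of the Gauss sum $G_a(\xi)$ on the right-hand side, interchanging the two (finite) sums, and then invoking the orthogonality relation for multiplicative characters. First I would substitute the definition \eqref{George} of $G_a(\xi)$ to get
\[
\frac{1}{\card{\Fu}} \sum_{\xi\in\mchars} G_a(\xi) \conj{\xi}(b)
= \frac{1}{\card{\Fu}} \sum_{\xi\in\mchars} \sum_{x\in\Fu} \epsilon_a(x) \xi(x) \conj{\xi}(b).
\]

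Since both index sets are finite, I would swap the order of summation and pull $\epsilon_a(x)$ out of the inner sum, obtaining
\[
\frac{1}{\card{\Fu}} \sum_{x\in\Fu} \epsilon_a(x) \sum_{\xi\in\mchars} \xi(x)\conj{\xi}(b).
\]
Because $b \in \Fu$, the element $b^{-1}$ exists and lies in $\Fu$, and the multiplicativity of each $\xi$ gives $\xi(x)\conj{\xi}(b) = \xi(x)\xi(b^{-1}) = \xi(xb^{-1})$, with $xb^{-1} \in \Fu$. Hence the inner sum equals $\sum_{\xi\in\mchars}\xi(xb^{-1})$, which by the orthogonality relation in Lemma \ref{Orestes}\eqref{Edith} is $\card{\Fu}$ when $xb^{-1}=1$, i.e.\ $x = b$, and $0$ otherwise. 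Substituting this back, every term of the outer sum vanishes except the one with $x = b$, which contributes $\frac{1}{\card{\Fu}}\cdot \epsilon_a(b)\cdot\card{\Fu} = \epsilon_a(b)$, completing the proof.

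There is essentially no obstacle in this argument; it is a direct computation. The only points that merit a sentence of care are that the interchange of the two finite sums is unconditionally valid, that the rewriting $\xi(x)\conj{\xi}(b)=\xi(xb^{-1})$ genuinely uses the hypothesis $b\in\Fu$ (so that the identity makes sense and so that $\xi(xb^{-1})\neq 0$ when $xb^{-1}$ happens to be a unit, as it always is here), and that the argument is uniform in $a$, so the degenerate case $a=0$ (where $\epsilon_a$ is the trivial additive character) needs no separate treatment. One may alternatively just cite \cite[eq.~(5.17)]{Lidl-Niederreiter}, but the self-contained computation above is short enough to record in full.
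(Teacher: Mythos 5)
Your computation is correct: expanding $G_a(\xi)$ by \eqref{George}, swapping the two finite sums, and applying the orthogonality relation of Lemma \ref{Orestes}\eqref{Edith} to $\sum_{\xi}\xi(xb^{-1})$ isolates the $x=b$ term and yields $\epsilon_a(b)$, uniformly in $a$ (including $a=0$). The paper itself gives no argument here and simply cites \cite[eq.~(5.17)]{Lidl-Niederreiter}; your self-contained proof is exactly the standard computation behind that reference, so nothing is missing.
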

And we also have the Fourier expansion of multiplicative characters in terms of additive characters:
\begin{lemma}\label{Theresa}
For $\chi \in \mchars$ and $b \in F$, we have
\[
\chi(b) = \frac{1}{\card{F}} \sum_{a \in F} G_a(\chi) \conj{\epsilon_a}(b).
\]
\end{lemma}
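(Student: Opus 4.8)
The plan is to verify the identity by substituting the definition of the Gauss sums on the right-hand side and collapsing the resulting double sum with the additive orthogonality relation of Lemma \ref{Orestes}. First I would write $G_a(\chi)=\sum_{x \in \Fu} \epsilon_a(x)\chi(x)$ from \eqref{George} and plug this into $\frac{1}{\card{F}}\sum_{a \in F} G_a(\chi)\conj{\epsilon_a}(b)$; since both sums are finite, I can interchange them to obtain
\[
\frac{1}{\card{F}}\sum_{x \in \Fu}\chi(x)\sum_{a \in F}\epsilon_a(x)\conj{\epsilon_a}(b).
\]

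Next I would simplify the inner sum. Because the values of the canonical additive character $\epsilon$ are roots of unity, $\conj{\epsilon_a}(b)=\conj{\epsilon(ab)}=\epsilon(-ab)=\epsilon_{-a}(b)$, and since $\epsilon$ is a homomorphism of the additive group of $F$, $\epsilon_a(x)\epsilon_{-a}(b)=\epsilon\bigl(a(x-b)\bigr)=\epsilon_a(x-b)$. As $a$ ranges over $F$, the characters $\epsilon_a$ range over all of $\achars$ (bijectively), so Lemma \ref{Orestes}(\ref{Sarah}) applied to the element $x-b$ gives $\sum_{a \in F}\epsilon_a(x-b)=\card{F}$ when $x=b$ and $0$ otherwise. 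Substituting this back, the right-hand side collapses to $\sum_{x \in \Fu,\, x=b}\chi(x)$.

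Finally I would split into cases on $b$: if $b \in \Fu$ this sum is exactly the single term $\chi(b)$, and if $b=0$ the index set $\{x \in \Fu : x=b\}$ is empty, so the sum is $0$, which agrees with $\chi(0)=0$ under the standing convention for multiplicative characters; in either case the right-hand side equals $\chi(b)$, completing the proof. There is no real obstacle here: the argument is a one-line Fourier-inversion computation, and the only points needing a moment of care are the bookkeeping $\conj{\epsilon_a}=\epsilon_{-a}$ (so that the product telescopes to $\epsilon_a(x-b)$ rather than something with the wrong sign) and the degenerate case $b=0$, which is handled by the convention $\chi(0)=0$.
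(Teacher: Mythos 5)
Your argument is correct: expanding $G_a(\chi)$ via \eqref{George}, swapping the finite sums, and collapsing $\sum_{a\in F}\epsilon_a(x-b)$ with the orthogonality relation of Lemma \ref{Orestes}\eqref{Sarah} is a valid one-pass computation, and your treatment of the sign bookkeeping ($\conj{\epsilon_a}=\epsilon_{-a}$) and of the degenerate case $b=0$ (empty index set, matching the convention $\chi(0)=0$) is exactly right. The paper proceeds differently in form though not in substance: for $b\neq 0$ it simply cites the inversion formula \cite[eq.~(5.16)]{Lidl-Niederreiter}, and the only work it does is the case $b=0$, where the claim reduces to $\sum_{a\in F}G_a(\chi)=0$; this is checked separately for trivial $\chi$ using Lemma \ref{Gabriel}\eqref{Henry},\eqref{Jerry} and for nontrivial $\chi$ by writing $G_a(\chi)=\conj{\chi}(a)G(\chi)$ (Lemma \ref{Gabriel}\eqref{Lawrence}) and invoking multiplicative orthogonality, Lemma \ref{Orestes}\eqref{Theodore}. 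So the paper's route leans on the cited reference plus Gauss-sum identities and splits into cases on $\chi$, while yours is a self-contained Fourier-inversion calculation using only additive orthogonality that handles all $b$ and all $\chi$ uniformly; the cost is that you reprove the textbook identity, the benefit is that no case analysis or external citation is needed.
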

\begin{proof}
The case where $b\not=0$ is proved in \cite[eq.~(5.16)]{Lidl-Niederreiter}, and the case where $b=0$ is tantamount to proving that $\sum_{a \in F} G_a(\chi)=0$, which follows from Lemma \ref{Gabriel}\eqref{Henry},\eqref{Jerry} if $\chi$ is trivial.  If $\chi$ is nontrivial, then Lemma \ref{Gabriel}\eqref{Lawrence} shows that $\sum_{a \in F} G_a(\chi)=G(\chi) \sum_{a \in F} \conj{\chi}(a)$, which vanishes by Lemma \ref{Orestes}\eqref{Theodore}.
\end{proof}

\section{Proof of Theorem \ref{Victor}}\label{Simon}

In this section, we prove Theorem \ref{Victor} (of which Theorems \ref{Thomas} and \ref{Deidre} are special cases).
Let us recall the basic assumptions in Theorem \ref{Victor}:
\begin{itemize}
\item We have a fixed prime $p$ and a fixed integer $d$.
\item We have an infinite family of pairs of sequences $\{(f_\iota,g_\iota)\}_{\iota \in I}$, where for each $\iota \in I$, the sequence $f_\iota$ is the $p$-ary additive character sequence associated to primitive element $\alpha_\iota$ of field $F_\iota$ with shift $r_\iota$ and length $\ell_\iota$, while $g_\iota$ is the $p$-ary additive character sequence associated to primitive element $\alpha_\iota^d$ of field $\Fi$ with shift $s_\iota$ and length $\ell_\iota$.
\item For every $\iota \in I$, we have $q_\iota=\card{\Fi}$ with $\gcd(d,q_\iota-1)=1$.
\item The set $\{q_\iota\}_{\iota \in I}$ is unbounded, and there is a positive real number $\Lambda$ such that in the limit as $q_\iota \to \infty$, we have $\ell_\iota/(q_\iota-1) \to \Lambda$.
\end{itemize}
In certain subcases of Theorem \ref{Victor}, there are additional assumptions, which will be recalled when we address these cases in turn.

Let $(f,g)$ be a pair of additive character sequences from our family, where $f$ is associated to primitive element $\alpha$ of field $F$ with shift $r$ and length $\ell$, and $g$ is associated to primitive element $\alpha^d$ of field $F$ with shift $s$ and length $\ell$.  Let $q=\card{F}$, and note that $\gcd(d,q-1)=1$.  Let $L=\{0,1,\ldots,\ell-1\}$.
So then $f=(f_0,\ldots,f_{\ell-1})$ and $g=(g_0,\ldots,g_{\ell-1})$ with
\begin{align*}
f_j & = \epsilon(\alpha^{j+r}) \\
g_j & = \epsilon(\alpha^{d(j+s)})
\end{align*}
for every $j \in L$.

We begin by calculating $\normtt{f g}$, the numerator of $\CDF(f,g)$ in \eqref{Melissa}.
From \eqref{Abraham} we have
\[
\normtt{f g}  = \sums{t,u,v,w \in L \\ t+u=v+w} \epsilon(\alpha^{t+r}) \epsilon(\alpha^{d(u+s)}) \conj{\epsilon(\alpha^{v+r}) \epsilon(\alpha^{d(w+s)})}.
\]
We now use Lemma \ref{Stephen} to express the additive character $\epsilon$ in terms of multiplicative characters to see that $\normtt{f g}$ is
\[
\sums{t,u,v,w \in L \\t+u=v+w} \sums{\kappa,\lambda \in \mchars \\ \mu,\nu \in \mchars} \frac{G(\kappa) G(\lambda) \conj{G(\mu) G(\nu)}}{(q-1)^4} \conj{\kappa(\alpha^{t+r}) \lambda(\alpha^{d(u+s)})} \mu(\alpha^{v+r}) \nu(\alpha^{d(w+s)}).
\]
For a given $\xi \in \mchars$, the quadruple $(\kappa \xi^d,\lambda \xi, \mu \xi^d,\nu\xi)$ runs through $(\mchars)^4$ as the quadruple $(\kappa,\lambda,\mu,\nu)$ does, so the inner sum is equal to
\[
\sums{\kappa,\lambda \in \mchars \\ \mu,\nu \in \mchars} \! \frac{G(\kappa\xi^d) G(\lambda\xi) \conj{G(\mu\xi^d) G(\nu\xi)}}{(q-1)^4} \conj{\kappa \xi^d(\alpha^{t+r}) \lambda \xi (\alpha^{d(u+s)})} \mu\xi^d (\alpha^{v+r}) \nu\xi(\alpha^{d(w+s)})
\]
for every $\xi \in \mchars$, and so we can average over all $\xi$ and see that the inner sum is equal to
\[
\sums{\kappa,\lambda,\mu,\nu,\xi \in \mchars} \frac{G(\kappa\xi^d) G(\lambda\xi) \conj{G(\mu\xi^d) G(\nu\xi)}}{(q-1)^5} \conj{\kappa(\alpha^{t+r}) \lambda (\alpha^{d(u+s)})} \mu(\alpha^{v+r}) \nu(\alpha^{d(w+s)}),
\]
where we have used the fact that $\conj{\xi^d(\alpha^{t+r}) \xi(\alpha^{d(u+s)})} \xi^d(\alpha^{v+r}) \xi(\alpha^{d(w+s)})$ is always $1$ because we always have $t+u=v+w$ in the outer sum.  We therefore define
\[
H(\kappa,\lambda,\mu,\nu)=\frac{1}{(q-1)^3} \sum_{\xi \in \mchars} G(\kappa\xi^d) G(\lambda\xi) \conj{G(\mu\xi^d) G(\nu \xi)},
\]
and then
\begin{equation}\label{Herbert}
\normtt{f g} = \!\!\!\! \sums{\kappa,\lambda \in \mchars \\ \mu,\nu \in \mchars} \!\! \frac{H(\kappa,\lambda,\mu,\nu)}{(q-1)^2} \!\!\!\! \sums{t,u,v,w \in L \\ t+u=v+w} \!\!\!\! \conj{\kappa(\alpha^{t+r}) \lambda(\alpha^{d(u+s)})} \mu(\alpha^{v+r}) \nu(\alpha^{d(w+s)}).
\end{equation}

For any $a\in \Z$, let $\delta_{a\sim 1}=1$ if $a=p^k$ for some integer $k$, and $\delta_{a\sim 1}=0$ otherwise.
Since we are computing a limit as $q \to \infty$, we may take $q-1 > |d|$, so then $d$ is congruent to a power of $p$ modulo $q-1$ if and only if $d$ is equal to a power of $p$, and likewise $-d$ is congruent to a power of $p$ modulo $q-1$ if and only if $-d$ is a power of $p$.
Then by Lemma \ref{Edward} below, we may write $H(\kappa,\lambda,\mu,\nu)=M(\kappa,\lambda,\mu,\nu)+E(\kappa,\lambda,\mu,\nu)$, where
\[
M(\kappa,\lambda,\mu,\nu) = \begin{cases}
1 & \text{if $(\kappa,\lambda)=(\mu,\nu)$,} \\
\delta_{d\sim 1} & \text{if $(\kappa,\lambda)\not=(\mu,\nu)$ and $(\kappa,\mu)=(\nu^d,\lambda^d)$,} \\
\delta_{-d\sim 1} \kappa\conj{\mu}(-1) & \text{if $(\kappa,\lambda)\not=(\mu,\nu)$ and $(\kappa,\mu)=(\lambda^d,\nu^d)$,} \\
0 & \text{otherwise,}
\end{cases}
\]
is regarded as the main term, and $E(\kappa,\lambda,\mu,\nu)$ is regarded as the error term with
\[
|E(\kappa,\lambda,\mu,\nu)| \leq |d| q^{3/2}/(q-1)^2.
\]

So, proceeding from \eqref{Herbert}, we see that $\normtt{f g}=M_0+E_0$, where
\begin{align*}
M_0 & = \sums{\kappa,\lambda \in \mchars \\ \mu,\nu \in \mchars} \frac{M(\kappa,\lambda,\mu,\nu)}{(q-1)^2} \sums{t,u,v,w \in L \\ t+u=v+w} \conj{\kappa(\alpha^{t+r}) \lambda(\alpha^{d(u+s)})} \mu(\alpha^{v+r}) \nu(\alpha^{d(w+s)}) \\
E_0 & = \sums{\kappa,\lambda \in \mchars \\ \mu,\nu \in \mchars} \frac{E(\kappa,\lambda,\mu,\nu)}{(q-1)^2} \sums{t,u,v,w \in L \\ t+u=v+w} \conj{\kappa(\alpha^{t+r}) \lambda(\alpha^{d(u+s)})} \mu(\alpha^{v+r}) \nu(\alpha^{d(w+s)}),
\end{align*}
and our bound on $E(\kappa,\lambda,\mu,\nu)$ tells us that
\[
|E_0| \leq \frac{|d| q^{3/2}}{(q-1)^4} \sums{\kappa,\lambda \in \mchars \\ \mu,\nu \in \mchars} \left|\sums{t,u,v,w \in L \\ t+u=v+w} \conj{\kappa(\alpha^{t+r}) \lambda(\alpha^{d(u+s)})} \mu(\alpha^{v+r}) \nu(\alpha^{d(w+s)})\right|,
\]
and so
\begin{equation}\label{Edwin}
|E_0| \leq 64 |d| q^{3/2} (1+\log(q-1))^3 \max\left(1,\frac{\ell}{q-1}\right)^3
\end{equation}
by Lemma \ref{Oliver} below.

The main term $M_0$ can be broken into four terms, $M_1$, $M_2$, $M_3$, and $M_4$, based on the four mutually exclusive cases where $M(\kappa,\lambda,\mu,\nu)$ can be nonzero: (1) $\kappa=\mu=\lambda^d=\nu^d$, (2) $\kappa=\mu\not=\lambda^d=\nu^d$, (3) $\kappa=\nu^d\not=\mu=\lambda^d$, and (4) $\kappa=\lambda^d\not=\mu=\nu^d$.
Then
\[
\normtt{f g} = M_1+M_2+M_3+M_4+E_0,
\]
where
\begin{align*}
M_1 & = \frac{1}{(q-1)^2} \sum_{\kappa \in \mchars} \sums{t,u,v,w \in L \\ t+u=v+w} 1 \\
M_2 & = \frac{1}{(q-1)^2} \sums{\kappa,\lambda, \in \mchars \\ \kappa\not=\lambda^d} \sums{t,u,v,w \in L \\ t+u=v+w} \kappa(\alpha^{v-t}) \lambda(\alpha^{d(w-u)}) \nonumber \\
M_3 & = \frac{\delta_{d\sim 1}}{(q-1)^2} \sums{\kappa,\mu \in \mchars \\ \kappa\not=\mu} \sums{t,u,v,w \in L \\ t+u=v+w} \mu(\alpha^{v-u+r-s}) \kappa(\alpha^{w-t+s-r}) \\
M_4 & = \frac{\delta_{d\sim -1}}{(q-1)^2} \sums{\kappa,\mu \in \mchars \\ \kappa\not=\mu} \sums{t,u,v,w \in L \\ t+u=v+w} \kappa\conj{\mu}(-1) \conj{\kappa(\alpha^{t+u+r+s})} \mu(\alpha^{v+w+r+s}).
\end{align*}
We let $\sigma=0$ if the characteristic $p$ of our field is $2$, and $\sigma=(q-1)/2$ if $p$ is odd, so that $-1=\alpha^\sigma=\alpha^{-\sigma}$ in our field, and so
\[
M_4 = \frac{\delta_{d\sim -1}}{(q-1)^2} \sums{\kappa,\mu \in \mchars \\ \kappa\not=\mu} \sums{t,u,v,w \in L \\ t+u=v+w} \conj{\kappa(\alpha^{t+u+r+s+\sigma})} \mu(\alpha^{v+w+r+s+\sigma}).
\]
Let 
\begin{align*}
A & = \{(t,u,v,w) \in L^4: t+u=v+w\} \\
B & = \{(t,u,v,w) \in A: v-t \equiv 0 \!\! \pmod{q-1}\} \\
C & = \{(t,u,v,w) \in A: w-t \equiv r-s \!\! \pmod{q-1}\} \\
D & = \{(t,u,v,w) \in A: t+u \equiv -(r+s+\sigma) \!\! \pmod{q-1}\},
\end{align*}
so that, by the orthogonality relation in Lemma \ref{Orestes}\eqref{Edith}, we have
\begin{align*}
M_1 & = \frac{|A|}{q-1} \\
M_2 & = |B|-\frac{|A|}{q-1} \\
M_3 & = \delta_{d\sim 1} \left(|C|-\frac{|A|}{q-1}\right) \\
M_4 & = \delta_{-d\sim 1} \left(|D|-\frac{|A|}{q-1}\right),
\end{align*}
and so $M_0=M_1+M_2+M_3+M_4=M_A+M_B+M_C+M_D$, where
\begin{align*}
M_A & = -\frac{(\delta_{d\sim 1}+\delta_{-d\sim 1})|A|}{q-1} \\
M_B & = |B| \\
M_C & = \delta_{d\sim 1} |C| \\
M_D & = \delta_{-d\sim 1} |D|,
\end{align*}
Recall the function $\Omega$ defined in \eqref{Orlando}.
Using the cardinalities of $A$, $B$, $C$, and $D$ computed in Lemmata \ref{Iago}--\ref{Leslie} below, we obtain $\normtt{f g}=M_0+E_0=M_A+M_B+M_C+M_D+E_0$ with
\begin{align*}
M_A & = -\frac{(\delta_{d\sim 1}+\delta_{-d\sim 1})(2\ell^3+\ell)}{3(q-1)} \\
M_B & = \ell^2 \Omega\left(\frac{q-1}{\ell},0\right) \\
M_C & = \delta_{d\sim 1} \ell^2 \Omega\left(\frac{q-1}{\ell},\frac{r-s}{\ell}\right) \\
M_D & = \delta_{-d\sim 1} \ell^2 \Omega\left(\frac{q-1}{\ell},1+\frac{r+s+\sigma-1}{\ell}\right),
\end{align*}
and where $E_0$ is bounded in \eqref{Edwin}.

Now $\normtt{f}=\normtt{g}=\ell$ for our sequences since they are of length $\ell$ with unimodular terms, so $\cnr{f}{g}=\normtt{f g}/\ell^2$.
Thus
\[
\cnrd{f}{g}=N_A+N_B+N_C+N_D+E_1
\]
with
\begin{align*}
N_A & = -\frac{(\delta_{d\sim 1}+\delta_{-d\sim 1})(2\ell^2+1)}{3(q-1)\ell} \\
N_B & = \Omega\left(\frac{q-1}{\ell},0\right) \\
N_C & = \delta_{d\sim 1} \Omega\left(\frac{q-1}{\ell},\frac{r-s}{\ell}\right) \\
N_D & = \delta_{-d\sim 1} \Omega\left(\frac{q-1}{\ell},1+\frac{r+s+\sigma-1}{\ell}\right) \\
|E_1| & \leq \frac{64 |d| q^{3/2} (1+\log(q-1))^3}{\ell^2} \max\left(1,\frac{\ell}{q-1}\right)^3.
\end{align*}
Now that we have estimated $\CDF(f,g)=\cnr{f}{g}$ for a single pair $(f,g)$ from the family $\{(f_\iota,g_\iota)\}_{\iota \in I}$, let us consider the asymptotic behavior of $\cnr{f}{g}$ as $q \to \infty$.
The ratio $\ell/(q-1)$ tends to the positive real number $\Lambda$, so that $N_A \to -2 (\delta_{d\sim 1}+\delta_{-d\sim 1}) \Lambda/3$.  Note that this limiting value of $N_A$ is zero unless $|d|$ is a power of $p$, in which case it becomes $-2\Lambda/3$.
Recall that $\Omega$ is continuous on $\{(x,y): x\not=0\}$, so that $N_B \to \Omega(1/\Lambda,0)$ as $q\to \infty$.
The term $N_C$ vanishes unless $d$ is a power of $p$, in which case we are given the extra assumption that $(r-s)/(q-1) \to \Delta$ as $q\to\infty$, and then $N_C \to \Omega(1/\Lambda,\Delta/\Lambda)$ as $q\to\infty$.
The term $N_D$ vanishes unless $-d$ is a power of $p$, in which case we are given the extra assumption that $(r+s)/(q-1) \to \Sigma$ as $q\to\infty$, and then $N_D \to \Omega(1/\Lambda,1+\Sigma^\prime/\Lambda)$ as $q\to\infty$, where $\Sigma^\prime=\Sigma$ in characteristic $2$ (where $\sigma=0$) and $\Sigma^\prime=\Sigma+1/2$ in odd characteristic (where $\sigma=(q-1)/2$).
The term $E_1 \to 0$ as $q\to\infty$ because $\ell/(q-1)$ tends to a positive real number $\Lambda$ as $q \to\infty$.
This completes the proof of Theorem \ref{Victor}.\hfill\qedsymbol

We conclude this section with the technical lemmata employed in the above proof of Theorem \ref{Victor}.
\begin{lemma}\label{Edward}
Let $F$ be the finite field of order $q$, and let $\kappa,\lambda,\mu,\nu \in \mchars$.  Let $d$ be an integer with $\gcd(d,q-1)=1$.  Let
\[
H=
\frac{1}{(q-1)^3} \sum_{\xi \in \mchars} G(\kappa\xi^d) G(\lambda\xi) \conj{G(\mu\xi^d) G(\nu \xi)}.
\]
\begin{enumerate}[(i).]
\item\label{Patricia} If $d\equiv p^k \pmod{q-1}$ for some $k \in \Z$, then
\[
H = \begin{cases}
1 + E & \text{if $\kappa=\mu$ and $\lambda=\nu$,} \\
1 + E & \text{if $\kappa=\nu^d$ and $\mu=\lambda^d$,} \\
E & \text{otherwise,}
\end{cases}
\]
for some $E \in \C$ with $|E| \leq q^{3/2}/(q-1)^2$.
\item\label{Rebecca} If $d\equiv -p^k \pmod{q-1}$ for some $k \in \Z$, then
\[
H = \begin{cases}
1 + E & \text{if $\kappa=\mu$ and $\lambda=\nu$,} \\
\kappa\conj{\mu}(-1) + E & \text{if $\kappa=\lambda^d$ and $\mu=\nu^d$,} \\
E & \text{otherwise,}
\end{cases}
\]
for some $E \in \C$ with $|E| \leq q^{3/2}/(q-1)^2$.
\item\label{Oscar} If $d\not\equiv p^k \pmod{q-1}$ and $d\not\equiv -p^k \pmod{q-1}$ for every $k \in \Z$, then
\[
H= \begin{cases}
1+E & \text{if $\kappa=\mu$ and $\lambda=\nu$,} \\
E & \text{otherwise,}
\end{cases}
\]
for some $E \in \C$ with $|E| \leq |d| q^{3/2}/(q-1)^2$.
\end{enumerate}
\end{lemma}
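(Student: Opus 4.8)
The plan is to collapse $H$ into a single multiplicative character sum, peel off an obvious main term, and then analyze a residual Jacobi‑type sum whose behavior splits exactly along the three cases. First I would expand each Gauss sum via $G(\chi)=\sum_{x\in\Fu}\epsilon(x)\chi(x)$ and bring the sum over $\xi$ inside, so that
\[
H=\frac{1}{(q-1)^3}\sum_{\xi\in\mchars}\;\sum_{x,y,z,w\in\Fu}\epsilon(x+y-z-w)\,\kappa(x)\lambda(y)\conj\mu(z)\conj\nu(w)\,\xi(x^d y z^{-d} w^{-1}).
\]
Applying the orthogonality relation of Lemma \ref{Orestes}\eqref{Edith} to $\sum_{\xi}\xi(x^d y z^{-d}w^{-1})$ forces the condition $x^d y=z^d w$ and removes the $\xi$‑sum, giving
\[
H=\frac{1}{(q-1)^2}\sums{x,y,z,w\in\Fu\\ x^d y=z^d w}\epsilon(x+y-z-w)\,\kappa(x)\lambda(y)\conj\mu(z)\conj\nu(w).
\]

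Next I would exploit $\gcd(d,q-1)=1$ to parametrize the solution set: solve $y=z^d w x^{-d}$, substitute $z=xv$ with $v\in\Fu$, and carry out the (now unconstrained) sums over $x$ and $w$, each of which is a Gauss sum in the notation \eqref{George}. The result is
\[
H=\frac{1}{(q-1)^2}\sum_{v\in\Fu}\lambda^d\conj\mu(v)\,G_{1-v}(\kappa\conj\mu)\,G_{v^d-1}(\lambda\conj\nu).
\]
Because $\gcd(d,q-1)=1$, the only $v\in\Fu$ with $v^d=1$ is $v=1$, so for $v\neq1$ both subscripts are nonzero. The $v=1$ term equals $G_0(\kappa\conj\mu)G_0(\lambda\conj\nu)/(q-1)^2$, which by Lemma \ref{Gabriel}\eqref{Henry},\eqref{Irene} is $1$ precisely when $\kappa=\mu$ and $\lambda=\nu$, and $0$ otherwise; this furnishes the ``$1$'' main term appearing in all three parts.

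For $v\neq1$ I would rewrite the two Gauss sums using Lemma \ref{Gabriel}\eqref{Katherine},\eqref{Lawrence}: $G_{1-v}(\kappa\conj\mu)$ is $-1$ if $\kappa=\mu$ and $\conj{\kappa\conj\mu}(1-v)G(\kappa\conj\mu)$ (with $|G(\kappa\conj\mu)|=\sqrt q$) otherwise, and similarly for $G_{v^d-1}(\lambda\conj\nu)$. This reduces the $v\neq1$ contribution to a constant of modulus at most $q$ times a character sum of the form $\sum_v\chi_0(v)\chi_1(1-v)\chi_2(v^d-1)$. When $d\equiv\pm p^k\pmod{q-1}$, I would apply the Frobenius identity $v^{p^k}-1=(v-1)^{p^k}$ in $F$ to collapse $\chi_2(v^d-1)$ into $\chi_2^{\pm d}(v-1)$, up to a monomial when the sign is negative; the character sum then becomes a genuine Jacobi sum, which is $O(\sqrt q)$ unless one of the participating characters is trivial, and the triviality conditions work out to ``$\mu=\lambda^d$ and $\kappa=\nu^d$'' (when $d$ is a power of $p$) or ``$\kappa=\lambda^d$ and $\mu=\nu^d$'' (when $-d$ is), in which case the summand is itself trivial and the sum equals $q-2$, producing the second main term. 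Tracking the $\chi(-1)$‑factor from Lemma \ref{Gabriel}\eqref{Manuel} together with the parity of $d$ (odd whenever $p$ is odd, while $-1=1$ when $p=2$) makes this second main term evaluate to exactly $1$ in part \eqref{Patricia} and to $\kappa\conj\mu(-1)$ in part \eqref{Rebecca}, with the leftover bounded by $q^{3/2}/(q-1)^2$. When $|d|$ is not a power of $p$, the polynomial $v^d-1$ (or $1-v^{|d|}$ if $d<0$) has a root outside $\{0,1\}$ that lies in no proper... rather, lies outside $F$ itself, since the number of $d$‑th roots of unity in $F$ is $\gcd(d,q-1)=1$; I would argue this root cannot be matched by the factors $\chi_0(v)$ or $\chi_1(1-v)$, so the character combination is never a perfect power of a rational function, and the Weil bound (\cite[Theorem 5.41]{Lidl-Niederreiter} and its product version) bounds the character sum by $O(|d|\sqrt q)$, yielding the error bound $|d|q^{3/2}/(q-1)^2$ and no second main term.

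I expect the principal obstacle to be the non‑degeneracy verification underpinning the Weil estimate in part \eqref{Oscar}: one must show that for \emph{every} admissible quadruple $(\kappa,\lambda,\mu,\nu)$ the combination $v\mapsto\chi_0(v)\chi_1(1-v)\chi_2(v^d-1)$ fails to be of the form $\psi(h(v)^n)$ whenever $|d|$ is not a power of $p$, which requires a careful look at the divisor of the associated rational function and at the Galois action on the roots of $v^d-1$ (and a minor separate treatment when $p\mid d$, writing $v^{|d|}-1=(v^m-1)^{p^j}$). A secondary annoyance will be the precise bookkeeping of the $\chi(-1)$ and parity factors needed to identify the second main terms as exactly $1$ and $\kappa\conj\mu(-1)$ rather than merely as unimodular constants, and confirming that the stated error bounds remain valid for the smallest fields.
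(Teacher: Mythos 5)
Your proposal is correct and follows essentially the same route as the paper's proof: expand the Gauss sums, use orthogonality in $\xi$ to impose the constraint $x^d y = z^d w$, reparametrize to collapse $H$ into a single sum $\frac{1}{(q-1)^2}\sum_{v}\lambda^d\conj{\mu}(v)\,G_{1-v}(\kappa\conj{\mu})\,G_{v^d-1}(\lambda\conj{\nu})$ (your substitution is just the $v=1/u$ mirror of the paper's), read off the diagonal main term from $v=1$, use $(1-v)^{p^k}=1-v^{p^k}$ to reduce cases \eqref{Patricia} and \eqref{Rebecca} to Jacobi-type sums whose degenerate (all-characters-trivial) case yields the second main term with exactly the sign bookkeeping you describe, and invoke the Weil bound with at most $|d|$ distinct roots in case \eqref{Oscar}, where the nondegeneracy argument you sketch (the roots of $v^d-1$ other than $1$ carry only the character $\conj{\lambda}\nu$) is precisely what the paper uses. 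The only differences are cosmetic phrasing, so no further comparison is needed.
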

\begin{proof}
We write out the Gauss sums according to their definition \eqref{Gerald} to see that
\[
H=\frac{1}{(q-1)^3} \sum_{\xi \in \mchars} \sum_{w,x,y,z \in \Fu} \epsilon(w+x-y-z) \kappa\xi^d(w) \lambda\xi(x) \conj{\mu\xi^d(y) \nu\xi(z)},
\]
and when we sum $\xi^d(w) \xi(x) \conj{\xi^d(y) \xi(z)}$ over $\xi$, Lemma \ref{Orestes}\eqref{Edith} tells us that we obtain $q-1$ for terms with $w^d x=y^d z$, and zero for all other terms, so 
\[
H=\frac{1}{(q-1)^2} \sums{w,x,y,z \in \Fu \\ w^d x = y^d z} \epsilon(w+x-y-z) \kappa(w) \lambda(x) \conj{\mu(y) \nu(z)}.
\]
Now we reparameterize with $w=u y$ and $z=u^d x$ to obtain
\begin{align*}
H & = \frac{1}{(q-1)^2} \sum_{u,x,y \in \Fu} \epsilon((u-1)y+(1-u^d) x) \kappa\conj{\nu}^d(u) \kappa\conj{\mu}(y) \lambda\conj{\nu}(x) \\
& = \frac{1}{(q-1)^2} \sum_{u \in \Fu} G_{u-1}(\kappa\conj{\mu}) G_{1-u^d}(\lambda\conj{\nu}) \kappa\conj{\nu}^d(u),
\end{align*}
where we have used the definition of Gauss sums \eqref{George} in the second equality.
When we use Lemma \ref{Gabriel}\eqref{Henry},\eqref{Irene},\eqref{Lawrence}, and let $\delta$ be the Kronecker delta, we obtain
\begin{equation}\label{Ronald}
H = \delta_{\kappa,\mu} \delta_{\lambda,\nu} + \frac{G(\kappa\conj{\mu}) G(\lambda\conj{\nu})}{(q-1)^2} \sums{u \in \Fu \\u\not=1} \conj{\kappa}\mu(u-1) \conj{\lambda}\nu(1-u^d) \kappa\conj{\nu}^d(u).
\end{equation}

To prove case \eqref{Oscar}, where $|d|$ is not a power of $p$ modulo $q-1$, we finish by bounding the magnitude of the second term of \eqref{Ronald}, namely,
\[
S=\frac{G(\kappa\conj{\mu}) G(\lambda\conj{\nu})}{(q-1)^2} \sums{u \in \Fu \\u\not=1} \conj{\kappa}\mu(u-1) \conj{\lambda}\nu(1-u^d) \kappa\conj{\nu}^d(u).
\]
Note that $S=0$ if $q=2$, so we assume that $q>2$ henceforth, and then the condition $\gcd(d,q-1)=1$ forces $d\not=0$.
If $d < 0$, the second term of \eqref{Ronald} can be rewritten as
\[
S=\frac{G(\kappa\conj{\mu}) G(\lambda\conj{\nu})}{(q-1)^2} \sums{u \in \Fu \\u\not=1} \conj{\kappa}\mu(u-1) \conj{\lambda}\nu(u^{-d}-1) \kappa\conj{\lambda}^d(u).
\]
Then note that the polynomial $1-u^d$ (if $d >0$) or $u^{-d}-1$ (if $d < 0$) has at most $|d|$ roots in the algebraic closure of $F$, one of which is $1$, but the rest are not roots of $u$ or $u-1$.  Thus the Weil bound \cite{Weil} shows that the sum over $u$ in $S$ is at most $|d| \sqrt{q}$ unless $\lambda=\nu$, $\kappa=\mu$, and $\kappa=\nu^d$, in which case the sum over $u$ is $q-2$.  If the sum over $u$ is no greater than $|d|\sqrt{q}$, we use Lemma \ref{Gabriel}\eqref{Jerry},\eqref{Katherine} to see that $|S|\leq |d| q^{3/2}/(q-1)^2$, but if $\lambda=\nu$ and $\kappa=\mu$, then $G(\kappa\conj{\mu}) G(\lambda\conj{\nu})=1$ by Lemma \ref{Gabriel}\eqref{Jerry}, so that $|S| \leq (q-2)/(q-1)^2$.

In case \eqref{Patricia}, we have $d \equiv p^k \pmod{q-1}$ for some $k \in \Z$, so that $(1-u^d)=(1-u)^d$ for $u\in \Fu$ with $u\not=1$.  Thus \eqref{Ronald} becomes
\begin{equation}\label{Jason}
H = \delta_{\kappa,\mu} \delta_{\lambda,\nu} + \frac{\conj{\lambda}\nu(-1) G(\kappa\conj{\mu}) G(\lambda\conj{\nu})}{(q-1)^2} \sums{u \in \Fu \\u\not=1} \conj{\kappa}\mu\conj{\lambda}^d\nu^d(u-1) \kappa\conj{\nu}^d(u),
\end{equation}
and by the Weil bound \cite{Weil}, the sum over $u$ has magnitude no greater than $\sqrt{q}$ unless $(\kappa,\mu)=(\nu^d,\lambda^d)$.  This and Lemma \ref{Gabriel}\eqref{Jerry},\eqref{Katherine} ensure that $|H| \leq q^{3/2}/(q-1)^2$ when we have neither $(\kappa,\lambda)=(\mu,\nu)$ nor $(\kappa,\mu)=(\nu^d,\lambda^d)$.  If $(\kappa,\lambda)=(\mu,\nu)$, then Lemma \ref{Gabriel}\eqref{Jerry} and a trivial bound on the sum over $u$ in \eqref{Jason} show that $H=1+E$ for some $E$ with $|E| \leq (q-2)/(q-1)^2$.  If $(\kappa,\mu)=(\nu^d,\lambda^d)$ but $(\kappa,\lambda)\not=(\mu,\nu)$, then Lemma \ref{Gabriel}\eqref{Nelson},\eqref{Katherine} shows that the product of Gauss sums in \eqref{Jason} is $\lambda\conj{\nu}(-1) q$, so that we have $H=q(q-2)/(q-1)^2=1-1/(q-1)^2$.

In case \eqref{Rebecca}, we have $d \equiv -p^k \pmod{q-1}$ for some $k \in \Z$, so that $(1-u^d)=(1-u^{-1})^{-d}$ for $u \in \Fu$ with $u\not=1$.
Thus \eqref{Ronald} becomes
\begin{equation}\label{Edgar}
H = \delta_{\kappa,\mu} \delta_{\lambda,\nu} + \frac{G(\kappa\conj{\mu}) G(\lambda\conj{\nu})}{(q-1)^2} \sums{u \in \Fu \\u\not=1} \conj{\kappa}\lambda^d\mu\conj{\nu}^d(u-1) \kappa\conj{\lambda}^d(u),
\end{equation}
and by the Weil bound \cite{Weil}, the sum over $u$ has magnitude no greater than $\sqrt{q}$ unless $(\kappa,\mu)=(\lambda^d,\nu^d)$.  This and Lemma \ref{Gabriel}\eqref{Jerry},\eqref{Katherine} ensure that $|H| \leq q^{3/2}/(q-1)^2$ when we have neither $(\kappa,\lambda)=(\mu,\nu)$ nor $(\kappa,\mu)=(\lambda^d,\nu^d)$.  If $(\kappa,\lambda)=(\mu,\nu)$, then Lemma \ref{Gabriel}\eqref{Jerry} and a trivial bound on the sum over $u$ in \eqref{Edgar} show that $H=1+E$ for some $E$ with $|E| \leq (q-2)/(q-1)^2$.  If $(\kappa,\mu)=(\lambda^d,\nu^d)$ but $(\kappa,\lambda)\not=(\mu,\nu)$, then Lemma \ref{Gabriel}\eqref{Nelson},\eqref{Katherine} shows that the product of Gauss sums in \eqref{Edgar} is $\lambda\conj{\nu}(-1) q$, so that we have $H=\lambda\conj{\nu}(-1) q(q-2)/(q-1)^2=\lambda\conj{\nu}(-1) [1-1/(q-1)^2]$.  We have $(-1)^d=-1$ (trivially in characteristic $2$, but also in odd characteristic since $\gcd(d,q-1)=1$ forces $d$ to be odd), so that $\lambda\conj{\nu}(-1)=\lambda\conj{\nu}((-1)^d)=\kappa\conj{\mu}(-1)$ when $(\kappa,\mu)=(\lambda^d,\nu^d)$.
\end{proof}

\begin{lemma}\label{Oliver}
Let $F$ be the finite field of order $q$ and let $\alpha$ be a primitive element of $F$.  Let $a,b,c,d, h,i,j,k,\ell \in \Z$ with $\gcd(a,q-1)=\gcd(b,q-1)=\gcd(c,q-1)=\gcd(d,q-1)=1$ and $\ell>0$, and let $L=\{0,1,\ldots,\ell-1\}$.  Then the sum
\[
S=\sums{\kappa,\lambda \in \mchars \\ \mu,\nu \in \mchars} \left|\sums{t,u,v,w \in L \\ t+u=v+w} \conj{\kappa(\alpha^{a(t+h)}) \lambda(\alpha^{b(u+i)})} \mu(\alpha^{c(v+j)}) \nu(\alpha^{d(w+k)})\right|
\]
satisfies $|S| \leq 64 (q-1)^4 (1+\log(q-1))^3 \max(1,\ell/(q-1))^3$.
\end{lemma}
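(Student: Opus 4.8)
The plan is to perform a Fourier analysis of the inner sum over the cyclic group $\Z/(q-1)$. First I would identify $\mchars$ with $\Z/(q-1)$ via the primitive element $\alpha$, writing a character $\chi$ as $\chi(\alpha^t)=e_{q-1}(k_\chi t)$ where $e_N(x):=e^{2\pi i x/N}$; then the inner sum of the lemma becomes a sum of the additive characters $e_{q-1}$ over the set $A=\{(t,u,v,w)\in L^4:t+u=v+w\}$, with exponent linear in $(t,u,v,w)$. I would parametrize $A$ by $(m,t,v)$, where $m=t+u=v+w\in\{0,\dots,2\ell-2\}$, $u=m-t$, $w=m-v$, and both $t$ and $v$ run over the interval $I_m=L\cap(m-L)$. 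After this substitution the exponent becomes $tP+vQ+mR$ plus an $m$-free constant, where $P,Q,R\in\Z/(q-1)$ are explicit $\Z/(q-1)$-linear combinations of $k_\kappa,k_\lambda,k_\mu,k_\nu$ with coefficients built from $a,b,c,d$. Hence the inner sum equals a unit-modulus constant times $\Phi(P,Q,R):=\sum_m e_{q-1}(mR)\,c_m(P)\,c_m(Q)$, where $c_m(P):=\sum_{t\in I_m}e_{q-1}(tP)$.

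Second, I would reduce the four-fold character sum to a triple sum over $(P,Q,R)$. Since $\gcd(a,q-1)=\gcd(b,q-1)=\gcd(c,q-1)=1$, the map $(k_\kappa,k_\lambda,k_\mu,k_\nu)\mapsto(P,Q,R)$ from $(\Z/(q-1))^4$ onto $(\Z/(q-1))^3$ is surjective with every fiber of cardinality exactly $q-1$ — one of its $3\times 3$ minors equals $\pm abc$, a unit modulo $q-1$ — so $S=(q-1)\sum_{P,Q,R}|\Phi(P,Q,R)|$. Third, I would make $c_m(P)$ explicit. On the left range $0\le m\le\ell-1$ one has $I_m=\{0,\dots,m\}$ and on the right range $\ell\le m\le 2\ell-2$ one has $I_m=\{m-\ell+1,\dots,\ell-1\}$; hence on each range, $c_m(P)$ for $P\ne0$ is a combination of the two exponentials $e_{q-1}(mP')$, $P'\in\{0,P\}$, with coefficients of modulus at most $|e_{q-1}(P)-1|^{-1}$, while $c_m(0)$ is the affine function $m+1$, resp. $2\ell-1-m$. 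Expanding the product $c_m(P)c_m(Q)$ on each range and summing it against $e_{q-1}(mR)$ thus expresses $\Phi(P,Q,R)$ as a bounded number of pieces, each assembled from a plain geometric sum $\sum_m e_{q-1}(mX)$, a linearly weighted one $\sum_m(m+1)e_{q-1}(mX)$, or (only when $P=Q=0$) a quadratically weighted one $\sum_m(m+1)^2e_{q-1}(mR)$, with $X\in\{R,R+P,R+Q,R+P+Q\}$. Routine summation by parts, together with $|e_{q-1}(j)-1|=2|\sin(\pi j/(q-1))|$ and $|\sin(\pi j/(q-1))|\ge 2j/(q-1)$ for $0\le j\le (q-1)/2$, bounds these by $\min(\ell,\csc_X)$, $\min(\ell^2,2\ell\,\csc_X)$ and $\min(\ell^3,2\ell^2\csc_R)$ respectively, where $\csc_X:=|\sin(\pi X/(q-1))|^{-1}$ and $\csc_0:=\ell$.

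Finally I would estimate $\sum_{P,Q,R}|\Phi(P,Q,R)|$ by splitting into four cases according to which of $P,Q$ vanish. In each case the bounds from the third step make the triple sum factor — here one uses that $R\mapsto R+X$ is a bijection of $\Z/(q-1)$, so the $\csc$-weights separate — and one invokes the elementary estimates $\sum_{r=1}^{q-2}\csc(\pi r/(q-1))\le 2(q-1)(1+\log(q-1))$, $\sum_{r=1}^{q-2}\csc^2(\pi r/(q-1))=\bigl((q-1)^2-1\bigr)/3$, and $\sum_r\min(\ell,\csc(\pi r/(q-1)))\le 4(q-1)(1+\log(q-1))\max(1,\ell/(q-1))$, plus their weighted variants. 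The case $P=Q=0$ contributes $O\bigl((q-1)\ell^3+\ell^2(q-1)^2(1+\log(q-1))\bigr)$ to $S$, each mixed case $O\bigl((q-1)^2\ell^2(1+\log(q-1))+\ell(q-1)^3(1+\log(q-1))^2\bigr)$, and the case $P,Q\ne0$ at most $O\bigl((q-1)^4(1+\log(q-1))^3\bigr)\max(1,\ell/(q-1))$; each of these is bounded by $C(q-1)^4(1+\log(q-1))^3\max(1,\ell/(q-1))^3$, and collecting the constants gives the claimed bound.

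The hard part will be the last step in the cases where $P$ or $Q$ vanishes. The ``diagonal'' factor $c_m(0)=|I_m|$ is not controlled by $\csc$ and grows linearly in $m$, so pulling the absolute value through the sum over $m$ naively costs a whole extra factor of $q-1$ — it would give $(q-1)^5$ in place of $(q-1)^4$ when $\ell\asymp q-1$. The remedy is to keep the oscillating factor $e_{q-1}(mR)$ and exploit that $|I_m|$ and $|I_m|^2$ are piecewise affine, resp. piecewise quadratic, in $m$: a single summation by parts against $e_{q-1}(mR)$ then recovers \emph{one} power of $\csc_R$, and since $\sum_{r\ne0}\csc_r=O\bigl((q-1)\log(q-1)\bigr)$ rather than $O\bigl((q-1)^2\bigr)$, that is exactly the amount of cancellation needed. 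Obtaining these summation-by-parts bounds with the correct exponent one (not two) on $\csc$, and then shepherding the many constants below the claimed value, is the only genuinely delicate part; the remainder is bookkeeping.
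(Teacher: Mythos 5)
Your proposal is correct, but it proves the lemma by a genuinely different (and much longer) route than the paper. The paper's own proof is a two-step reduction: the factors coming from $h,i,j,k$ are unimodular constants that disappear inside the absolute value, and since $\gcd(a,q-1)=\dots=\gcd(d,q-1)=1$ the substitutions $\kappa\mapsto\kappa^a$, $\lambda\mapsto\lambda^b$, $\mu\mapsto\mu^c$, $\nu\mapsto\nu^d$ merely permute $\mchars$, so $S$ equals the normalized sum with $a=b=c=d=1$ and $h=i=j=k=0$; that normalized sum is then bounded by citing Lemma A.1 of \cite{Katz-2013} (applied with $\Gamma=\Fu$, $n=1$, $\pi_1(j)=\alpha^j$, $U=L$), which is where the constant $64$ and the factor $(1+\log(q-1))^3\max(1,\ell/(q-1))^3$ come from. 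You instead re-prove that external lemma from scratch in the cyclic case: identifying $\mchars$ with $\Z/(q-1)$, parametrizing $\{t+u=v+w\}$ by $(m,t,v)$, using the unit $3\times 3$ minor ($\pm abc$) to collapse the four-fold character sum to $(q-1)\sum_{P,Q,R}|\Phi(P,Q,R)|$, and then running geometric-series and summation-by-parts estimates with the standard cosecant sums. I checked the structure of your case analysis ($P,Q\neq 0$; one of them zero; $P=Q=0$), and the single summation by parts against $e_{q-1}(mR)$ in the degenerate cases does recover the one power of $\csc_R$ needed to avoid the spurious factor of $q-1$, so there is no conceptual gap; a rough tally even suggests your constants land comfortably below $64$ (your main case contributes only $\max(1,\ell/(q-1))^1$, which suffices since the target exponent is $3$). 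What the paper's route buys is brevity and generality (the cited lemma covers products of groups and more general index maps); what your route buys is a self-contained, elementary proof with explicit constants — at the price of the final bookkeeping you flag, which is asserted rather than carried out and would need to be completed to claim the stated constant.
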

\begin{proof}
We have
\begin{align*}
S
& = \sums{\kappa,\lambda \in \mchars \\ \mu,\nu \in \mchars} \left|\sums{t,u,v,w \in L \\ t+u=v+w} \conj{\kappa^a(\alpha^t) \lambda^b(\alpha^u)} \mu^c(\alpha^v) \nu^d(\alpha^w)\right| \\
& = \sums{\kappa,\lambda \in \mchars \\ \mu,\nu \in \mchars} \left|\sums{t,u,v,w \in L \\ t+u=v+w} \conj{\kappa(\alpha^t) \lambda(\alpha^u)} \mu(\alpha^v) \nu(\alpha^w)\right|,
\end{align*}
where the first equality follows because $\conj{\kappa(\alpha^{a h})\lambda(\alpha^{b i})}\mu(\alpha^{c j})\nu(\alpha^{d k})$ is just a unimodular complex number, and the second equality follows because $\kappa^a$ runs through $\mchars$ as $\kappa$ runs through $\mchars$ since $\gcd(a,q-1)=1$, and similarly with $\lambda^b$, $\mu^c$, and $\nu^d$.
Now we employ \cite[Lemma A.1]{Katz-2013} (using $\Gamma=\Fu$, $n=1$, $\pi_1(j)=\alpha^j$, $U=L=\{0,1,\ldots,\ell-1\}$) to see that the last expression for $S$ is bounded above by $64 (q-1)^4 (1+\log(q-1))^3 \max(1,\ell/(q-1))^3$.
\end{proof}

\begin{lemma}\label{Iago}
Let $S$ be a finite set of consecutive integers.
Let $I$ be the set $\{(t,u,v,w) \in S^4: t+u=v+w\}$.
Then $\card{I}=\frac{2 |S|^3+|S|}{6}$.
\end{lemma}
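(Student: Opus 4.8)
<br>

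The plan is to count the integer quadruples $(t,u,v,w)\in S^4$ satisfying $t+u=v+w$ by classifying them according to the common value of the sum. Without loss of generality I would take $S=\{0,1,\ldots,n-1\}$ where $n=|S|$, since a translation of $S$ by a constant does not change whether $t+u=v+w$ holds (both sides shift by twice the constant). Then for an integer $m$, let $P(m)$ denote the number of pairs $(t,u)\in S^2$ with $t+u=m$; the quantity we want is $\sum_{m} P(m)^2$, since a quadruple in $I$ is exactly a choice of a pair summing to $m$ for the $(t,u)$ part and another pair summing to the same $m$ for the $(v,w)$ part.

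The key computational step is to evaluate $P(m)$ explicitly. For $S=\{0,\ldots,n-1\}$, the sum $t+u$ ranges over $\{0,1,\ldots,2n-2\}$, and $P(m)$ counts lattice points: $P(m)=\#\{t: 0\le t\le n-1,\ 0\le m-t\le n-1\}$, which works out to $P(m)=m+1$ for $0\le m\le n-1$ and $P(m)=2n-1-m$ for $n-1\le m\le 2n-2$ (these agree at $m=n-1$, giving $P(n-1)=n$). So the multiset of values $\{P(m)\}$ is $1,2,\ldots,n-1,n,n-1,\ldots,2,1$. Therefore
\[
\card{I}=\sum_{m=0}^{2n-2} P(m)^2 = \left(\sum_{k=1}^{n-1} k^2\right) + n^2 + \left(\sum_{k=1}^{n-1} k^2\right) = n^2 + 2\sum_{k=1}^{n-1} k^2.
\]
Then I would substitute the standard formula $\sum_{k=1}^{n-1} k^2 = \frac{(n-1)n(2n-1)}{6}$ and simplify: $n^2 + \frac{(n-1)n(2n-1)}{3} = \frac{3n^2 + (n-1)n(2n-1)}{3} = \frac{3n^2 + 2n^3 - 3n^2 + n}{3} = \frac{2n^3+n}{3} = \frac{2n^3+n}{3}$, which equals $\frac{2|S|^3+|S|}{6}\cdot 2 \cdot \tfrac12$; rechecking, $\frac{2n^3+n}{3} = \frac{4n^3+2n}{6}$, so I must be off by a factor — I would recompute carefully, noting that the claimed answer $\frac{2|S|^3+|S|}{6}$ suggests $\card{I}=\frac{2n^3+n}{6}$, consistent with summing only once rather than twice, so in fact one of the two $\sum k^2$ blocks must be absorbed differently; the correct bookkeeping is $\card{I}=\sum_{m} P(m)^2$ where the central term is counted once, giving $n^2 + 2\sum_{k=1}^{n-1}k^2$, and indeed $n^2 + 2\cdot\frac{(n-1)n(2n-1)}{6} = n^2 + \frac{(n-1)n(2n-1)}{3}$. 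Let me just present this honestly: the arithmetic yields $\frac{2n^3+n}{3}$, and I would reconcile this with the stated $\frac{2|S|^3+|S|}{6}$ during write-up — most likely the paper's $I$ or normalization differs by a factor of two from my count, or I have double-counted, and I would track down the discrepancy before finalizing.

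The main obstacle here is purely the careful arithmetic reconciliation at the end: the combinatorial structure is elementary and the classification-by-sum argument is completely standard, so the only real risk is a factor-of-two or off-by-one slip in matching my expression to the claimed closed form $\frac{2|S|^3+|S|}{6}$. I would resolve this by checking small cases directly — for $|S|=1$ the formula gives $\frac{3}{6}=\frac12$, which is not an integer, strongly suggesting the intended formula should read $\frac{2|S|^3+|S|\cdot 4}{6}$ or that $\card{I}$ for $|S|=1$ is $1$ (the quadruple $(0,0,0,0)$), so $\frac{2\cdot 1 + ?}{6}=1$ forces $?=4$; thus I suspect the statement should be $\card{I}=\frac{2|S|^3+4|S|}{6}=\frac{|S|^3+2|S|}{3}$, and I would verify against $|S|=2$ (where a direct enumeration gives $6$ quadruples: $(0,0,0,0),(1,1,1,1),(0,1,0,1),(0,1,1,0),(1,0,0,1),(1,0,1,0)$) — indeed $\frac{8+4}{3}=4\ne 6$, so neither matches, meaning I need to recount $|S|=2$: sums are $0$ (one pair: $(0,0)$), $1$ (two pairs: $(0,1),(1,0)$), $2$ (one pair: $(1,1)$), giving $1^2+2^2+1^2=6$, and $\frac{2\cdot 8+2}{3}=6$, confirming $\card{I}=\frac{2|S|^3+|S|}{3}$; so the paper's denominator $6$ in "$\frac{2|S|^3+|S|}{6}$" appears to be a typo for $3$, or equivalently the numerator should be doubled. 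I would state and prove $\card{I}=\frac{2|S|^3+|S|}{3}$, which is what the classification argument cleanly delivers, and flag the apparent typo.
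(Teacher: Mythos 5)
Your count is correct and, in essence, is the paper's own argument: the paper classifies quadruples by the common difference $t-v=w-u$ and computes $\sum_{n\in\Z}\max(0,|S|-|n|)^2$, which is exactly the same sum of squared fiber sizes you get by classifying by the common sum, and both evaluate to $(2|S|^3+|S|)/3$. You are also right that the denominator $6$ in the statement is a typo: the paper's own proof derives $(2|S|^3+|S|)/3$, and that is the value actually used later (e.g.\ in the term $M_A=-(\delta_{d\sim 1}+\delta_{-d\sim 1})(2\ell^3+\ell)/(3(q-1))$ in the proof of Theorem \ref{Victor}), so your small-case checks and your corrected formula are the intended ones.
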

\begin{proof}
For each $n\in\Z$, the number of pairs $(t,v) \in S^2$ that make $t-v=n$ is $\max(0,|S|-|n|)$, so that total number of ways to choose $(t,u,v,w) \in S^4$ with $t-v=w-u$ is $\sum_{n \in \Z} \max(0,|S|-|n|)^2$, which is $1^2+2^2+\cdots+(|S|-1)^2+|S|^2+(|S|-1)^2+\cdots+2^2+1^2=(2|S|^3+|S|)/3$.
\end{proof}

\begin{lemma}
Let $S$ be a finite set of consecutive integers.
Let $m$ be a positive integer, let $a\in \Z$, and let $I=\{(t,u,v,w) \in S^4: t+u=v+w, v-t \equiv a \!\! \pmod{m}\}$ and $J=\{(t,u,v,w) \in S^4: t+u=v+w, w-t \equiv a \!\! \pmod{m}\}$.
Then $\card{I}=\card{J}=|S|^2 \Omega\left(\frac{m}{|S|},\frac{a}{|S|}\right)$.
\end{lemma}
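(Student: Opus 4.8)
The plan is to reduce each of the two counts to a single one-dimensional sum and then recognize that sum as $|S|^2 \Omega(m/|S|,a/|S|)$. Write $\ell = |S|$, and record the elementary fact that for any $k \in \Z$ the number of pairs $(x,y) \in S^2$ with $y - x = k$ equals $|S \cap (S+k)| = \max(0,\ell-|k|)$, since $S$ is a block of $\ell$ consecutive integers; call this quantity $N(k)$. Note $N(k) = N(-k)$.

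First I would handle $I$. The condition $t+u = v+w$ is equivalent to $u - w = v - t$, so if we set $k = v - t$ the defining constraints of $I$ become: $v - t = k$, $u - w = k$, and $k \equiv a \pmod m$. For a fixed $k$ the pair $(t,v)$ contributing $v - t = k$ can be chosen in $N(k)$ ways, and independently the pair $(u,w)$ with $u - w = k$ can be chosen in $N(k)$ ways; hence
\[
|I| = \sum_{\substack{k \in \Z \\ k \equiv a \pmod m}} N(k)^2 = \sum_{n \in \Z} \max\bigl(0,\ell - |a + nm|\bigr)^2,
\]
where in the last step I reparameterize the residue class by $k = a + nm$, $n \in \Z$. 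For $J$ the same manipulation works with the roles rearranged: $t+u=v+w$ is equivalent to $u - v = w - t$, so setting $k = w - t$ the constraints become $w - t = k$, $u - v = k$, $k \equiv a \pmod m$, and the same independent count over $(t,w)$ and $(u,v)$ gives $|J| = \sum_{k \equiv a \,(m)} N(k)^2 = |I|$.

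It then remains to match $\sum_{n \in \Z}\max(0,\ell-|a+nm|)^2$ with $\ell^2\Omega(m/\ell,a/\ell)$. Directly from the definition in \eqref{Orlando},
\[
\Omega\!\left(\frac{m}{\ell},\frac{a}{\ell}\right) = \sum_{n \in \Z}\max\!\left(0,\,1 - \frac{|nm - a|}{\ell}\right)^2 = \frac{1}{\ell^2}\sum_{n \in \Z}\max\bigl(0,\ell - |nm - a|\bigr)^2,
\]
and applying the substitution $n \mapsto -n$ inside this absolutely convergent (in fact finite) sum turns $|nm - a|$ into $|nm + a|$, so the right-hand side equals $\ell^{-2}|I|$. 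This gives $|I| = |J| = \ell^2\Omega(m/\ell,a/\ell)$, as claimed.

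\textbf{Anticipated obstacle.} There is no real obstacle: the argument is a routine bijective count. The only points requiring a little care are (i) writing the constraint $t+u=v+w$ in the right equivalent form for each of $I$ and $J$ so that the two pairs of variables decouple, and (ii) keeping track of the sign/indexing conventions when matching the residue-class sum to the defining series for $\Omega$, which is why the $n \mapsto -n$ symmetry is invoked explicitly.
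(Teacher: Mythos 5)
Your proof is correct and follows essentially the same route as the paper's: count pairs with a prescribed difference ($\max(0,|S|-|k|)$ of them), sum the squares over the residue class $k\equiv a\pmod m$, and identify the result with $|S|^2\Omega(m/|S|,a/|S|)$ via the symmetry $n\mapsto -n$ (the paper phrases this as $\Omega(x,-y)=\Omega(x,y)$, and gets $|I|=|J|$ by swapping $v$ and $w$ rather than recomputing, but these are cosmetic differences).
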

\begin{proof}
By exchanging the roles of $v$ and $w$, we see that $\card{I}=\card{J}$, so we simply compute $\card{I}$.
For each $n \in \Z$, the number of pairs $(t,v) \in S^2$ with $v-t=a+n m$ is $\max(0,|S|-|a+n m|)$, so that the number of ways to choose $(t,u,v,w) \in S^4$ with $v-t=u-w=a+n m$ is $\sum_{n \in \Z} \max(0,|S|-|a+n m|)^2$, which is $|S|^2 \Omega(m/|S|,-a/|S|)=|S|^2 \Omega(m/|S|,a/|S|)$.
\end{proof}

\begin{lemma}\label{Leslie}
Let $S$ be a finite set of consecutive integers of the form $S=\{0,1,\ldots,|S|-1\}$.
Let $m$ be a positive integer, let $b\in \Z$, and let $I=\{(t,u,v,x) \in S^4: t+u=v+w, \, t+u \equiv b \!\! \pmod{m}\}$.
Then $\card{I}=|S|^2 \Omega\left(\frac{m}{|S|},1-\frac{b+1}{|S|}\right)$.
\end{lemma}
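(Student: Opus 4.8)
The plan is to count $I$ by conditioning on the common value of the sum $t+u=v+w$, in the same spirit as the proof of Lemma~\ref{Iago}. Write $n=|S|$, so that $S=\{0,1,\dots,n-1\}$, and for each integer $s$ let $r(s)$ denote the number of pairs $(t,u)\in S^2$ with $t+u=s$. Because the defining relation of $I$ forces $t+u=v+w$, grouping the quadruples $(t,u,v,w)\in S^4$ according to the value $s$ of this common sum gives $\card{I}=\sum_{s\equiv b\pmod m} r(s)^2$, where the sum runs over all integers $s$ lying in the residue class of $b$ modulo $m$: for each such $s$ there are $r(s)$ choices of $(t,u)$ and, independently, $r(s)$ choices of $(v,w)$.

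Next I would evaluate $r(s)$. A direct count (writing $u=s-t$ and imposing $0\le t,u\le n-1$) shows $r(s)=s+1$ for $0\le s\le n-1$, $r(s)=2n-1-s$ for $n-1\le s\le 2n-2$, and $r(s)=0$ otherwise, which can be written uniformly as $r(s)=\max\bigl(0,\,n-|s-(n-1)|\bigr)$. Parametrizing the integers $s$ with $s\equiv b\pmod m$ as $s=b+jm$ with $j\in\Z$, this turns the count into
\[
\card{I}=\sum_{j\in\Z}\max\bigl(0,\,n-|jm+b-n+1|\bigr)^2.
\]

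Finally I would match this with the claimed value. From the definition~\eqref{Orlando} of $\Omega$, multiplying through by $n^2$ and absorbing the factor $n$ inside the absolute value,
\[
n^2\,\Omega\left(\frac{m}{n},\,1-\frac{b+1}{n}\right)=\sum_{k\in\Z}\max\left(0,\,n-\left|km-n+b+1\right|\right)^2,
\]
which is term-for-term identical (set $k=j$) to the expression for $\card{I}$ above. Since $|S|=n$, this yields $\card{I}=|S|^2\,\Omega\bigl(\tfrac{m}{|S|},\,1-\tfrac{b+1}{|S|}\bigr)$, as desired.

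There is essentially no analytic difficulty here; the main thing to be careful about is the bookkeeping — ensuring the argument of $\Omega$ comes out to be exactly $1-(b+1)/n$ rather than off by one, and checking that the $\max(0,\cdot)$ truncation and the degenerate case $n=1$ behave correctly. An alternative, slicker route avoids computing $r(s)$ altogether: the involution $(t,u,v,w)\mapsto(t,\,n-1-v,\,n-1-u,\,w)$ maps $\{(t,u,v,w)\in S^4:t+u=v+w\}$ onto itself and carries the condition $t+u\equiv b\pmod m$ to $t-v\equiv b-n+1\pmod m$, so $\card{I}$ equals the cardinality computed in the preceding lemma with its parameter $a$ set to $n-1-b$; since $(n-1-b)/n=1-(b+1)/n$, this gives the result at once. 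Either route finishes the proof.
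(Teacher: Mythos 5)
Your main argument is correct and is essentially the paper's own proof: the paper likewise counts, for each $s=b+nm$, the $\max(0,|S|-|b+nm-(|S|-1)|)$ pairs summing to $s$, squares, and sums over $n$ to read off $|S|^2\Omega\bigl(\tfrac{m}{|S|},1-\tfrac{b+1}{|S|}\bigr)$. Your alternative involution $(t,u,v,w)\mapsto(t,\,|S|-1-v,\,|S|-1-u,\,w)$ reducing to the preceding lemma with $a=|S|-1-b$ is also valid, but the route you actually carry out coincides with the paper's.
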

\begin{proof}
For each $n \in \Z$, the number of pairs $(t,u) \in S^2$ with $t+u=b+n m$ is $\max(0, |S|-|b+n m-(|S|-1)|)$, so that $\card{I}=\sum_{n \in \Z} \max(0,|S|-|b+n m-(|S|-1)|)^2$, which is $|S|^2 \Omega(m/|S|,1-(b+1)/|S|)$.
\end{proof}

\section{Proof of Theorem \ref{Anthony}}\label{Veronica}

In this section, we prove Theorem \ref{Anthony}, whose basic assumptions we recall:
\begin{itemize}
\item We have an infinite family of pairs of sequences $\{(f_\iota,g_\iota)\}_{\iota \in I}$.  Each $f_\iota$ is the multiplicative character sequence associated with character $\phi_\iota$ of prime field $F_\iota$, and $f_\iota$ has shift $r_\iota$ and length $\ell_\iota$.  Each $g_\iota$ is the multiplicative character sequence associated with the character $\chi_\iota$ of field $F_\iota$, and $g_\iota$ has shift $s_\iota$ and length $\ell_\iota$.
\item For every $\iota\in I$, we have the prime $p_\iota=\card{\Fi}$.
\item The set $\{p_\iota\}_{\iota \in I}$ is unbounded, and there is a positive real number $\Lambda$ such that in the limit as $p_\iota \to \infty$, have $\ell_\iota/p_\iota \to \Lambda$.
\end{itemize}
In certain subcases of Theorem \ref{Anthony}, there are additional assumptions, which will be recalled when we address these cases in turn.

Let $(f,g)$ be a pair multiplicative character sequences from our family, where $f$ is associated to the character $\phi$ of the prime field $F$, and $f$ has shift $r$ and length $\ell$, while $g$ is associated to the character $\chi$ of the prime field $F$, and $g$ has shift $s$ and length $\ell$.  Let the prime $p=\card{F}$.
Let $L=\{0,1,\ldots,\ell-1\}$.
So then $f=(f_0,\ldots,f_{\ell-1})$ and $g=(g_0,\ldots,g_{\ell-1})$ with
\begin{align*}
f_j & = \phi(j+r) \\
g_j & = \chi(j+s)
\end{align*}
for every $j \in L$.

We begin by calculating $\normtt{f g}$, the numerator of $\CDF(f,g)$ in \eqref{Melissa}.
By \eqref{Abraham} we have
\[
\normtt{f g}= \sums{t,u,v,w \in L \\ t+u=v+w} \phi(t+r) \chi(u+s) \conj{\phi(v+r) \chi(w+s)}.
\]
We now use Lemma \ref{Theresa} to express the multiplicative characters $\phi$ and $\chi$ in terms of additive characters to see that $\normtt{f g}$ is
\[
\sums{t,u,v,w \in L \\t+u=v+w} \sums{a,b \in F \\ c,d \in F} \frac{G_a(\phi) G_b(\chi) \conj{G_c(\phi) G_d(\chi)}}{p^4} \conj{\epsilon_a(t+r) \epsilon_b(u+s)} \epsilon_c(v+r) \epsilon_d(w+s),
\]
where we recall that $\epsilon$ is the canonical additive character of $F$ and $\epsilon_x(y)=\epsilon(x y)$.
For any $x \in F$, the quadruple $(a+x,b+x,c+x,d+x)$ runs through $F^4$ as $(a,b,c,d)$ runs through $F^4$, so the inner sum is equal to
\[
\sums{a,b,c,d \in F} \frac{G_{a+x}(\phi) G_{b+x}(\chi) \conj{G_{c+x}(\phi) G_{d+x}(\chi)}}{p^4} \conj{\epsilon_a(t+r) \epsilon_b(u+s)} \epsilon_c(v+r) \epsilon_d(w+s)
\]
for every $x \in F$, where we have used the fact that the product of character values $\conj{\epsilon_{a+x}(t+r) \epsilon_{b+x}(u+s)} \epsilon_{c+x}(v+r) \epsilon_{d+x}(w+s)$ is equal to the product of character values $\conj{\epsilon_a(t+r) \epsilon_b(u+s)} \epsilon_c(v+r) \epsilon_d(w+s)$ because $t+u=v+w$ in the outer sum.
So we can average over all $x$ and see that the inner sum is equal to
\[
\sum_{a,b,c,d,x \in F} \!\!\! \frac{G_{a+x}(\phi) G_{b+x}(\chi) \conj{G_{c+x}(\phi) G_{d+x}(\chi)}}{p^5} \conj{\epsilon_a(t+r) \epsilon_b(u+s)} \epsilon_c(v+r) \epsilon_d(w+s)
\]
We therefore define
\[
H(a,b,c,d)=\frac{1}{p^3} \sum_{x\in F} G_{a+x}(\phi) G_{b+x}(\chi) \conj{G_{c+x}(\phi) G_{d+x}(\chi)},
\]
and then
\[
\normtt{f g} = \sums{a,b,c,d \in F} \frac{H(a,b,c,d)}{p^2} \sums{t,u,v,w \in L \\ t+u=v+w} \conj{\epsilon_a(t+r)\epsilon_b(u+s)} \epsilon_c(v+r)\epsilon_d(w+s).
\]
By Lemmata \ref{Gordon} and \ref{Gabriel}\eqref{Katherine} , we can write $H(a,b,c,d)=M(a,b,c,d)+E(a,b,c,d)$, where 
\[
M(a,b,c,d)=\begin{cases}
1 & \text{if $a=c$ and $b=d$,} \\
1 & \text{if $a=d$, $b=c$, and $\phi=\chi$,} \\
1 & \text{if $a=b$, $c=d$, and $\phi=\conj{\chi}$,} \\
0 & \text{otherwise}
\end{cases}
\]
is regarded as the main term and $E(a,b,c,d)$ is regarded as the error term with
\[
|E(a,b,c,d)| \leq \frac{3}{\sqrt{p}}.
\]
So $\normtt{f g} = M_0+E_0$, where
\begin{align*}
M_0 & = \frac{1}{p^2} \sum_{a,b,c,d \in F} M(a,b,c,d) \sums{t,u,v,w \in L \\ t+u=v+w} \conj{\epsilon_a(t+r)\epsilon_b(u+s)} \epsilon_c(v+r)\epsilon_d(w+s) \\
E_0 & = \frac{1}{p^2} \sum_{a,b,c,d \in F} E(a,b,c,d) \sums{t,u,v,w \in L \\ t+u=v+w} \conj{\epsilon_a(t+r)\epsilon_b(u+s)} \epsilon_c(v+r)\epsilon_d(w+s),
\end{align*}
and our bound on $E(a,b,c,d)$ tells us that
\[
|E_0| \leq \frac{3}{p^{5/2}} \sums{a,b,c,d \in F} \left|\sums{t,u,v,w \in L \\ t+u=v+w} \conj{\epsilon_a(t+r)\epsilon_b(u+s)} \epsilon_c(v+r)\epsilon_d(w+s)\right|,
\]
and so
\begin{equation}\label{Eric}
|E_0| \leq 192 p^{3/2} (1+\log p)^3 \max\left(1,\frac{\ell}{p}\right)^3
\end{equation}
by Lemma \ref{William} below.

The main term $M_0$ can be broken into four terms $M_1$, $M_2$, $M_3$, and $M_4$ based on the four mutually exclusive cases where $M(a,b,c,d)$ might be nonzero (1) $a=b=c=d$, (2) $a=c\not=b=d$, (3) $a=d\not=b=c$, and (4) $a=b\not=c=d$.
Then
\[
\normtt{f g} = M_0+E_0 = M_1+M_2+M_3+M_4+E_0,
\]
where
\begin{align*}
M_1 & = \frac{1}{p^2} \sum_{a\in F} \sums{t,u,v,w \in L \\ t+u=v+w} 1, \\
M_2 & = \frac{1}{p^2} \sums{a,b \in F \\ a\not=b} \sums{t,u,v,w \in L \\ t+u=v+w} \epsilon_a(v-t) \epsilon_b(w-u) \\
M_3 & = \frac{\delta_{\phi,\chi}}{p^2} \sums{a,b \in F \\ a\not=b} \sums{t,u,v,w\in L \\ t+u=v+w} \epsilon_a(w-t+s-r) \epsilon_b(v-u+r-s) \\
M_4 & = \frac{\delta_{\phi,\conj{\chi}}}{p^2} \sums{a,c \in F \\ a\not=c} \sums{t,u,v,w \in L \\ t+u=v+w} \conj{\epsilon_a(t+u+r+s)} \epsilon_c(v+w+r+s),
\end{align*}
and where $\delta$ is the Kronecker delta.
Let
\begin{align*}
A & = \{(t,u,v,w) \in L^4: t+u=v+w\} \\
B & = \{(t,u,v,w) \in A: v-t \equiv 0 \!\! \pmod{p}\} \\
C & = \{(t,u,v,w) \in A: w-t \equiv r-s \!\! \pmod{p}\} \\
D & = \{(t,u,v,w) \in A: t+u \equiv -(r+s) \!\! \pmod{p}\},
\end{align*}
so that, by the orthogonality relation in Lemma \ref{Orestes}\eqref{Sarah}, we have
\begin{align*}
M_1 & = \frac{\card{A}}{p} \\
M_2 & = \card{B} -\frac{\card{A}}{p} \\
M_3 & = \delta_{\phi,\chi} \left(\card{C}-\frac{\card{A}}{p}\right)\\
M_4 & = \delta_{\phi,\conj{\chi}} \left(\card{D} - \frac{\card{A}}{p}\right),
\end{align*}
and so $M_0=M_1+M_2+M_3+M_4=M_A+M_B+M_C+M_D$, where
\begin{align*}
M_A & =  -\frac{(\delta_{\phi,\chi}+\delta_{\phi,\conj{\chi}}) \card{A}}{p} \\
M_B & = \card{B} \\
M_C & = \delta_{\phi,\chi} \card{C} \\
M_D & = \delta_{\phi,\conj{\chi}} \card{D}.
\end{align*}
Recall the function $\Omega$ defined in \eqref{Orlando}.
Using the cardinalities of $A$, $B$, $C$, and $D$ computed in Lemmata \ref{Iago}--\ref{Leslie}, we obtain $\normtt{f g}=M_0+E_0=M_A+M_B+M_C+M_D+E_0$ with
\begin{align*}
M_A & = -\frac{(\delta_{\phi,\chi}+\delta_{\phi,\conj{\chi}})(2\ell^3+\ell)}{3 p} \\
M_B & = \ell^2 \Omega\left(\frac{p}{\ell},0\right) \\
M_C & = \delta_{\phi,\chi} \ell^2 \Omega\left(\frac{p}{\ell},\frac{r-s}{\ell}\right) \\
M_D & = \delta_{\phi,\conj{\chi}} \ell^2 \Omega\left(\frac{p}{\ell},1+\frac{r+s-1}{\ell}\right),
\end{align*}
and where $E_0$ is bounded in \eqref{Eric}.
Thus we have
\[
\frac{\normtt{f g}}{\ell^2}=N_A+N_B+N_C+N_D+E_1
\]
with
\begin{align*}
N_A & = -\frac{(\delta_{\phi,\chi}+\delta_{\phi,\conj{\chi}})(2\ell^2+1)}{3 p\ell} \\
N_B & = \Omega\left(\frac{p}{\ell},0\right) \\
N_C & = \delta_{\phi,\chi} \Omega\left(\frac{p}{\ell},\frac{r-s}{\ell}\right) \\
N_D & = \delta_{\phi,\conj{\chi}} \Omega\left(\frac{p}{\ell},1+\frac{r+s-1}{\ell}\right) \\
|E_1| & \leq \frac{192 p^{3/2} (1+\log p)^3}{\ell^2} \max\left(1,\frac{\ell}{p}\right)^3.
\end{align*}
Now that we have estimated $\normtt{f g}/\ell^2$ for a single pair $(f,g)$ from the family $\{(f_\iota,g_\iota)\}_{\iota \in I}$, let us consider the asymptotic behavior of $\normtt{f g}/\ell^2$ as $p \to \infty$.
By Lemma \ref{Martin} below we see that $\normtt{f}/\ell$ and $\normtt{g}/\ell$ both tend to $1$ as $p \to\infty$, so the asymptotic behavior of $\cnr{f}{g}$ is the same as that of $\normtt{f g}/\ell^2$.

The ratio $\ell/p$ tends to the positive real number $\Lambda$, so if all our sequences $f_\iota$ and $g_\iota$ are associated to the quadratic character, then $\phi_\iota=\chi_\iota=\conj{\chi_\iota}$ for all $\iota \in I$, so then $N_A \to -4\Lambda/3$ as $p\to \infty$.  If all $f_\iota$ and $g_\iota$ are always associated to the same nonquadratic character, then $\phi_\iota=\chi_\iota\not=\conj{\chi_\iota}$, and then $N_A \to -2\Lambda/3$ as $p \to \infty$.  If all $f_\iota$ and $g_\iota$ are always associated to conjugate nonquadratic characters, then $\phi_\iota=\conj{\chi_\iota}\not=\chi_\iota$, so $N_A \to -2\Lambda/3$ as $p \to \infty$.   If $f_\iota$ and $g_\iota$ are always associated to characters that are neither equal nor conjugate, then $N_A$ is always $0$.

Recall that $\Omega$ is continuous on $\{(x,y): x\not=0\}$, so that $N_B \to \Omega(1/\Lambda,0)$ as $p\to \infty$.

If $f_\iota$ and $g_\iota$ are always associated to the same (quadratic or nonquadratic) character, then we are given the extra assumption that $(r-s)/p \to \Delta$ as $p\to\infty$, and then $N_C \to \Omega(1/\Lambda,\Delta/\Lambda)$ as $p\to\infty$.  If $f_\iota$ and $g_\iota$ always have different characters, then $N_C$ is always $0$.

If $f_\iota$ and $g_\iota$ are always associated to conjugate (quadratic or nonquadratic) characters, then we are given the extra assumption that $(r+s)/p \to \Sigma$ as $p\to\infty$, and then $N_D \to \Omega(1/\Lambda,1+\Sigma/\Lambda)$ as $p\to\infty$.  If $f_\iota$ and $g_\iota$ always have nonconjugate characters, then $N_D$ is always $0$.

The term $E_1 \to 0$ as $p\to\infty$ because $\ell/p$ tends to a positive real number $\Lambda$ as $p \to\infty$.

This completes the proofs of the limiting values of $\cnr{f_\iota}{g_\iota}$ as $p_\iota \to \infty$ when the sequences $f_\iota$ and $g_\iota$ are multiplicative character sequences.  If we replace these sequences with unimodularizations, then we invoke Lemma \ref{Oswald} to see that the limits do not change.  This completes the proof of Theorem \ref{Anthony} \hfill\qedsymbol

We conclude this section with technical lemmata that we used in our proof of Theorem \ref{Anthony}.
\begin{lemma}\label{Gordon}
Let $F$ be a finite field, let $a,b,c,d \in F$, let $\phi,\chi,\psi,\omega$ be nontrivial multiplicative characters of $F$, and let
\[
S=\frac{1}{\card{F}^3} \sum_{x \in F} G_{a+x}(\phi) G_{b+x}(\chi) \conj{G_{c+x}(\psi) G_{d+x}(\omega)}.
\]
Let
\[
M=\begin{cases}
\frac{G(\phi)G(\chi)\conj{G(\psi)G(\omega)}}{\card{F}^2}  & \text{if $a=b=c=d$ and $\phi\chi=\psi\omega$,} \\
\phi\psi(-1)  & \text{if $a=b\not=c=d$, $\phi=\conj{\chi}$, and $\psi=\conj{\omega}$}, \\
1 & \text{if $a=c\not=b=d$, $\phi=\psi$, and $\chi=\omega$}, \\
1 & \text{if $a=d\not=b=c$, $\phi=\omega$, and $\chi=\psi$}, \\
0 & \text{otherwise}.
\end{cases}
\]
Then $S=M+E$ for some $E \in \C$ with $|E| \leq 3/\sqrt{\card{F}}$.
\end{lemma}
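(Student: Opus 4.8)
The plan is to peel the Gauss sums off of $S$, reducing it to a classical multiplicative character sum over $F$, and then to estimate that sum by a short case analysis on the equality pattern of $(a,b,c,d)$: the Weil bound handles the generic cases, and the orthogonality relations handle the degenerate ones. Write $q=\card F$. Since $\phi,\chi,\psi,\omega$ are all nontrivial, Lemma \ref{Gabriel}\eqref{Lawrence} applies at every shift (the value $0$ causing no trouble, because $\conj\phi(0)=0=G_0(\phi)$, and similarly for the other three characters), so $G_y(\phi)=\conj\phi(y)G(\phi)$ and $\conj{G_y(\psi)}=\psi(y)\conj{G(\psi)}$ for all $y\in F$. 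Using the definition \eqref{Gerald}, this gives
\[
S=c_0\,T,\qquad c_0:=\frac{G(\phi)G(\chi)\conj{G(\psi)G(\omega)}}{q^3},\qquad T:=\sum_{x\in F}\conj\phi(x+a)\,\conj\chi(x+b)\,\psi(x+c)\,\omega(x+d),
\]
and Lemma \ref{Gabriel}\eqref{Katherine} yields $\card{c_0}=q^2/q^3=1/q$. So it suffices to show, pattern by pattern, that $T$ differs from the appropriate ``diagonal'' contribution by at most $3\sqrt q$ in absolute value; then $\card{S-M}=\card{c_0}\cdot\card{T-\text{(diagonal part)}}\le 3/\sqrt q$.

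For the ``otherwise'' patterns (those in which $\{a,b,c,d\}$ is neither a single point nor one of the three partitions into two pairs, eleven patterns in all) one has $M=0$ and must show $\card T\le 3\sqrt q$. Fix a generator $\gamma$ of $\mchars$, which is cyclic of order $q-1$, and write $\conj\phi=\gamma^{e_1}$, $\conj\chi=\gamma^{e_2}$, $\psi=\gamma^{e_3}$, $\omega=\gamma^{e_4}$ with each $e_i\in\{1,\dots,q-2\}$ by nontriviality. Then $T=\sum_{x\in F}\gamma(h(x))$ with $h(x)=(x+a)^{e_1}(x+b)^{e_2}(x+c)^{e_3}(x+d)^{e_4}$; after combining repeated linear factors and reducing their exponents modulo $q-1$, the distinct roots of $h$ lie among $\{-a,-b,-c,-d\}$ (at most four of them), each carrying a multiplicity in $\{0,1,\dots,q-2\}$. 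The key observation is that in each such pattern at least one ``unpaired'' shift (for instance $-d$ when the pattern is $a=b=c\neq d$, or $-c$ and $-d$ when it is $a=b$ with $a,c,d$ distinct) keeps its original exponent $e_i\in\{1,\dots,q-2\}$, so $h$ is not a constant times a $(q-1)$-th power of a polynomial; the Weil bound \cite{Weil} then gives $\card T\le(\#\{\text{distinct roots}\}-1)\sqrt q\le 3\sqrt q$. In the subcases where a combined exponent happens to be $\equiv0\pmod{q-1}$, the number of distinct roots drops and the residual sum collapses via the orthogonality relation Lemma \ref{Orestes}\eqref{Theodore}, with a bound far below $3\sqrt q$.

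For the four special patterns one computes $T$ essentially exactly. If $a=b=c=d$, the substitution $y=x+a$ gives $T=\sum_{y}\conj{\phi\chi}\psi\omega(y)$, which by Lemma \ref{Orestes}\eqref{Theodore} equals $q-1$ when $\phi\chi=\psi\omega$ and $0$ otherwise; in the first subcase $S=c_0(q-1)$ while $M=c_0q$, so $\card{S-M}=\card{c_0}=1/q$, and in the second $S=0=M$. If $a=b\neq c=d$, then $y=x+a$ gives $T=\sum_y\conj{\phi\chi}(y)\,\psi\omega(y+e)$ with $e=c-a\neq 0$; when $\phi=\conj\chi$ and $\psi=\conj\omega$ both characters in the sum are trivial, so $T=q-2$, and Lemma \ref{Gabriel}\eqref{Manuel},\eqref{Katherine} evaluate $G(\phi)G(\conj\phi)=\phi(-1)q$ (and likewise for $\psi$), whence $c_0=\phi\psi(-1)/q$, $S=\phi\psi(-1)(q-2)/q$, and $\card{S-M}=2/q$; otherwise one of the two characters in the sum is nontrivial, $T=\sum_y\gamma(y^{n_1}(y+e)^{n_2})$ with $(n_1,n_2)\not\equiv(0,0)\pmod{q-1}$, and $\card T\le\sqrt q$ by the Weil bound (or directly by orthogonality if one $n_i$ vanishes), so $\card{S-M}=\card S\le1/\sqrt q$. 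The patterns $a=c\neq b=d$ and $a=d\neq b=c$ are handled the same way after relabeling: under their respective main-term character conditions the Gauss-sum factor reduces to $c_0=\card{G(\phi)}^2\card{G(\chi)}^2/q^3=1/q$, so $S=(q-2)/q$ and $\card{S-M}=2/q$, and otherwise $\card T\le\sqrt q$ gives $\card S\le 1/\sqrt q$.

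The main obstacle is the bookkeeping around the non-perfect-power hypothesis in the Weil bound: one must check, across the eleven ``otherwise'' patterns and the degenerate subcases of the four special ones, that $h$ is genuinely not a constant times a $(q-1)$-th power and that it has at most four distinct roots. This is precisely where the assumption that all of $\phi,\chi,\psi,\omega$ are nontrivial is used — it forces every unpaired multiplicity to lie strictly between $0$ and $q-1$ — and it is the only step that needs care; everything else is routine manipulation of Gauss sums via Lemma \ref{Gabriel} and of character sums via Lemma \ref{Orestes}. Note also that the worst error, $3/\sqrt q$, arises exactly from the case of four distinct shifts with $M=0$, which is why the stated bound $3/\sqrt{\card F}$ is sharp for this argument.
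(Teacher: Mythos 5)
Your proposal is correct and follows essentially the same route as the paper's proof: factor out the Gauss sums via Lemma \ref{Gabriel}\eqref{Lawrence} to reduce $S$ to the complete character sum $\sum_{x}\conj{\phi(a+x)\chi(b+x)}\psi(c+x)\omega(d+x)$, bound it by $3\sqrt{\card{F}}$ via the Weil bound unless $a,b,c,d$ pair up, and evaluate the four paired patterns exactly using orthogonality and the Gauss-sum identities of Lemma \ref{Gabriel}. Your extra bookkeeping (writing the characters as powers of a generator to verify the not-a-$(\card{F}-1)$-th-power hypothesis, and treating the degenerate subcases where a combined character is trivial by orthogonality) simply spells out details the paper leaves implicit, so no further comparison is needed.
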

\begin{proof}
We use Lemma \ref{Gabriel}\eqref{Lawrence} to see that
\begin{equation}\label{Emily}
S=\frac{G(\phi)G(\chi)\conj{G(\psi) G(\omega)}}{\card{F}^3} \sum_{x \in F} \conj{\phi(a+x) \chi(b+x)} \psi(c+x) \omega(d+x).
\end{equation}
Now we apply the Weil bound \cite{Weil} to the sum over $x \in F$.  Since all our characters are nontrivial, this sum will be at most $3\sqrt{\card{F}}$ unless $a,b,c,d$ can be arranged into two pairs of equal values.
This bound and the bound of Lemma \ref{Gabriel}\eqref{Katherine} give the desired result when $a,b,c,d$ cannot be paired thus.

When $a=b=c=d$, the sum over $x \in F$ in \eqref{Emily} vanishes unless $\phi\chi=\psi\omega$.
So if $\phi\chi\not=\psi\omega$, we again obtain the desired result.
If $\phi\chi=\psi\omega$, then the sum over $x \in F$ will be $\card{F}-1$ by Lemma \ref{Orestes}\eqref{Theodore}, and we obtain
\[
S=\frac{\card{F}-1}{\card{F}^3} G(\phi)G(\chi)\conj{G(\psi)G(\omega)},
\]
which differs from $G(\phi)G(\chi)\conj{G(\psi)G(\omega)}/\card{F}^2$ by a term of magnitude $1/\card{F}$ by Lemma \ref{Gabriel}\eqref{Katherine}.

When $a=b\not=c=d$, then the Weil bound says that the sum over $x\in F$ in \eqref{Emily} will be at most $\sqrt{\card{F}}$ unless $\phi=\conj{\chi}$ and $\psi=\conj{\omega}$.
So if $\phi\not=\conj{\chi}$ or $\psi\not=\conj{\omega}$, we obtain the desired result by using Lemma \ref{Gabriel}\eqref{Katherine}.
If $\phi=\conj{\chi}$ and $\psi=\conj{\omega}$, then the sum over $x \in F$ will be $\card{F}-2$, and we obtain
\begin{align*}
S
& =\frac{\card{F}-2}{\card{F}^3} G(\phi)G(\conj{\phi})\conj{G(\psi)G(\conj{\psi})} \\
& =\frac{\card{F}-2}{\card{F}} \phi(-1) \psi(-1),
\end{align*}
where we have used Lemma \ref{Gabriel}\eqref{Manuel},\eqref{Katherine} in the second equality.
This differs from $\phi\psi(-1)$ by a term of magnitude $2/\card{F}$.

When $a=c\not=b=d$, then the Weil bound says that the sum over $x \in F$ in \eqref{Emily} will be at most $\sqrt{\card{F}}$ unless $\phi=\psi$ and $\chi=\omega$.
So if $\phi\not=\psi$ or $\chi\not=\omega$, we again obtain the desired result by using Lemma \ref{Gabriel}\eqref{Katherine}.
If $\phi=\psi$ and $\chi=\omega$, then the sum over $x \in F$ will be $\card{F}-2$, and we obtain
\begin{align*}
S
& =\frac{\card{F}-2}{\card{F}^3} G(\phi)G(\chi)\conj{G(\phi)G(\chi)} \\
& =\frac{\card{F}-2}{\card{F}} 
\end{align*}
where we have used Lemma \ref{Gabriel}\eqref{Katherine} in the second equality.
This differs from $1$ by a term of magnitude $2/\card{F}$.

When $a=d\not=b=c$, then the Weil bound says that the sum over $x \in F$ in \eqref{Emily} will be at most $\sqrt{\card{F}}$ unless $\phi=\omega$ and $\chi=\psi$, and the analysis of this case is the same as the previous one, except that $\psi$ and $\omega$ exchange roles.
\end{proof}

\begin{lemma}\label{William}
Let $F$ be the prime field of order $p$, let $\epsilon$ be the canonical additive character of $F$, and let $\epsilon_u$ be the additive character $x \mapsto \epsilon(u x)$.
Let $h,i,j,k,\ell \in \Z$ and let $L=\{0,1,\ldots,\ell-1\}$.  Then the sum
\[
S= \sum_{a,b,c,d \in F} \left|\sums{t,u,v,w \in L \\ t+u=v+w} \conj{\epsilon_a(t+h)\epsilon_b(u+i)} \epsilon_c(v+j)\epsilon_d(w+k)\right|
\]
satisfies $|S| \leq 64 p^4 (1+\log p)^3 \max(1,\ell/p)^3$.
\end{lemma}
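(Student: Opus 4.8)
The plan is to follow the same route as the proof of Lemma~\ref{Oliver}: strip off an irrelevant unimodular factor and then quote the general character-sum estimate \cite[Lemma A.1]{Katz-2013}, this time applied to the additive group of $F$ rather than to the multiplicative group.

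First I would observe that $\epsilon_a(t+h)=\epsilon_a(t)\epsilon_a(h)$, since $\epsilon_a$ is a homomorphism, and similarly for the other three characters. Hence the quantity $\conj{\epsilon_a(h)\epsilon_b(i)}\epsilon_c(j)\epsilon_d(k)$ factors out of the inner sum over $t,u,v,w$: it depends only on $a,b,c,d$ and not on $t,u,v,w$, and it is a product of character values, hence unimodular. Therefore
\[
S=\sum_{a,b,c,d\in F}\left|\sums{t,u,v,w\in L\\ t+u=v+w}\conj{\epsilon_a(t)\epsilon_b(u)}\epsilon_c(v)\epsilon_d(w)\right|.
\]

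Next I would recognize the right-hand side as exactly the object bounded by \cite[Lemma A.1]{Katz-2013}, applied with $\Gamma$ equal to the additive group $(F,+)$ (which has order $p$), with $n=1$, with $\pi_1\colon\Z\to F$ the canonical reduction map $j\mapsto j\cdot 1_F$, and with $U=L=\{0,1,\ldots,\ell-1\}$: the characters $\epsilon_a$ range over the dual group of $(F,+)$ as $a$ ranges over $F$, and $\epsilon_a(t)=\epsilon_a(\pi_1(t))$ for $t\in L$. That lemma then yields the bound $64\,p^4(1+\log p)^3\max(1,\ell/p)^3$ directly, matching the statement of Lemma~\ref{Oliver} with $|\Gamma|=p$ in place of $q-1$.

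The only point needing care — the analogue of the main obstacle here — is verifying that the hypotheses of \cite[Lemma A.1]{Katz-2013} are met when $\Gamma$ is taken to be the additive group of $F$ rather than a multiplicative group; that is, one must confirm that the estimate there is stated for an arbitrary finite abelian group together with an arbitrary list of homomorphisms into it, so that $\pi_1\colon\Z\to(F,+)$ qualifies. This is the same invocation used, mutatis mutandis, in the proof of Lemma~\ref{Oliver}. Once it is in hand the proof is complete, as no further estimation is required.
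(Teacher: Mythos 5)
Your proposal is correct and follows essentially the same route as the paper's own proof: factor out the unimodular constant $\conj{\epsilon_a(h)\epsilon_b(i)}\epsilon_c(j)\epsilon_d(k)$, note that $\epsilon_a$ runs over $\achars$ as $a$ runs over $F$, and invoke \cite[Lemma A.1]{Katz-2013} with $\Gamma=F$, $n=1$, $\pi_1$ the reduction-mod-$p$ map, and $U=L$. That lemma is indeed stated for a general finite abelian group, so your cautionary point is already covered and no further work is needed.
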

\begin{proof}
Since $\conj{\epsilon_a(h)}$, $\conj{\epsilon_b(i)}$, $\epsilon_c(j)$, and $\epsilon_d(k)$ are always of unit magnitude, we have
\[
S=\sum_{a,b,c,d \in F} \left|\sums{t,u,v,w \in L \\ t+u=v+w} \conj{\epsilon_a(t) \epsilon_b(u)} \epsilon_c(v) \epsilon_d(w)\right|,
\]
and then note that $\epsilon_a$ runs through $\achars$ as $a$ runs through $F$, and similarly with $b$, $c$, and $d$.
Then we employ \cite[Lemma A.1]{Katz-2013} (using $\Gamma=F$, $n=1$, with the map $\pi_1\colon \Z \to F$ being reduction modulo $p$, and $U=L=\{0,1,\ldots,\ell-1\}$) to see that the last expression for $S$ is bounded above by $64 p^4 (1+\log p)^3 \max(1,\ell/p)^3$.
\end{proof}

\begin{lemma}\label{Martin}
Let $\{f_\iota\}_{\iota\in I}$ be an infinite family of sequences, where for each $\iota \in I$, there is a prime field $\Fi$ of order $p_\iota$ and a positive integer $\ell_\iota$ such that $f_\iota$ is a multiplicative character sequence of length $\ell_\iota$ associated to some character of $\Fi$.
Suppose that $\{p_\iota\}_{\iota \in I}$ is unbounded and that there is a positive real number $\Lambda$ such that $\ell_\iota/p_\iota$ tends to $\Lambda$ as $p_\iota \to \infty$.
Then there is a $P$ and an $N$ such that for all $\iota \in I$ with $p_\iota \geq P$, not more than $N$ terms of the sequence $f_\iota$ are equal to $0$.
Then $\normtt{f_\iota}/\ell_\iota \to 1$ as $p_\iota \to \infty$.
If $\tilde{f}_\iota$ is a unimodularization of $f_\iota$ for each $\iota \in I$, then $\normt{f_\iota}/\normt{\tilde{f}_\iota} \to 1$ as $p_\iota \to \infty$.
\end{lemma}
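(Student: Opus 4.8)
The plan is to show that $f_\iota$ agrees with a genuine unimodular sequence except in boundedly many positions, so that $\normtt{f_\iota}$ differs from $\ell_\iota$ by at most an absolute constant, and then to exploit the fact that $\ell_\iota\to\infty$.

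First I would establish the statement about zero terms. If $f_\iota$ is the multiplicative character sequence of length $\ell_\iota$ associated to a character $\chi_\iota$ of $\Fi$ with some shift $s_\iota$, then by \eqref{Nicholas} its $j$th term is $\chi_\iota(s_\iota+j)$, which vanishes exactly when $p_\iota\mid s_\iota+j$, since $\chi_\iota(0)=0$ by convention and $\chi_\iota$ is nonzero on $\Fiu$. Among any $\ell_\iota$ consecutive integers there are at most $\lceil \ell_\iota/p_\iota\rceil$ multiples of $p_\iota$, so $f_\iota$ has at most $\lceil \ell_\iota/p_\iota\rceil$ zero terms. Since $\ell_\iota/p_\iota\to\Lambda$, there is a $P$ with $\ell_\iota/p_\iota<\Lambda+1$ whenever $p_\iota\geq P$, and then this count is at most $N:=\floor{\Lambda}+2$; this proves the first assertion.

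Next I would compute the norm. Every nonzero term of $f_\iota$ is a root of unity, hence of modulus $1$, so if $z_\iota$ denotes the number of zero terms of $f_\iota$ then $\normtt{f_\iota}=C_{f_\iota,f_\iota}(0)=\ell_\iota-z_\iota$, using the identity $C_{f,f}(0)=\normtt{f}$ recorded in Section \ref{Philip}. For $p_\iota\geq P$ we have $0\le z_\iota\le N$, hence $1-N/\ell_\iota\le \normtt{f_\iota}/\ell_\iota\le 1$. Because $\ell_\iota=(\ell_\iota/p_\iota)p_\iota$ with $\ell_\iota/p_\iota\to\Lambda>0$ and $\{p_\iota\}_{\iota\in I}$ unbounded, we have $\ell_\iota\to\infty$ as $p_\iota\to\infty$, and therefore $\normtt{f_\iota}/\ell_\iota\to 1$. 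For the unimodularization $\tilde f_\iota$, by definition all $\ell_\iota$ of its terms have modulus $1$, so $\normtt{\tilde f_\iota}=\ell_\iota$ exactly, and hence $\normt{f_\iota}/\normt{\tilde f_\iota}=(\normtt{f_\iota}/\ell_\iota)^{1/2}\to 1$.

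No genuine obstacle arises in this argument; the only point requiring a moment's attention is confirming that $\ell_\iota\to\infty$, which is what makes a bounded number of vanishing terms asymptotically irrelevant, and this follows at once from $\ell_\iota/p_\iota\to\Lambda>0$ together with the unboundedness of $\{p_\iota\}_{\iota\in I}$.
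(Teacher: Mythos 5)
Your argument is correct, and every step holds up: zero terms occur exactly at positions $j$ with $p_\iota\mid s_\iota+j$, of which there are at most $\lceil \ell_\iota/p_\iota\rceil$ in a window of $\ell_\iota$ consecutive integers; the nonzero terms are roots of unity, so $\normtt{f_\iota}=\ell_\iota-z_\iota$ with $z_\iota$ bounded for $p_\iota\geq P$; and $\ell_\iota\to\infty$ because $\ell_\iota/p_\iota\to\Lambda>0$ while $p_\iota$ is unbounded. The route differs from the paper's, which gives no direct argument at all: it simply observes that the lemma is the special case $e=1$, $q_\iota=p_\iota$, $S_\iota=\{0,1,\ldots,\ell_\iota-1\}$, with $\alpha_\iota$ reduction modulo $p_\iota$, of Lemma A.2 and Corollary A.3 of the earlier paper \cite{Katz-2013}, whose general machinery covers character sequences indexed by arbitrary finite windows mapped into fields of prime-power order. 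Your direct counting proof buys self-containment and makes the constant $N=\floor{\Lambda}+2$ explicit, at the cost of not covering the more general setting; the paper's citation buys brevity and reuses a statement already formulated at the level of generality needed elsewhere in that reference. One cosmetic remark: your final step $\normtt{\tilde f_\iota}=\ell_\iota$ exactly, hence $\normt{f_\iota}/\normt{\tilde f_\iota}=(\normtt{f_\iota}/\ell_\iota)^{1/2}\to 1$, is a clean way to get the last assertion and is slightly more direct than invoking the cited corollary.
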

\begin{proof}
This is a special case of \cite[Lemma A.2, Corollary A.3]{Katz-2013}, where we let $e=1$ and set $q_\iota=p_\iota$, $S_\iota=\{0,1,\ldots,\ell_\iota-1\}$, and let $\alpha_\iota\colon \Z\to\Fi$ be the map that effects reduction modulo $p_\iota$.
\end{proof}

\begin{lemma}\label{Oswald}
Let $\{(f_\iota,g_\iota)\}_{\iota \in I}$ be an infinite family of pairs of sequences, where for each $\iota \in I$, there is a prime field $\Fi$ of order $p_\iota$ and a positive integer $\ell_\iota$ such that $f_\iota$ and $g_\iota$ are multiplicative character sequences of length $\ell_\iota$ associated to some characters of $\Fi$.
Let $\tilde{f}_\iota$ and $\tilde{g}_\iota$ be unimodularizations of $f_\iota$ and $g_\iota$, respectively, for each $\iota \in I$.
Suppose that $\{p_\iota\}_{\iota \in I}$ is unbounded and that there is a positive real number $\Lambda$ such that $\ell_\iota/p_\iota$ tends to $\Lambda$ as $p_\iota \to \infty$.
Then $\CDF(f_\iota,g_\iota)$ tends to a real limit as $p_\iota \to \infty$ if and only if $\CDF(\tilde{f}_\iota,\tilde{g}_\iota)$ tends to a real limit as $p_\iota \to \infty$, and in this case, the limits are equal.
\end{lemma}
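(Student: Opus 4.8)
The plan is to prove the lemma by showing that $\CDF(f_\iota,g_\iota)$ and $\CDF(\tilde f_\iota,\tilde g_\iota)$ differ by a quantity tending to $0$ as $p_\iota\to\infty$; this at once gives that one converges iff the other does and that the limits agree. Throughout write $p=p_\iota$, $\ell=\ell_\iota$, and recall from Section \ref{Philip} the identification of a sequence with a polynomial and the formula $\CDF(f,g)=\normtt{fg}/(\normtt{f}\normtt{g})$.

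First I would dispose of the denominators. A unimodularization alters only those terms $\chi(j)$ with $p\mid j$, so $f_\iota$ and $\tilde f_\iota$ agree outside at most $\lceil\ell/p\rceil$ coordinates, and likewise for $g_\iota,\tilde g_\iota$; since $\ell/p\to\Lambda$, Lemma \ref{Martin} supplies an $N$ and a $P$, independent of $\iota$, such that for $p\geq P$ the sequences $\tilde f_\iota-f_\iota$ and $\tilde g_\iota-g_\iota$ each have at most $N$ nonzero coefficients, all of modulus $1$. Lemma \ref{Martin} also gives $\normtt{f_\iota}/\ell\to1$ and $\normtt{g_\iota}/\ell\to1$, and since every term of $\tilde f_\iota$ and $\tilde g_\iota$ is unimodular we have $\normtt{\tilde f_\iota}=\normtt{\tilde g_\iota}=\ell$ exactly. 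Hence, writing $D_\iota=\normtt{f_\iota}\normtt{g_\iota}$ and $\tilde D_\iota=\normtt{\tilde f_\iota}\normtt{\tilde g_\iota}=\ell^2$, both are $\ell^2(1+o(1))$ and $|D_\iota-\tilde D_\iota|=O(\ell)$.

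Next I would compare the numerators. Decomposing $\tilde f_\iota\tilde g_\iota-f_\iota g_\iota=\tilde f_\iota(\tilde g_\iota-g_\iota)+(\tilde f_\iota-f_\iota)g_\iota$ and using that $\tilde g_\iota-g_\iota$ and $\tilde f_\iota-f_\iota$, being sums of at most $N$ unimodular terms, have modulus at most $N$ everywhere on the unit circle, the integral formula for $\normt{\cdot}$ gives
\[
\normt{\tilde f_\iota(\tilde g_\iota-g_\iota)}\leq N\normt{\tilde f_\iota}=N\sqrt{\ell},\qquad \normt{(\tilde f_\iota-f_\iota)g_\iota}\leq N\normt{g_\iota}\leq N\sqrt{\ell},
\]
so $\normt{\tilde f_\iota\tilde g_\iota-f_\iota g_\iota}\leq2N\sqrt{\ell}$. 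Combined with the mean-square crosscorrelation bound $\normtt{f_\iota g_\iota}=O(\ell^2)$ — which one may take directly from the proof of Theorem \ref{Anthony}, where $\normtt{f_\iota g_\iota}/\ell^2$ is shown to be a sum of bounded $\Omega$-terms plus a vanishing error, an argument that does not itself invoke the present lemma, or, at the cost of a harmless logarithmic factor, from $\normtt{f_\iota g_\iota}\leq\normf{f_\iota}^2\normf{g_\iota}^2$ together with Weil/Pólya–Vinogradov estimates on the incomplete character sums $C_{f_\iota,f_\iota}(s)$ — the reverse triangle inequality yields $\bigl|\normt{\tilde f_\iota\tilde g_\iota}-\normt{f_\iota g_\iota}\bigr|\leq2N\sqrt{\ell}$ and hence $\bigl|\normtt{\tilde f_\iota\tilde g_\iota}-\normtt{f_\iota g_\iota}\bigr|=O(\ell^{3/2})=o(\ell^2)$.

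Finally I would assemble the estimates: from
\[
\bigl|\CDF(\tilde f_\iota,\tilde g_\iota)-\CDF(f_\iota,g_\iota)\bigr|\leq\frac{\bigl|\normtt{\tilde f_\iota\tilde g_\iota}-\normtt{f_\iota g_\iota}\bigr|}{\tilde D_\iota}+\normtt{f_\iota g_\iota}\,\frac{|D_\iota-\tilde D_\iota|}{D_\iota\tilde D_\iota},
\]
and the facts $\tilde D_\iota,D_\iota=\ell^2(1+o(1))$, $|D_\iota-\tilde D_\iota|=O(\ell)$, $\normtt{f_\iota g_\iota}=O(\ell^2)$, and $\bigl|\normtt{\tilde f_\iota\tilde g_\iota}-\normtt{f_\iota g_\iota}\bigr|=o(\ell^2)$, both terms on the right tend to $0$, so $\CDF(\tilde f_\iota,\tilde g_\iota)-\CDF(f_\iota,g_\iota)\to0$ as $p_\iota\to\infty$, which is the claim. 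The only point requiring care — the rest being routine norm estimates — is securing the bound $\normtt{f_\iota g_\iota}=O(\ell^2)$ (or even just $o(\ell^3)$) without circular reliance on Theorem \ref{Anthony}; I expect this to be a matter of bookkeeping rather than genuine difficulty, since it is exactly the portion of the proof of Theorem \ref{Anthony} that logically precedes the invocation of this lemma, and is in any case recoverable from standard incomplete-character-sum bounds.
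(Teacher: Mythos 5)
Your argument is correct, and its first half coincides with the paper's proof: the decomposition $\tilde f_\iota\tilde g_\iota-f_\iota g_\iota=\tilde f_\iota(\tilde g_\iota-g_\iota)+(\tilde f_\iota-f_\iota)g_\iota$, the bound $\normt{\tilde f_\iota\tilde g_\iota-f_\iota g_\iota}\leq 2N\sqrt{\ell_\iota}$ obtained from Lemma \ref{Martin}, and the $O(\ell_\iota)$ comparison of the denominators are exactly the steps taken there. Where you genuinely diverge is in the final assembly. You compare the squared quantities $\CDF(\tilde f_\iota,\tilde g_\iota)$ and $\CDF(f_\iota,g_\iota)$ directly, which forces you to import an upper bound $\normtt{f_\iota g_\iota}=O(\ell_\iota^2)$ (or $O(\ell_\iota^2\log^2\ell_\iota)$), either by forward reference to the main-term/error-term computation in Section \ref{Veronica} or by a completion-plus-Weil estimate on the autocorrelations; your observation that the first route is not circular is accurate, since Lemma \ref{Oswald} is invoked only at the very end of the proof of Theorem \ref{Anthony}, but it leaves the lemma dependent on that analysis rather than self-contained. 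The paper sidesteps the issue entirely by comparing the \emph{unsquared} ratios: it shows $\bigl|\normt{\tilde f_\iota\tilde g_\iota}-\normt{f_\iota g_\iota}\bigr|/(\normt{f_\iota}\normt{g_\iota})\leq 2N\sqrt{\ell_\iota}/(\normt{f_\iota}\normt{g_\iota})\to 0$, which needs no information whatsoever about the size of $\normtt{f_\iota g_\iota}$; hence one ratio converges iff the other does, with the same limit, and then the facts $\normt{f_\iota}/\normt{\tilde f_\iota}\to1$ and $\normt{g_\iota}/\normt{\tilde g_\iota}\to1$ from Lemma \ref{Martin} adjust the denominators, after which squaring transfers the statement to the demerit factors. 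The paper's trick buys self-containment (only Lemma \ref{Martin} is used); your route buys the nominally stronger conclusion that $\CDF(\tilde f_\iota,\tilde g_\iota)-\CDF(f_\iota,g_\iota)\to0$, at the cost of the extra energy bound you correctly identified as the only delicate point.
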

\begin{proof}
Note that if $u(z)=\sum_{j=0}^d u_j z^j$ and $v(z)$ are polynomials with complex coefficients, then the triangle inequality for $L^2$ norms gives
\begin{align}
\normt{u(z) v(z)}
& \leq \sum_{j=0}^d \normt{u_j z^j v(z)}  \label{Horace} \\
& = \normt{v(z)} \sum_{j=0}^{d} |u_j| \nonumber
\end{align}
where the last step uses the fact that $|u_j z^j|=|u_j|$ for $z$ on the complex unit circle.

Lemma \ref{Martin} furnishes a $P$ and an $N$ such that for every $\iota \in I$ with $p_\iota \geq P$, each of the sequences $f_\iota$ and $g_\iota$ has at most $N$ of its terms equal to zero.
Thus $\tilde{f}_\iota$ and $f_\iota$ differ in at most $N$ terms and $\tilde{g}_\iota$ and $g_\iota$ differ in at most $N$ terms, where the unimodularizations have a unimodular complex number in place of a zero. 
We consider only $\iota \in I$ with $p_\iota \geq P$ for the rest of this proof.

By the triangle inequality for $L^2$ norms, 
\begin{align*}
\Big|\normt{\tilde{f}_\iota \tilde{g}_\iota}-\normt{f_\iota g_\iota} \Big|
& \leq \normt{(\tilde{f}_\iota-f_\iota)\tilde{g}_\iota} + \normt{f_\iota (\tilde{g}_\iota-g_\iota)} \\
& \leq N \normt{\tilde{g}_\iota} + N \normt{f_\iota} \\
& \leq 2 N \sqrt{\ell_\iota},
\end{align*}
where the second inequality uses \eqref{Horace} and the fact that (when regarded as polynomials) each of $\tilde{f}_\iota-f_\iota$ and $\tilde{g}_\iota - g_\iota$ is a sum of at most $N$ monomials with unimodular coefficients, and the third equality uses the fact that the squared $L^2$ norm of a polynomial is the sum of the squared magnitudes of its coefficients, and each of $\tilde{g}_\iota$ and $f_\iota$ is the sum $\ell_\iota$ monomials with unimodular or zero coefficients.

Then
\[
\frac{\Big|\normt{\tilde{f}_\iota \tilde{g}_\iota}-\normt{f_\iota g_\iota} \Big|}{\normt{f_\iota} \normt{g_\iota}} \leq \frac{2 N \sqrt{\ell_\iota}}{\normt{f_\iota}\normt{g_\iota}},
\]
and the right hand side tends to $0$ as $p_\iota \to \infty$ because Lemma \ref{Martin} shows that both $\sqrt{\ell_\iota}/\normt{f_\iota}$ and $\sqrt{\ell_\iota}/\normt{g_\iota}$ tend to $1$ as $p_\iota \to \infty$, and yet $\ell_\iota \to \infty$ as $p_\iota \to \infty$ (because $\ell_\iota/p_\iota$ tends to the positive number $\Lambda$ in this limit).
So $\normt{\tilde{f}_\iota \tilde{g}_\iota}/(\normt{f_\iota} \normt{g_\iota})$ tends to a real limit as $p_\iota \to \infty$ if and only if $\normt{f_\iota g_\iota}/(\normt{f_\iota} \normt{g_\iota})$ tends to a real limit as $p_\iota \to \infty$, and if these limits exist, then they are equal.
By Lemma \ref{Martin}, we know that $\normt{f_\iota}/\normt{\tilde{f}_\iota}$ and $\normt{g_\iota}/\normt{\tilde{g}_\iota}$ both tend to $1$ as $p_\iota \to \infty$, so we may conclude that $\normt{\tilde{f}_\iota \tilde{g}_\iota}/(\normt{\tilde{f}_\iota} \normt{\tilde{g}_\iota})$ and $\normt{f_\iota g_\iota}/(\normt{f_\iota} \normt{g_\iota})$ have the same limiting behavior as $p_\iota \to \infty$.
And by squaring both we see that $\CDF(\tilde{f}_\iota,\tilde{g}_\iota)$ and $\CDF(f_\iota,g_\iota)$ have the same limiting behavior as $p_\iota \to \infty$.
\end{proof}
\section{Optimization of Crosscorrelation Demerit Factors}\label{Elizabeth}

In this section we find the global minimum crosscorrelation demerit factors (or equivalently, global maximum crosscorrelation merit factors) for the formulae in Theorems \ref{Victor} and \ref{Anthony}.
Throughout this section $\Rplus$ denotes the set of positive real numbers.
Recall the definition \eqref{Orlando} of the function $\Omega(x,y)$.

The following gives the global minimum crosscorrelation demerit factor for Theorem \ref{Victor}\eqref{Alan} and for Theorem \ref{Anthony}\eqref{Bernice}.
\begin{lemma}\label{Jake}
The function $f\colon \Rplus \times \R \to \R$ with 
\[
f(\Lambda,\Delta)=-\frac{2}{3} \Lambda + \Omega\left(\frac{1}{\Lambda},0\right) + \Omega\left(\frac{1}{\Lambda},\frac{\Delta}{\Lambda}\right).
\]
achieves a global minimum value of $0.649608\ldots$, the smallest root of the polynomial $6 x^3-42 x +54 x-19$.
This global minimum is attained if and only if $(\Lambda,\Delta) \in \{(\Lambdatrunc,m+1/2): m \in \Z\}$, where $\Lambdatrunc=0.557874\ldots$, the middle root of the polynomial $2 x^3-6 x+3$.  Equivalently the function $1/f(\Lambda,\Delta)$ achieves a global maximum value of $1.539389\ldots$, the largest root of the polynomial $19 x^3-54 x^2+42 x-6$ for the same values of $(\Lambda,\Delta)$.
\end{lemma}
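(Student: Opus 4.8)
The plan is to collapse the two parameters to one by a change of variables together with a symmetry argument, and then to dispatch the resulting one--variable problem with a single Fourier--analytic lower bound together with an explicit computation on two intervals. Write $g(t)=\max(0,1-|t|)^2$, so that $\Omega(x,y)=\sum_{n\in\Z}g(y-nx)$ is the $x$--periodization of $g$; it is $x$--periodic and even in $y$. Substituting $x=1/\Lambda$ and $y=\Delta/\Lambda=\Delta x$ turns the problem into that of minimizing
\[
F(x,y)=-\frac{2}{3x}+\Omega(x,0)+\Omega(x,y)
\]
over $(x,y)\in\Rplus\times\R$, and by periodicity and evenness it suffices to take $y\in[0,x/2]$.

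The key step is that for each fixed $x>0$ the map $y\mapsto\Omega(x,y)$ is minimized at $y=x/2$, and only there modulo $x\Z$ when $x<2$. Since $g$ is $C^1$ and convex on $[0,\infty)$, and the set $x(\Z+\tfrac12)$ is a disjoint union of pairs $\{a,-a\}$ with $a\in\{x/2,\,3x/2,\dots\}$, for $|s|\le x/2$ (so that $a\pm s\ge 0$ for every such $a$) one obtains
\[
\Omega\Bigl(x,\tfrac{x}{2}+s\Bigr)-\Omega\Bigl(x,\tfrac{x}{2}\Bigr)=\sum_{a}\bigl(g(a+s)+g(a-s)-2g(a)\bigr)\ge 0
\]
by midpoint convexity, with equality forcing $s=0$ once $x<2$ (the $a=x/2$ summand then lies on the strictly convex part of $g$). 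Combining this with the identity $\Omega(x,0)+\Omega(x,x/2)=\Omega(x/2,0)$ --- the even and odd multiples of $x/2$ together exhaust all multiples of $x/2$ --- reduces the problem to minimizing $\psi(u):=-\tfrac{1}{3u}+\Omega(u,0)$ over $u>0$, where $u=x/2$.

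For $\psi$ I would first record the global lower bound $\psi(u)\ge\tfrac{1}{3u}$: by Poisson summation $\Omega(u,0)=\tfrac1u\sum_k\widehat g(k/u)$, where $\widehat g\ge 0$ (writing $g=h^2$ with $h(t)=\max(0,1-|t|)$ one has $\widehat g=\widehat h\ast\widehat h$ and $\widehat h\ge 0$, being the square of the Fourier transform of an interval indicator) and $\widehat g(0)=\int_{-1}^{1}(1-|t|)^2\,dt=\tfrac23$, so $\Omega(u,0)\ge\tfrac{2}{3u}$; hence $\psi(u)\ge\tfrac23$ for all $u\le\tfrac12$. For $u\ge 1$ one has $\Omega(u,0)=1$, so $\psi(u)=1-\tfrac{1}{3u}$ is increasing with minimum $\tfrac23$. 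On $(\tfrac12,1]$ one has $\Omega(u,0)=1+2(1-u)^2$, hence $\psi'(u)=\tfrac1{3u^2}-4+4u$, which has the same sign as $12u^3-12u^2+1$; a sign analysis of this cubic gives exactly one root $u^\ast$ in $(\tfrac12,1]$, with $\psi$ decreasing on $(\tfrac12,u^\ast)$ and increasing on $(u^\ast,1]$, so the minimum there is $\psi(u^\ast)$. Writing $x^\ast=2u^\ast$, which satisfies $3x^3-6x^2+2=0$ --- equivalently $\Lambda^\ast:=1/x^\ast$ lies in $(0,1)$ and is the middle root of $2\Lambda^3-6\Lambda+3$, so $\Lambda^\ast=\Lambdatrunc$ --- one computes $\psi(u^\ast)=\tfrac32(x^\ast)^2-4x^\ast+3$, and eliminating $x^\ast$ via $3(x^\ast)^3=6(x^\ast)^2-2$ shows this value is a root of $6x^3-42x^2+54x-19$; it is the smallest root, and one checks it is $<\tfrac23$, so it beats the other two ranges and is therefore the global minimum of $F$, attained exactly when $x=x^\ast$ and $y\equiv x^\ast/2\pmod{x^\ast}$, i.e.\ when $(\Lambda,\Delta)\in\{(\Lambdatrunc,\,m+\tfrac12):m\in\Z\}$. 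Passing to reciprocals reverses the coefficients of the cubic and yields the stated maximum $1.539389\ldots$ of $1/f$ at the same points.

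The main obstacle is the key step --- establishing rigorously that $\Omega(x,\cdot)$ is minimized at the half--period for every $x$; the convexity pairing above is the cleanest route I see. A lesser complication is that $\Omega(u,0)$ is only piecewise polynomial, with a separate piece on each $(\tfrac1{k+1},\tfrac1k]$, so that a priori infinitely many cases must be examined; this is defused by the elementary bound $\psi(u)\ge\tfrac1{3u}$, which settles all of $u\le\tfrac12$ at once and leaves only the two pieces $(\tfrac12,1]$ and $[1,\infty)$ to handle explicitly.
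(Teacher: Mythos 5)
Your proposal is correct, but it reaches the answer by a genuinely different route than the paper. The paper first invokes its Lemma \ref{Rachel} (which rests on Lemma A.4 of \cite{Katz-2013}) to pin $\Delta$ at half-integers, and then minimizes the resulting one-variable function in Lemma \ref{Olivia} by writing out the piecewise-polynomial formulas for $\Omega(1/x,0)$ and $\Omega(1/x,1/(2x))$ (Lemmata \ref{Mary} and \ref{Nancy}) and running a cubic sign analysis on every interval $[m-\tfrac12,m]$ and $[m,m+\tfrac12]$, $m\ge 1$. You instead (i) prove the half-period minimization yourself by pairing the $\pm$ terms of the periodization and using convexity of $\max(0,1-t)^2$ on $[0,\infty)$, (ii) exploit the identity $\Omega(x,0)+\Omega(x,x/2)=\Omega(x/2,0)$ (even plus odd multiples of $x/2$) to merge the two periodizations into the single function $\psi(u)=-\tfrac{1}{3u}+\Omega(u,0)$, and (iii) dispose of all $u\le\tfrac12$ at once via Poisson summation and the nonnegativity of $\widehat{g}$, leaving only $u\in(\tfrac12,1]$ and the trivial range $u\ge1$ for explicit calculus; the algebra identifying $u^\ast$ with $\Lambdatrunc$ and the value with the root of $6x^3-42x^2+54x-19$ checks out. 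Your route is self-contained and avoids the paper's infinite family of interval cases; the paper's route avoids Fourier analysis and produces, as a by-product, the explicit formulas for $\Phi$ and $\Psi$ that it reuses in Lemmata \ref{Barbara}--\ref{Christopher}. Two points to tighten in a full write-up: in the equality analysis of the convexity pairing, when $a=x/2<1$ but $a+s>1$ the strict inequality $g(a+s)+g(a-s)>2g(a)$ does not follow from strict convexity alone (the right endpoint lies in the flat part of $g$); it still holds since $s>1-a$ gives $(1-a+s)^2>2(1-a)^2$, or argue that equality would force $g$ to be affine on $[a-|s|,a+|s|]$, impossible when that interval meets $[0,1)$ nontrivially. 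Also, for $x\ge 2$ the $y$-minimizer is not unique, which is harmless for your attainment characterization only because in that range $F(x,y)\ge\psi(x/2)\ge\tfrac23$ strictly exceeds the claimed minimum; this should be said explicitly.
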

\begin{proof}
For any given value of $\Lambda$, we can use Lemma \ref{Rachel} below to find the values of $\Delta$ that globally minimize $\Omega(1/\Lambda,\Delta/\Lambda)$, namely $\Delta\in \{m+1/2: m \in \Z\}$.
This reduces our problem to the minimization of the single-variable function $g\colon \Rplus \to \R$ with
\[
g(\Lambda)=-\frac{2}{3} \Lambda + \Omega\left(\frac{1}{\Lambda},0\right) + \Omega\left(\frac{1}{\Lambda},\frac{1}{2\Lambda}\right),
\]
which achieves the claimed global minimum value when $\Lambda=\Lambdatrunc$ by Lemma \ref{Olivia}.
\end{proof}
The following gives the global minimum crosscorrelation demerit factor for Theorem \ref{Victor}\eqref{Reginald} and for Theorem \ref{Anthony}\eqref{Nora}.
\begin{lemma}\label{Barbara}
The function $f\colon \Rplus \times \R \to \R$ with 
\[
f(\Lambda,\Sigma)=-\frac{2}{3} \Lambda + \Omega\left(\frac{1}{\Lambda},0\right) + \Omega\left(\frac{1}{\Lambda},1+\frac{\Sigma}{\Lambda}\right).
\]
achieves a global minimum value of $0.649608\ldots$, the smallest root of the polynomial $6 x^3-42 x +54 x-19$.
This global minimum is attained if and only if $(\Lambda,\Sigma) \in \{(\Lambdatrunc,n+1/2-\Lambdatrunc): n \in \Z\}$, where $\Lambdatrunc=0.557874\ldots$, the middle root of the polynomial $2 x^3-6 x+3$.  Equivalently the function $1/f(\Lambda,\Sigma)$ achieves a global maximum value of $1.539389\ldots$, the largest root of the polynomial $19 x^3-54 x^2+42 x-6$ for the same values of $(\Lambda,\Sigma)$.
\end{lemma}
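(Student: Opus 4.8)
The plan is to reduce Lemma~\ref{Barbara} directly to Lemma~\ref{Jake} by an elementary change of variables. The key observation is the algebraic identity $1+\Sigma/\Lambda = (\Lambda+\Sigma)/\Lambda$, valid for all $\Lambda \in \Rplus$ and $\Sigma \in \R$. Thus if we set $\Delta = \Lambda+\Sigma$, then the function of Lemma~\ref{Barbara} becomes
\[
f(\Lambda,\Sigma) = -\frac{2}{3}\Lambda + \Omega\!\left(\frac{1}{\Lambda},0\right) + \Omega\!\left(\frac{1}{\Lambda},\frac{\Delta}{\Lambda}\right),
\]
which is exactly the function appearing in Lemma~\ref{Jake}, evaluated at $(\Lambda,\Delta)$. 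So the two optimization problems are really the same problem in disguise.

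First I would record that for each fixed $\Lambda \in \Rplus$ the map $\Sigma \mapsto \Lambda+\Sigma$ is a bijection of $\R$ onto $\R$, so $(\Lambda,\Sigma) \mapsto (\Lambda,\Lambda+\Sigma)$ is a bijection of $\Rplus \times \R$ onto itself. Consequently the function $f$ of Lemma~\ref{Barbara} has the same range, and hence the same global minimum value, as the function $f$ of Lemma~\ref{Jake}: namely the smallest root of $6x^3-42x^2+54x-19$, so that $1/f$ achieves the global maximum $1.539389\ldots$, the largest root of $19x^3-54x^2+42x-6$.

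Next I would translate the set of minimizers. By Lemma~\ref{Jake}, the minimum of the function there is attained exactly when $(\Lambda,\Delta) \in \{(\Lambdatrunc,m+1/2): m \in \Z\}$. Under the correspondence $\Delta = \Lambda+\Sigma$, a pair $(\Lambda,\Sigma)$ minimizes the function of Lemma~\ref{Barbara} if and only if $(\Lambda,\Lambda+\Sigma)$ minimizes the function of Lemma~\ref{Jake}, i.e.\ if and only if $\Lambda = \Lambdatrunc$ and $\Lambdatrunc + \Sigma = n+1/2$ for some $n \in \Z$. This says precisely that $(\Lambda,\Sigma) \in \{(\Lambdatrunc,n+1/2-\Lambdatrunc): n \in \Z\}$, which is the claimed set of minimizers; the statement about $1/f$ is immediate.

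There is essentially no obstacle here: the entire content is the identity $1+\Sigma/\Lambda = (\Lambda+\Sigma)/\Lambda$ together with the trivial bijectivity of $\Sigma \mapsto \Lambda+\Sigma$, which guarantees that passing from $\Sigma$ to $\Delta$ neither creates nor destroys minimizers. If one preferred a self-contained argument not citing Lemma~\ref{Jake}, one could instead imitate its proof verbatim: use Lemma~\ref{Rachel} to minimize $\Omega(1/\Lambda,1+\Sigma/\Lambda)$ over $\Sigma$ for fixed $\Lambda$ (forcing $\Lambda+\Sigma \in \{m+1/2 : m \in \Z\}$), note by periodicity of $\Omega(1/\Lambda,\cdot)$ that the resulting one-variable function is the same $g(\Lambda) = -\tfrac{2}{3}\Lambda + \Omega(1/\Lambda,0) + \Omega(1/\Lambda,1/(2\Lambda))$ that appears in the proof of Lemma~\ref{Jake}, and then invoke Lemma~\ref{Olivia}; but the change-of-variables route is cleaner.
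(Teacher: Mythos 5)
Your proof is correct. The only thing to note is that it takes a slightly different route from the paper: you reduce Lemma~\ref{Barbara} to Lemma~\ref{Jake} by the substitution $\Delta=\Lambda+\Sigma$, using the identity $1+\Sigma/\Lambda=(\Lambda+\Sigma)/\Lambda$ and the fact that $(\Lambda,\Sigma)\mapsto(\Lambda,\Lambda+\Sigma)$ is a bijection of $\Rplus\times\R$ onto itself, so both the minimum value and the exact set of minimizers transfer, and the translated set $\{(\Lambdatrunc,n+1/2-\Lambdatrunc):n\in\Z\}$ matches the claim. The paper instead proves Lemma~\ref{Barbara} by rerunning the same two-step optimization used for Lemma~\ref{Jake}: for fixed $\Lambda$ it applies Lemma~\ref{Rachel} to pin down the minimizing $\Sigma$, and then minimizes the resulting one-variable function $g(\Lambda)=-\tfrac{2}{3}\Lambda+\Omega(1/\Lambda,0)+\Omega(1/\Lambda,1/(2\Lambda))$ via Lemma~\ref{Olivia} --- exactly the alternative you sketch at the end. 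Your change-of-variables argument is cleaner and avoids any repetition, at the cost of depending on the full ``if and only if'' characterization of minimizers in Lemma~\ref{Jake} (which is available, since that lemma precedes this one); the paper's version keeps the two lemmata structurally parallel and self-contained with respect to Lemmata~\ref{Rachel} and~\ref{Olivia}. Either way the content is the same, so there is no gap.
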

\begin{proof}
For any given value of $\Lambda$, we can use Lemma \ref{Rachel} below to find the values of $\Sigma$ that globally minimize $\Omega(1/\Lambda,1+\Sigma/\Lambda)$, namely $\Sigma\in \{n+1/2-\Lambda: n \in \Z\}$.
This reduces our problem to the minimization of the single-variable function $g\colon \Rplus \to \R$ with
\[
g(\Lambda)=-\frac{2}{3} \Lambda + \Omega\left(\frac{1}{\Lambda},0\right) + \Omega\left(\frac{1}{\Lambda},\frac{1}{2\Lambda}\right),
\]
which achieves the claimed global minimum value when $\Lambda=\Lambdatrunc$ by Lemma \ref{Olivia}.
\end{proof}
The following gives the global minimum crosscorrelation demerit factor for Theorem \ref{Victor}\eqref{Ophelia} and for Theorem \ref{Anthony}\eqref{Walter}.
\begin{lemma}\label{Leonard}
The function $f\colon \Rplus \to \R$ with
\[
f(\Lambda)= \Omega\left(\frac{1}{\Lambda},0\right)
\]
achieves a global minimum value of $1$.  This global minimum is attained if and only if $\Lambda \in (0,1]$.  Equivalently the function $1/f(\Lambda)$ achieves a global maximum value of $1$ for the same values of $\Lambda$.
\end{lemma}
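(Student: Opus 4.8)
The plan is to evaluate $f(\Lambda)=\Omega(1/\Lambda,0)$ directly from the definition \eqref{Orlando}, term by term. Writing $x=1/\Lambda>0$, we have $f(\Lambda)=\sum_{n\in\Z}\max(0,1-|n|/\Lambda)^2$, a finite sum for each $\Lambda$ (as remarked after \eqref{Orlando}, only finitely many summands are nonzero). First I would isolate the $n=0$ summand, which contributes exactly $\max(0,1)^2=1$ regardless of $\Lambda$. Since every summand is nonnegative, this immediately yields $f(\Lambda)\geq 1$ for all $\Lambda\in\Rplus$, which is the claimed lower bound, and it is attained (e.g.\ at $\Lambda=1$), so it is the global minimum value.

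Next I would characterize the equality case. Equality $f(\Lambda)=1$ forces $\max(0,1-|n|/\Lambda)^2=0$ for every nonzero $n$, i.e.\ $|n|/\Lambda\geq 1$ for all $n\in\Z\setminus\{0\}$. The tightest of these constraints is the one at $n=\pm 1$, namely $1/\Lambda\geq 1$; conversely $1/\Lambda\geq 1$ forces $|n|/\Lambda\geq 1$ for every nonzero $n$, hence all those summands vanish and $f(\Lambda)=1$. So the global minimum is attained precisely when $0<\Lambda\leq 1$, i.e.\ $\Lambda\in(0,1]$. The statement about $1/f(\Lambda)$ then follows at once: $f$ is positive on all of $\Rplus$ (the $n=0$ term alone is positive), so $1/f$ is well-defined and achieves its global maximum value $1/1=1$ exactly where $f$ achieves its minimum, namely on $(0,1]$.

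There is essentially no analytic obstacle in this lemma; the only points requiring (minor) care are that $\Omega(1/\Lambda,0)$ is a finite sum for each admissible $\Lambda$ and that the domain is $\Rplus$ rather than $\R\setminus\{0\}$, so no negative values of $\Lambda$ and no limiting behavior as $\Lambda\to 0^+$ need to be considered. Once the $n=0$ term is separated out, the argument is a couple of lines.
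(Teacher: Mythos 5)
Your proof is correct. It does, however, take a different (and more self-contained) route than the paper: the paper disposes of this lemma in one line by citing Lemma \ref{Mary}, which computes $\Omega(1/x,0)$ explicitly as a piecewise polynomial on each interval $[m,m+1]$ and shows its derivative vanishes on $(0,1]$ and is strictly positive for $x>1$, whence the value is $1$ on $(0,1]$ and exceeds $1$ beyond. You instead argue directly from the definition \eqref{Orlando}: the $n=0$ summand equals $1$, all summands are nonnegative (so $f(\Lambda)\geq 1$ without even needing finiteness of the sum), and equality holds exactly when every $n\neq 0$ summand vanishes, i.e.\ when $1/\Lambda\geq 1$, i.e.\ $\Lambda\in(0,1]$; the statement about $1/f$ follows since $f>0$. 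Your argument is more elementary and avoids the piecewise formulas and derivative computations entirely; the paper's choice is natural in context because Lemma \ref{Mary} is needed anyway for the harder optimizations in Lemmata \ref{Jake}--\ref{Christopher} and its monotonicity statement gives slightly more information (that $f$ is strictly increasing for $\Lambda>1$), but for this lemma alone your direct termwise argument is entirely adequate and arguably cleaner.
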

\begin{proof}
This is immediate from the calculations in Lemma \ref{Mary}.
\end{proof}
The following gives the global minimum crosscorrelation demerit factor for Theorem \ref{Anthony}\eqref{Hubert}.
\begin{lemma}\label{Christopher}
The function $f\colon \Rplus \times \R \times \R \to \R$ with 
\[
f(\Lambda,\Delta,\Sigma)=-\frac{4}{3} \Lambda + \Omega\left(\frac{1}{\Lambda},0\right) + \Omega\left(\frac{1}{\Lambda},\frac{\Delta}{\Lambda}\right) + \Omega\left(\frac{1}{\Lambda},1+\frac{\Sigma}{\Lambda}\right).
\]
achieves a global minimum value of $0.299216\ldots$, the smallest root of the polynomial $3 x^3-33 x +33 x-7$.  This global minimum is attained if and only if $(\Lambda,\Delta,\Sigma) \in \{(\Lambdatrunc,m+1/2,n+1/2-\Lambdatrunc): m,n \in \Z\}$, where $\Lambdatrunc=0.557874\ldots$, the middle root of the polynomial $2 x^3-6 x+3$.  Equivalently the function $1/f(\Lambda,\Delta,\Sigma)$ achieves a global maximum value of $3.342065\ldots$, the largest root of the polynomial $7 x^3-33 x^2+33 x-3$ for the same values of $(\Lambda,\Delta,\Sigma)$.
\end{lemma}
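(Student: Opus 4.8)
The plan is to follow the two-stage reduction of Lemmata \ref{Jake} and \ref{Barbara}, now with three free parameters. First I would fix $\Lambda\in\Rplus$ and minimize over $\Delta$ and $\Sigma$. The term $\Omega(1/\Lambda,0)$ does not involve $\Delta$ or $\Sigma$, while $\Omega(1/\Lambda,\Delta/\Lambda)$ involves only $\Delta$ and $\Omega(1/\Lambda,1+\Sigma/\Lambda)$ only $\Sigma$, so the two minimizations decouple. By Lemma \ref{Rachel} (applied exactly as in the proofs of Lemmata \ref{Jake} and \ref{Barbara}), for each fixed $\Lambda$ the map $\Delta\mapsto\Omega(1/\Lambda,\Delta/\Lambda)$ attains its global minimum precisely on $\{m+1/2:m\in\Z\}$ and $\Sigma\mapsto\Omega(1/\Lambda,1+\Sigma/\Lambda)$ precisely on $\{n+1/2-\Lambda:n\in\Z\}$, and, using the $\Lambda$-periodicity $\Omega(x,y+x)=\Omega(x,y)$ in the second slot, both minimum values equal $\Omega(1/\Lambda,1/(2\Lambda))$. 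This reduces the problem to minimizing over $\Lambda>0$ the single-variable function
\[
g(\Lambda)=-\frac{4}{3}\Lambda+\Omega\left(\frac{1}{\Lambda},0\right)+2\,\Omega\left(\frac{1}{\Lambda},\frac{1}{2\Lambda}\right).
\]

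Second I would carry out this single-variable optimization. On $(0,1]$ we have $\Omega(1/\Lambda,0)=1$, hence $g(\Lambda)=2g_0(\Lambda)-1$ there, where $g_0(\Lambda)=-\tfrac23\Lambda+\Omega(1/\Lambda,0)+\Omega(1/\Lambda,1/(2\Lambda))$ is precisely the function whose behaviour is pinned down in the proofs of Lemmata \ref{Jake} and \ref{Barbara} by Lemma \ref{Olivia}: its unique global minimizer on $\Rplus$ is $\Lambdatrunc$ (the middle root of $2x^3-6x+3$, which indeed lies in $(0,1)$), with minimum value $0.649608\ldots$, the smallest root of $6x^3-42x^2+54x-19$. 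Thus on $(0,1]$ the minimum of $g$ is $2(0.649608\ldots)-1=0.299216\ldots$, attained only at $\Lambdatrunc$, and one checks that this number is the smallest root of $3x^3-33x^2+33x-7$. For $\Lambda>1$, where $\Omega(1/\Lambda,0)=1+2(1-1/\Lambda)^2+\cdots$ acquires extra terms, I would bound $g$ below directly from the explicit piecewise-polynomial form of $\Omega$: on $(1,3/2]$ one has $g(\Lambda)=-\tfrac43\Lambda+1+2(1-1/\Lambda)^2+4(1-1/(2\Lambda))^2$, which is increasing, so $g(\Lambda)\ge g(1)=2/3>0.299216\ldots$; and on the remaining tail $g\to\infty$ because $\Omega(1/\Lambda,0)$ and $\Omega(1/\Lambda,1/(2\Lambda))$ each grow like $\tfrac23\Lambda$, overpowering the $-\tfrac43\Lambda$ term.

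Assembling: $f$ attains its global minimum $0.299216\ldots$ exactly when $\Lambda=\Lambdatrunc$, $\Delta\in\{m+1/2\}$, and $\Sigma\in\{n+1/2-\Lambdatrunc\}$, and taking reciprocals gives the global maximum $1/f=3.342065\ldots$, the largest root of $7x^3-33x^2+33x-3$ (the coefficient reversal of $3x^3-33x^2+33x-7$), at the same parameter values. The ``if and only if'' is immediate because each of the three intermediate minimizations — over $\Delta$, over $\Sigma$, and over $\Lambda$ — has the stated set as its exact solution set, and the parameters decouple at the optimum.

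The step I expect to be the main obstacle is the $\Lambda>1$ part of the single-variable analysis: $\Omega(1/\Lambda,\cdot)$ changes its polynomial formula at the breakpoints $\Lambda\in\{1,3/2,2,\dots\}$, so one must either track several pieces and verify monotonicity and coercivity on each, or invoke a suitably general form of the single-variable minimization lemma (Lemma \ref{Olivia}) that already absorbs this case. Everything else is a routine adaptation of the proofs of Lemmata \ref{Jake} and \ref{Barbara}.
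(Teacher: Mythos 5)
Your overall strategy is the paper's: fix $\Lambda$, apply Lemma \ref{Rachel} to minimize the two shift-dependent terms (both minima equal $\Omega(1/\Lambda,1/(2\Lambda))$), and reduce to the one-variable function $-\tfrac{4}{3}\Lambda+\Omega(1/\Lambda,0)+2\,\Omega(1/\Lambda,1/(2\Lambda))$. Where you diverge is in finishing that one-variable minimization, and there you make unnecessary work for yourself: the function you call $g(\Lambda)$ is \emph{exactly} the function $f$ of Lemma \ref{Olivia}, which already asserts a unique global minimizer at $\Lambdatrunc$ with minimum value $0.299216\ldots$; the paper's proof of this lemma simply cites it and stops. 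Your substitute argument is correct on $(0,1]$ (via $g=2g_0-1$ together with Olivia's statement about its function $g$) and on $(1,3/2]$ (your piecewise formula is right and one checks $g'>0$ there), but on $(3/2,\infty)$ the observation that the $\Omega$-terms ``grow like $\tfrac{2}{3}\Lambda$'' only yields $g\to\infty$; by itself it does not rule out a dip below $0.299216\ldots$ at moderate $\Lambda$, so as written that piece is incomplete. This is precisely the case Lemma \ref{Olivia} absorbs -- its proof shows $f'>0$ on $(\Lambdatrunc,\infty)$ using the piecewise formulas of Lemmata \ref{Mary} and \ref{Nancy} -- so invoking Olivia for its $f$ rather than its $g$ closes the gap and reproduces the paper's argument. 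One small additional point: for $\Lambda\le 1/2$ Lemma \ref{Rachel} gives a minimizing set for $\Delta$ (and $\Sigma$) larger than $\{m+1/2\}$, so your ``precisely'' is not literally true for every fixed $\Lambda$; since the unique optimal $\Lambda$ is $\Lambdatrunc>1/2$, the ``if and only if'' conclusion is unaffected, and the paper's proofs of Lemmata \ref{Jake}, \ref{Barbara}, and \ref{Christopher} word this step with the same looseness.
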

\begin{proof}
For any given value of $\Lambda$, we can use Lemma \ref{Rachel} below to find the values of $\Delta$ and $\Sigma$ that globally minimize $\Omega(1/\Lambda,\Delta/\Lambda)$ and $\Omega(1/\Lambda,1+\Sigma/\Lambda)$, respectively, namely $\Delta\in \{m+1/2: m \in \Z\}$ and $\Sigma \in \{n+1/2-\Lambda: n \in \Z\}$.
This reduces our problem to the minimization of the single-variable function $g\colon \Rplus \to \R$ with
\[
g(\Lambda)=-\frac{4}{3} \Lambda + \Omega\left(\frac{1}{\Lambda},0\right) + 2 \Omega\left(\frac{1}{\Lambda},\frac{1}{2\Lambda}\right),
\]
which achieves the claimed global minimum value when $\Lambda=\Lambdatrunc$ by Lemma \ref{Olivia}.
\end{proof}

We conclude this section with the technical lemmata used to prove Lemmata \ref{Jake}--\ref{Christopher}.
\begin{lemma}\label{Rachel}
Let $x$ be a fixed nonzero real number, and let $f \colon \R \to \R$ with $f(y)=\Omega(x,y)$.
If $0 < |x| \leq 2$, then the global minimum value of $f(y)$ is $\Omega(x,x/2)$ and is attained if and only if $y \in \{(n+1/2)|x|: n \in \Z\}$.
If $|x| \geq 2$, then the global minimum value of $f(y)$ is $0$ (which equals $\Omega(x,x/2)$), and is attained if and only if $y \in \bigcup_{n \in \Z} [n|x|+1,(n+1)|x|-1]$.
\end{lemma}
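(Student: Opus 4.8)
The plan is to reduce everything to the single-variable function $f(y) = \sum_{n \in \Z} \phi(y - a n)$, where $a := |x| > 0$ and $\phi(t) = \max(0, 1-|t|)^2$. First I would record the elementary symmetries: replacing $n$ by $-n$ in the defining sum gives $\Omega(x,y) = \Omega(|x|,y)$, so we may assume $x = a > 0$; then reindexing $n \mapsto n-1$ shows $f$ has period $a$, and $\Omega(x,-y)=\Omega(x,y)$ together with periodicity gives $f(a - y) = f(y)$, so $f$ is symmetric about $y = a/2$ and it suffices to understand $f$ on $[0, a/2]$. Since $\phi$ is supported on $[-1,1]$ with a peak corner only at $0$ and $C^1$ junctions at $\pm 1$, and no peak corner $y = an$ lies in the open interval $(0, a/2)$, the function $f$ is continuously differentiable there (indeed piecewise polynomial).

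For the easy case $a \ge 2$, I would show directly that $f$ vanishes on $[1, a - 1]$: for $y$ in this interval every summand $\phi(y - an)$ is zero because $|y - an| \ge 1$ for all $n$ (checking $n = 0$, $n = 1$, and $|n| \ge 2$ separately). By $a$-periodicity $f \equiv 0$ on $\bigcup_n [an+1, a(n+1) - 1]$, and since $f \ge 0$ this is the global minimum value $0$; on the complementary set $\bigcup_n (an - 1, an + 1)$ the single term $\phi(y - an)$ with that $n$ is strictly positive, so $f > 0$ there. Since $a/2 \ge 1$ lies in $[1, a-1]$, we get $\Omega(x, x/2) = \Omega(a, a/2) = 0$, matching the claimed value, and $\Omega(a,a/2)=0$.

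The substantive case is $0 < a \le 2$, where the goal is to prove $f$ is strictly decreasing on $[0, a/2]$; the symmetry $f(a-y)=f(y)$ and period $a$ then force $y \in \{(n+\tfrac12)a\}$ to be the exact set of global minimizers, with value $\Omega(a, a/2)$. Differentiating term by term on $(0, a/2)$ using $\phi'(t) = -2(1-t)$ on $(0,1)$, $\phi'(t) = 2(1+t)$ on $(-1,0)$, and $\phi' = 0$ off $[-1,1]$, and grouping, yields the closed form $f'(y) = 2\bigl(g(1+y) - g(1-y) - (1+y)\bigr)$, where $g(s) = \sum_{k \ge 0}\max(0, s - a k)$. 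Using the quasi-periodicity $g(s + a) = g(s) + s + a$ and the explicit formula $g(s) = (k+1)s - \tfrac{a}{2}k(k+1)$ for $s \in (a k, a(k+1)]$, I would split into the two cases according to whether $1 - y$ and $1 + y$ lie in the same interval $(a k, a(k+1)]$; in each case the desired inequality $g(1+y) - g(1-y) < 1 + y$ reduces to the hypothesis $2y < a$ (in the "same interval" case after one extra elementary estimate using $y < a/2 \le 1$). Hence $f'(y) < 0$ throughout $(0, a/2)$.

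Finally I would assemble the pieces: strict monotonicity on the half-period, the symmetry about $a/2$, and period $a$ pin down the global minimum locus as $\{a/2 + an : n \in \Z\} = \{(n+\tfrac12)|x|\}$ with value $f(a/2) = \Omega(x, x/2)$, and the boundary value $a = 2$ is checked to be consistent with both descriptions ($\{(n+\tfrac12)\cdot 2\} = \{2n+1\} = \bigcup_n[2n+1,2n+1]$, value $0$). The main obstacle is the monotonicity statement for $0 < a \le 2$: obtaining the clean closed form for $f'$ and then controlling the contributions of the infinitely many summands — which is exactly where the auxiliary function $g$ and its quasi-periodicity do the work — is the only place that needs genuine care.
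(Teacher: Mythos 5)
Your proof is correct, but it is worth knowing that the paper does not actually prove this lemma internally: its entire ``proof'' is a citation to Lemma A.4 of the earlier paper \cite{Katz-2013}, supplemented only by the trivial remark that $\Omega(x,x/2)=0$ when $|x|\ge 2$ (your observation that $a/2\in[1,a-1]$). So you have replaced an external reference with a self-contained argument, and the argument checks out. The reduction to $a=|x|$, the period-$a$ and reflection symmetries, and the exhaustive description of the zero set for $a\ge 2$ are all straightforward and correct, including the ``only if'' half (a single summand is strictly positive on each gap $(an-1,an+1)$). The substantive part, strict monotonicity on $(0,a/2)$ when $0<a\le 2$, also works as you describe: the locally finite sum differentiates termwise (the junctions at $t=\pm1$ are $C^1$), giving exactly $f'(y)=2\bigl(g(1+y)-g(1-y)-(1+y)\bigr)$ with $g(s)=\sum_{k\ge0}\max(0,s-ak)$, and since $2y<a$ the points $1-y,1+y$ lie in the same or in adjacent half-open intervals $(ak,a(k+1)]$; in the adjacent case the inequality $g(1+y)-g(1-y)<1+y$ reduces to $k(2y-a)<0$ with $k\ge1$, and in the same-interval case to $2y(k+1)<1+y$, which follows from $k<(1-y)/a$ together with $(1-y)(2y-a)<0$, i.e.\ exactly the extra estimate you flag using $0<y<a/2\le1$. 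Monotonicity, symmetry about $a/2$, and periodicity then pin down the minimizer set $\{(n+\tfrac12)|x|\}$ and the value $\Omega(x,x/2)$, with the boundary case $|x|=2$ consistent between the two descriptions. What the paper's route buys is brevity; what yours buys is a verifiable proof of the precise minimizer sets without consulting \cite{Katz-2013}.
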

\begin{proof}
Everything stated here is proved in \cite[Lemma A.4]{Katz-2013}, except for the trivial observation that $\Omega(x,x/2)=0$ when $|x| \geq 2$.
\end{proof}

\begin{lemma}\label{Olivia}
Consider the functions $f\colon \Rplus \to \R$ and $g\colon \Rplus \to \R$ with
\begin{align*}
f(x) & =-\frac{4}{3} x+ \Omega\left(\frac{1}{x},0\right) + 2 \Omega\left(\frac{1}{x},\frac{1}{2 x}\right) \\
g(x) & =-\frac{2}{3} x+ \Omega\left(\frac{1}{x},0\right) + \Omega\left(\frac{1}{x},\frac{1}{2 x}\right).
\end{align*}
Each of these functions achieves its unique global minimum value when $x=0.557874\ldots$, the middle root of the polynomial $2 X^3-6 X+3$.  The global minimum value of $f$ is $0.299216\ldots$, the smallest root of the polynomial $3 X^3-33 X +33 X-7$, while the global minimum value of $g$ is $0.649608\ldots$, the smallest root of the polynomial $6 X^3-42 X^2+54 X-19$.
\end{lemma}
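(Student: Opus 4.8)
The plan is to reduce $f$ and $g$ to explicit rational functions near the candidate minimizer, read off the critical point, and then control both functions on the rest of $\Rplus$ using elementary estimates on $\Omega$.

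\textbf{Step 1 (reduction).} Since $\Omega(1/x,0)=\sum_{n}\max(0,1-|n|/x)^2$ has a nonzero term exactly when $|n|<x$, we have $\Omega(1/x,0)=1$ for $x\in(0,1]$ and, in general, $\Omega(1/x,0)=1+2\sum_{1\le j<x}(1-j/x)^2\ge 1$. Likewise $\Omega(1/x,1/(2x))=2\sum_{m\ge 0}\max(0,1-(2m+1)/(2x))^2$ vanishes on $(0,\tfrac12]$ and equals $2(1-1/(2x))^2$ on $(\tfrac12,\tfrac32]$. I will also use the identity
\[
\Omega(1/x,0)+\Omega(1/x,1/(2x))=\Omega(1/(2x),0),
\]
obtained by splitting the sum defining $\Omega(1/(2x),0)$ according to the parity of its index; this gives $g(x)=-\tfrac23 x+\Omega(1/(2x),0)$ and $f(x)=-\tfrac43 x-\Omega(1/x,0)+2\,\Omega(1/(2x),0)$ for all $x>0$. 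Combining with the explicit pieces above, on $(0,\tfrac12]$ both are affine ($f(x)=-\tfrac43 x+1$, $g(x)=-\tfrac23 x+1$), hence strictly decreasing and bounded below by $f(\tfrac12)=\tfrac13$ and $g(\tfrac12)=\tfrac23$; and on $(\tfrac12,1]$,
\[
f(x)=-\tfrac43 x+5-\tfrac4x+\tfrac1{x^2},\qquad g(x)=-\tfrac23 x+3-\tfrac2x+\tfrac1{2x^2}.
\]

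\textbf{Step 2 (the critical point and the minimum values).} Differentiating on $(\tfrac12,1]$ gives $f'(x)=\dfrac{-2}{3x^3}(2x^3-6x+3)$ and $g'(x)=\dfrac{-1}{3x^3}(2x^3-6x+3)$, so both vanish exactly where $2x^3-6x+3=0$. Since $2x^3-6x+3$ is strictly decreasing on $(0,1)$ with value $\tfrac14$ at $x=\tfrac12$ and $-1$ at $x=1$, it has a unique root $\Lambdatrunc$ in $(\tfrac12,1)$, namely its middle root; and $f,g$ are strictly decreasing on $(\tfrac12,\Lambdatrunc)$ and strictly increasing on $(\Lambdatrunc,1]$, so each has its unique minimum on $(\tfrac12,1]$ at $\Lambdatrunc$. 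To identify $m:=f(\Lambdatrunc)$ and $g(\Lambdatrunc)$ I will use the defining relation $\Lambdatrunc^{\,3}=3\Lambdatrunc-\tfrac32$: multiplying the formula for $f$ by $\Lambdatrunc^{\,2}$ and reducing the cubic term produces a relation linear in $\Lambdatrunc^{\,2}$ between $\Lambdatrunc$ and $m$, and eliminating $\Lambdatrunc$ against $2\Lambdatrunc^{\,3}-6\Lambdatrunc+3=0$ (a resultant) yields $3m^3-33m^2+33m-7=0$; the analogous computation gives $6\,g(\Lambdatrunc)^3-42\,g(\Lambdatrunc)^2+54\,g(\Lambdatrunc)-19=0$. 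A numerical evaluation ($f(\Lambdatrunc)\approx 0.2992$, $g(\Lambdatrunc)\approx 0.6496$) pins these down as the smallest roots of the respective cubics, and both lie below $f(\tfrac12)=\tfrac13$ and $g(\tfrac12)=\tfrac23$.

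\textbf{Step 3 (the tail $[1,\infty)$).} Writing $\Phi(y)=\Omega(1/y,0)$, so $g(x)=-\tfrac23 x+\Phi(2x)$ and $f(x)=-\tfrac43 x-\Phi(x)+2\Phi(2x)$, one computes $\Phi'(y)=\tfrac23 h_k(y)$ on each interval $(k,k+1)$ with $h_k(y)=\dfrac{k(k+1)}{y^2}\Bigl(3-\dfrac{2k+1}{y}\Bigr)$ (and $\Phi'\equiv 0$ on $(0,1]$; $\Phi$ is in fact $C^1$ since each summand $\max(0,1-j/y)^2$ vanishes to second order at $y=j$). Now $h_k$ is maximized at $y=k+\tfrac12$ with value $\tfrac{4k(k+1)}{(2k+1)^2}<1$, so $0\le\Phi'(y)<\tfrac23$ everywhere; and $h_k$ is minimized on $(k,k+1)$ at $y=k$ with value $\tfrac{k^2-1}{k^2}\ge\tfrac34$ once $k\ge2$, so $\Phi'(y)\ge\tfrac12$ for $y\ge2$. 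Hence for $x\ge1$ we get $g'(x)=-\tfrac23+2\Phi'(2x)\ge\tfrac13>0$ and $f'(x)=-\tfrac43-\Phi'(x)+4\Phi'(2x)>-\tfrac43-\tfrac23+2=0$, so $f$ and $g$ are increasing on $[1,\infty)$, with $f\ge f(1)=\tfrac23$ and $g\ge g(1)=\tfrac56$ there, again above the minimum values of Step 2.

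\textbf{Step 4 (conclusion and main obstacle).} Assembling the three ranges, each of $f$ and $g$ is strictly decreasing on $(0,\Lambdatrunc)$ and strictly increasing on $(\Lambdatrunc,\infty)$, so its unique global minimum over $\Rplus$ occurs at $\Lambdatrunc$, with the stated value; taking reciprocals gives the corresponding global maxima. The step I expect to be the real work is Step 3, the uniform control of $\Phi'$ over the infinitely many intervals $(k,k+1)$, since it is precisely what upgrades the local minimality at $\Lambdatrunc$ to a global statement; the exact-value identification in Step 2 is a finite elimination that is routine but slightly fiddly.
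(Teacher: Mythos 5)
Your proof is correct, and its global (tail) part takes a genuinely different route from the paper's. The paper keeps the two functions $\Phi(x)=\Omega(1/x,0)$ and $\Psi(x)=\Omega(1/x,1/(2x))$ separate (Lemmata \ref{Mary} and \ref{Nancy}), writes $3x^3f'(x)$ as an explicit cubic on each half-integer interval $[m,m+1/2]$ and $[m-1/2,m]$, and checks positivity case by case for every positive integer $m$, with the interval $[1/2,1]$ producing the sign change at $\Lambdatrunc$; it then treats $g$ via $g=(f+\Phi)/2$ together with $\Phi'\ge 0$. You instead use the parity-splitting identity $\Omega(1/x,0)+\Omega(1/x,1/(2x))=\Omega(1/(2x),0)$ (which is correct) to express both functions through $\Phi$ alone, giving the clean formula $g(x)=-\tfrac23x+\Phi(2x)$, and you replace the paper's infinite case analysis by the uniform bounds $0\le\Phi'<\tfrac23$ on $\Rplus$ and $\Phi'\ge\tfrac12$ on $[2,\infty)$, which settle all of $[1,\infty)$ in one stroke; your analysis of $(0,1]$ and the algebraic identification of the minimizer are essentially the paper's. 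What the paper's method buys is explicit closed forms throughout, in particular $f(\Lambdatrunc)=\tfrac43\Lambdatrunc^2-2\Lambdatrunc+1$, from which the cubic $3X^3-33X^2+33X-7$ and then, via $g(\Lambdatrunc)=(f(\Lambdatrunc)+1)/2$, the cubic $6X^3-42X^2+54X-19$ follow at once; what yours buys is brevity and a more conceptual monotonicity argument. Minor polishing: carry out the Step 2 elimination by recording $f(\Lambdatrunc)=\tfrac43\Lambdatrunc^2-2\Lambdatrunc+1$ rather than appealing to a resultant plus numerics; justify ``middle root'' with the sign checks of $2X^3-6X+3$ at, say, $0$, $1$, $2$; and say ``minimized on $[k,k+1]$ at the endpoint $y=k$'' rather than on the open interval.
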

\begin{proof}
Lemmata \ref{Mary} and \ref{Nancy} tell us that $f^\prime(x)=-4/3$ when $x \in (0,1/2]$.

If $m$ is a positive integer and $x \in [m,m+1/2]$, then Lemmata \ref{Mary} and \ref{Nancy} tell us that
\begin{align*}
f'(x) & = -\frac{4}{3} + \frac{2 m(m+1)}{x^2} - \frac{2 m(m+1)(2 m+1)}{3 x^3} + \frac{4 m^2}{x^2} - \frac{2 m(4 m^2-1)}{3 x^3} \\
& = -\frac{4}{3} + \frac{6 m^2+2 m}{x^2} - \frac{12 m^3+6 m^2}{3 x^3},
\end{align*}
which has the same sign as $h(x)=3 x^3 f^\prime(x)=-4 x^3 + (18 m^2+6 m) x -(12 m^3+6 m^2)$.  Note that $h(m)=2 m^3$ and $h'(x)=-12 x^2+ 18 m^2+6 m$.  So if $m \geq 2$, we see that $h(x)> 0$ for all $x \in [m,m+1/2]$ because $h(m) > 0$ and $h'(x) \geq h'(m+1/2) = 6 m(m-1)-3 > 0$ for all $x \in [m,m+1/2]$.
On the other hand, if $m=1$, then $h(x)=-4 x^3+24 x-18$ and $h'(x)=-12 x^2+24$, so that $h(x)$ is increasing on $[1,\sqrt{2})$ and then decreasing on $(\sqrt{2},1]$ and so for $x \in [1,3/2]$ we have $h(x) \geq \min(h(1),h(3/2))=\min(2,9/2)=2$ so $h(x) > 0$ for all $x \in [1,3/2]$.
So $h(x)> 0$ (and hence $f^\prime(x) > 0$) for all $x \in [m,m+1/2]$ for every positive integer $m$.

If $m$ is a positive integer and $x \in [m-1/2,m]$, then Lemmata \ref{Mary} and \ref{Nancy} tell us that
\begin{align*}
f'(x) & = -\frac{4}{3} + \frac{2 (m-1)m}{x^2} - \frac{2 (m-1)m(2 m-1)}{3 x^3} + \frac{4 m^2}{x^2} - \frac{2 m(4 m^2-1)}{3 x^3} \\
& = -\frac{4}{3} + \frac{6 m^2-2 m}{x^2} - \frac{12 m^3-6 m^2}{3 x^3},
\end{align*}
which has the same sign as $h(x)=3 x^3 f^\prime(x)=-4 x^3 + (18 m^2-6 m) x -(12 m^3-6 m^2)$.  Note that $h(m-1/2)=2 m^3-3 m^2+1/2$ and $h'(x)=-12 x^2+ 18 m^2-6 m$.  So if $m \geq 2$, we see that $h(x)> 0$ for all $x \in [m-1/2,m]$ because $h(m-1/2) > 0$ and $h'(x) \geq h'(m) = 6 m(m-1) >  0$ for all $x \in [m-1/2,m]$.  So $h(x) > 0$ (and hence $f^\prime(x) > 0$) for all $x \in [m-1/2,m]$ when $m \geq 2$.  On the other hand, if $m=1$, then $h(x)=-2(2 x^3-6x+3)$.  Note that the polynomial $P(x)=-2(2 x^3-6 x+3)$ has $P(-2)=2$, $P(1/2)=-1/2$, $P(1)=2$, and $P(2)=-14$, so $P(x)$ has one root in each of the intervals $(-2,1/2)$, $(1/2,1)$, and $(1,2)$ and $\Lambdatrunc$ as described in the statement of this lemma is the middle root.  And so we see that $h(x) < 0$ for $x \in [1/2,\Lambdatrunc)$, $h(\Lambdatrunc)=0$, and $h(x) > 0$ for $x \in (\Lambdatrunc,1]$, and recall that  $f^\prime(x)=h(x)/3 x^3$ always has the same sign as $h(x)$.

Putting all our findings together, we have shown that $f^\prime(x) < 0$ for $x \in (0,\Lambdatrunc)$, $f^\prime(\Lambdatrunc)=0$, and $f^\prime(x) > 0$ for $x \in (\Lambdatrunc,\infty)$, so that $f(x)$ has a unique global minimum at $x=\Lambdatrunc$.  Since $\Lambdatrunc \in (1/2,1)$, Lemmata \ref{Mary} and \ref{Nancy} show that
\[
f(\Lambdatrunc)=-\frac{4}{3} \Lambdatrunc + 1 + 4 - \frac{4}{\Lambdatrunc} + \frac{1}{\Lambdatrunc^2}.
\]
Since $\Lambdatrunc$ satisfies $2 X^3-6 X+3$, we see that $1/\Lambdatrunc=2-2 \Lambdatrunc^2/3$, from which one can deduce (using the relation $2\Lambdatrunc^3-6\Lambdatrunc+3=0$) that $f(\Lambdatrunc)=4\Lambdatrunc^2/3-2\Lambdatrunc+1$, which is a root of the polynomial $3 X^3-33 X^2+33 X-7$.
This polynomial has one root in each of the intervals $(0,2/3)$, $(2/3,1)$, and $(1,10)$.  We see that $f(\Lambdatrunc)$ must be the root in $(0,2/3)$ since $f(\Lambdatrunc) < f(1)=2/3$.

Note that $g(x)=(f(x)+\Omega(1/x,0))/2$, and since Lemma \ref{Mary} tells us that $\Omega(1/x,0)$ always has a nonnegative derivative with respect to $x$, we see that $g^\prime(x)$ is positive whenever $f^\prime(x)$ is positive.  Thus we know that $g^\prime(x) > 0$ for all $x \in (\Lambdatrunc,\infty)$.  Furthermore, Lemma \ref{Mary} tells us that the derivative of $\Omega(1/x,0)$ vanishes for $x \leq 1$, so $g^\prime(x)=f^\prime(x)/2$ for $x \in (0,\Lambdatrunc]$, so that $g^\prime(x) < 0$ for $x \in (0,\Lambdatrunc)$ and $g^\prime(\Lambdatrunc)=0$.  Thus we conclude that $g(x)$ achieves a unique global minimum at $x=\Lambdatrunc$.  And $g(\Lambdatrunc)=(f(\Lambdatrunc)+1)/2$, so $g(\Lambdatrunc)$ is the smallest root of the polynomial $3(2 X-1)^3-33(2 X-1)^2+33(2 X-1)-7=4(6 X^3-42 X^2+54 X-19)$.
\end{proof}

\begin{lemma}\label{Mary}
Let $\Phi\colon \Rplus \to \R$ with $\Phi(x)=\Omega(1/x,0)$.  If $m$ is a nonnegative integer, and $x \in [m,m+1]$, then
\begin{align*}
\Phi(x) & = 2 m+1 - \frac{2 m(m+1)}{x} + \frac{m(m+1)(2 m+1)}{3 x^2} \\
\Phi^\prime(x) & = \frac{2 m(m+1)}{x^2} - \frac{2 m(m+1)(2 m+1)}{3 x^3}.
\end{align*}
We have $\Phi^\prime(x)=0$ for $x \in (0,1]$, while $\Phi^\prime(x) > 0$ for $x > 1$, so that $\Phi(x)=1$ for $x \in (0,1]$ and $\Phi(x) > 1$ for $x > 1$.
\end{lemma}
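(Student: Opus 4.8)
The plan is to compute $\Phi(x)=\Omega(1/x,0)$ directly from the defining series \eqref{Orlando}. Writing out the definition,
\[
\Phi(x)=\sum_{n\in\Z}\max\!\left(0,1-\frac{|n|}{x}\right)^{2},
\]
so the $n$th term is nonzero precisely when $|n|<x$. First I would fix a nonnegative integer $m$ and restrict to $x\in[m,m+1]$, and determine which terms survive: for $|n|\le m-1$ we have $|n|<m\le x$, so the term is the full $(1-|n|/x)^2$; for $|n|=m$ the term equals $(1-m/x)^2$, which is nonnegative since $x\ge m$ (and happens to vanish at $x=m$); for $|n|=m+1$ we have $x\le m+1$, so $1-(m+1)/x\le 0$ and the term is $0$; and for $|n|\ge m+2$ the term is certainly $0$. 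Hence on this interval $\Phi(x)=\sum_{|n|\le m}(1-|n|/x)^2=1+2\sum_{n=1}^{m}(1-n/x)^2$, a genuine finite sum with no boundary ambiguity.

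Next I would expand $(1-n/x)^2=1-2n/x+n^2/x^2$ and apply the standard closed forms $\sum_{n=1}^m n=m(m+1)/2$ and $\sum_{n=1}^m n^2=m(m+1)(2m+1)/6$. This yields
\[
\Phi(x)=1+2\Bigl(m-\frac{m(m+1)}{x}+\frac{m(m+1)(2m+1)}{6x^{2}}\Bigr)=2m+1-\frac{2m(m+1)}{x}+\frac{m(m+1)(2m+1)}{3x^{2}},
\]
which is the claimed formula, and differentiating term by term gives $\Phi'(x)=\dfrac{2m(m+1)}{x^{2}}-\dfrac{2m(m+1)(2m+1)}{3x^{3}}$ on $[m,m+1]$. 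A brief check that the two expressions for $\Phi$ on $[m,m+1]$ and on $[m+1,m+2]$ agree at $x=m+1$ confirms that $\Phi$ is continuous on $\Rplus$, which I would record because it is needed below.

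Finally I would read off the monotonicity claims. For $x\in(0,1]$ we are in the case $m=0$, where the formula collapses to $\Phi(x)=1$ and $\Phi'(x)=0$. For $x>1$, say $x\in(m,m+1]$ with $m\ge 1$, I would factor
\[
\Phi'(x)=\frac{2m(m+1)}{x^{3}}\Bigl(x-\frac{2m+1}{3}\Bigr),
\]
and observe that $x>m\ge (2m+1)/3$ holds exactly because $m\ge 1$, so $\Phi'(x)>0$ on each such interval; combined with the continuity of $\Phi$ at the integers, this makes $\Phi$ strictly increasing on $[1,\infty)$, and since $\Phi(1)=1$ we conclude $\Phi(x)>1$ for all $x>1$. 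There is no serious obstacle here: the only place demanding care is the bookkeeping in the first step — correctly seeing that the summation index runs over $|n|\le m$, with the $|n|=m$ term included but the $|n|=m+1$ term already suppressed by the interval constraint — after which everything is an elementary computation.
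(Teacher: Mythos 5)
Your proof is correct and follows essentially the same route as the paper: restrict to $x\in[m,m+1]$, observe that only the terms with $|n|\le m$ survive, evaluate the finite sum with the standard formulas for $\sum n$ and $\sum n^2$, differentiate, and read off the sign of $\Phi'$ (the paper argues positivity for $x>1$ by noting the $n=1$ term of the defining series is strictly increasing, while you factor the derivative formula; both are fine). The one small point the paper flags that you omit is checking that the left- and right-hand derivatives agree at integer values of $x$, which is what licenses asserting the $\Phi'$ formula, and hence $\Phi'(x)>0$, at those endpoint values; your continuity check plus positivity on the open intervals already yields strict increase and $\Phi(x)>1$ for $x>1$, and the derivative match is a one-line computation, so this is a minor omission rather than a real gap.
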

\begin{proof}
Let $m$ be a nonnegative integer and suppose $x \in [m,m+1]$.  Then the definition \eqref{Orlando} of $\Omega$ tells us that $\Phi(x)=\sum_{n=-m}^{m} (1-|n|/x)^2$ because $1-|n|/x$ is nonpositive when $|n| \geq m+1$.
So $\Phi(x)=1+2 \sum_{n=1}^m (1-n/x)^2$ and we can employ the usual formulae for sums of consecutive integers and sums of consecutive squares to obtain the desired formula for $\Phi$.  Differentiation is then straightforward, as long as one remembers to check that the left- and right-hand derivatives of $\Phi(x)$ match when $x$ is a positive integer.

It is clear from our formulae for $\Phi^\prime(x)$ that $\Phi^\prime(x)=0$ for $x \in (0,1]$, and from the definition of $\Omega$ we see that $\Phi(x)=\sum_{n \in \Z} \max(0,1-|n|/x)^2$, which is a sum of nondecreasing functions.  The $n=1$ term of this sum is strictly increasing for $x > 1$, so that $\Phi^\prime(x) > 0$ when $x > 1$.
\end{proof}
\begin{lemma}\label{Nancy}
Let $\Psi\colon \Rplus \to \R$ with $\Psi(x)=\Omega(1/x,1/(2 x))$.  If $m$ is a positive integer, and $x \in [m-1/2,m+1/2]$, then
\begin{align*}
\Psi(x) & = 2 m - \frac{2 m^2}{x} + \frac{m(4 m^2-1)}{6 x^2} \\
\Psi^\prime(x) & = \frac{2 m^2}{x^2} - \frac{m(4 m^2-1)}{3 x^3}.
\end{align*}
We have $\Psi^\prime(x)=0$ for $x \in (0,1/2]$, while $\Psi^\prime(x) > 0$ for $x > 1/2$, so that $\Psi(x)=0$ for $x \in (0,1/2]$ and $\Psi(x) > 0$ for $x > 1/2$.
\end{lemma}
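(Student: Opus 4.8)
The plan is to imitate the proof of Lemma~\ref{Mary}: unwind the series definition \eqref{Orlando} of $\Omega$, determine exactly which summands are nonzero on each interval $[m-\tfrac12,m+\tfrac12]$, sum them in closed form, and then read off monotonicity. First I would unwind the definition. Since $\Omega(x,y)=\sum_{n\in\Z}\max(0,1-|nx-y|)^2$,
\[
\Psi(x)=\Omega\!\left(\frac1x,\frac1{2x}\right)=\sum_{n\in\Z}\max\!\left(0,1-\frac{|2n-1|}{2x}\right)^2 .
\]
As $n$ ranges over $\Z$, the quantity $|2n-1|$ ranges over the positive odd integers $1,3,5,\dots$, each attained exactly twice (by $n$ and by $1-n$, which are always distinct), so
\[
\Psi(x)=2\sum_{k=1}^{\infty}\max\!\left(0,1-\frac{2k-1}{2x}\right)^2 .
\]
The $k$-th summand is nonzero precisely when $2k-1<2x$, i.e.\ $k<x+\tfrac12$. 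Hence for $x\in[m-\tfrac12,m+\tfrac12]$ (so that $x+\tfrac12\in[m,m+1]$) only the terms $k=1,\dots,m$ can contribute, the term $k=m$ vanishing at the left endpoint, which is harmless; thus $\Psi(x)=2\sum_{k=1}^{m}\bigl(1-\tfrac{2k-1}{2x}\bigr)^2$ throughout this interval.

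Next I would expand the square and apply the standard identities $\sum_{k=1}^{m}(2k-1)=m^2$ and $\sum_{k=1}^{m}(2k-1)^2=\tfrac{m(2m-1)(2m+1)}{3}=\tfrac{m(4m^2-1)}{3}$, obtaining $\Psi(x)=2m-\tfrac{2m^2}{x}+\tfrac{m(4m^2-1)}{6x^2}$, and differentiation yields the stated $\Psi'(x)=\tfrac{2m^2}{x^2}-\tfrac{m(4m^2-1)}{3x^3}$. A small verification is needed at the breakpoints $x=m+\tfrac12$: one must check that the index-$m$ and index-$(m+1)$ formulae agree there, together with their one-sided derivatives. This is immediate because the $k=m+1$ summand $\bigl(1-\tfrac{2(m+1)-1}{2x}\bigr)^2$ and its derivative both vanish at $x=m+\tfrac12$ (the bracket equals $1-\tfrac{2m+1}{2m+1}=0$), so $\Psi$ is $C^1$ across all the breakpoints and the piecewise formulae patch together consistently.

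Finally, for the monotonicity claim: if $x\in(0,\tfrac12]$ then $2x\le 1\le 2k-1$ for every $k\ge1$, so every summand above is $0$ and $\Psi(x)=0$, $\Psi'(x)=0$. For $x>\tfrac12$ I would use the explicit derivative in the form $\Psi'(x)=\tfrac{m}{3x^3}\bigl(6mx-4m^2+1\bigr)$ on $[m-\tfrac12,m+\tfrac12]$. The factor $6mx-4m^2+1$ is increasing in $x$ and at the left endpoint equals $6m\!\left(m-\tfrac12\right)-4m^2+1=2m^2-3m+1=(2m-1)(m-1)\ge 0$ for every positive integer $m$, so it is nonnegative throughout the interval; it is strictly positive there except at the single point $x=\tfrac12$ in the case $m=1$, where $\Psi'(x)=\tfrac{2x-1}{x^3}>0$ for $x>\tfrac12$ directly. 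Hence $\Psi'(x)>0$ for all $x>\tfrac12$, so $\Psi$ is strictly increasing on $(\tfrac12,\infty)$ and $\Psi(x)>\Psi(\tfrac12)=0$ there. The argument is essentially routine and parallels Lemma~\ref{Mary}; the only points demanding any care are the bookkeeping of the multiset $\{|2n-1|:n\in\Z\}$ and the degenerate endpoint $m=1$ in the positivity of $\Psi'$, neither of which presents a genuine obstacle.
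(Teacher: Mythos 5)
Your proof is correct and follows essentially the same route as the paper's: unwind the series for $\Omega$, observe that on $[m-\tfrac12,m+\tfrac12]$ only the terms with odd index at most $2m-1$ contribute, sum via the standard identities, differentiate, and check the half-odd-integer breakpoints. The only cosmetic difference is that you establish $\Psi'(x)>0$ for $x>\tfrac12$ by analyzing the explicit factor $6mx-4m^2+1$, whereas the paper notes more briefly that $\Psi$ is a sum of nondecreasing terms whose $n=1$ term is strictly increasing there; both are fine.
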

\begin{proof}
We write $2\Z+1$ for the set of all odd integers, and then the definition \eqref{Orlando} of $\Omega$ shows us that $\Psi(x)=\sum_{n \in 2\Z+1} \max(0,1-|n|/(2 x))^2$.  So if $x \in (0,1/2]$, we see that $\Psi(x)=0$.  

Let $m$ be a positive integer and suppose $x \in [m-1/2,m+1/2]$.  Then $\Psi(x)=\sums{n \in 2\Z+1 \\ |n| \leq 2 m-1} (1-|n|/(2 x))^2$ because $1-|n|/(2 x)$ is nonpositive when $|n| \geq 2 m+1$.
So $\Psi(x)=2 \sum_{k=1}^{m} (1-(2 k-1)/(2 x))^2$, and we employ can the usual formulae for sums of consecutive integers and sums of consecutive squares to obtain the desired formula for $\Psi$.  Differentiation is then straightforward, as long as one remembers to check that the left- and right-hand derivatives of $\Psi(x)$ match when $x$ is equal to half an odd positive integer.

It is clear from our formulae for $\Psi^\prime(x)$ that $\Psi^\prime(x)=0$ for $x \in (0,1/2]$.  Since $\Psi(x)=\sum_{n \in 2\Z+1} \max(0,1-|n|/(2 x))^2$ is the sum of nondecreasing functions and the $n=1$ term is strictly increasing when $x > 1/2$, we see that $\Psi^\prime(x)> 0$ when $x> 1/2$.
\end{proof}

\section*{Acknowledgements}
The author thanks Jonathan Jedwab and Amy Wiebe for helpful comments about this work.  The author thanks the anonymous referees for valuable corrections and suggestions that have improved this paper.


\begin{thebibliography}{10}

\bibitem{Borwein-Choi-2000}
P.~Borwein and K.-K.~S. Choi.
\newblock Merit factors of character polynomials.
\newblock {\em J. London Math. Soc. (2)}, 61(3):706--720, 2000.

\bibitem{Borwein-Choi-Jedwab}
P.~Borwein, K.-K.~S. Choi, and J.~Jedwab.
\newblock Binary sequences with merit factor greater than 6.34.
\newblock {\em IEEE Trans. Inform. Theory}, 50(12):3234--3249, 2004.

\bibitem{Golay-1972}
M.~J.~E. Golay.
\newblock A class of finite binary sequences with alternate auto-correlation
  values equal to zero.
\newblock {\em IEEE Trans. Inform. Theory}, 18(3):449--450, 1972.

\bibitem{Golay-1983}
M.~J.~E. Golay.
\newblock The merit factor of {L}egendre sequences.
\newblock {\em IEEE Trans. Inform. Theory}, 29(6):934--936, 1983.

\bibitem{Golomb}
S.~W. Golomb.
\newblock {\em Shift register sequences}.
\newblock With portions co-authored by Lloyd R. Welch, Richard M. Goldstein,
  and Alfred W. Hales. Holden-Day, Inc., San Francisco,
  Calif.-Cambridge-Amsterdam, 1967.

\bibitem{Golomb-Gong}
S.~W. Golomb and G.~Gong.
\newblock {\em Signal design for good correlation}.
\newblock Cambridge University Press, Cambridge, 2005.

\bibitem{Hoholdt-Jensen}
T.~H{\o}holdt and H.~E. Jensen.
\newblock Determination of the merit factor of {L}egendre sequences.
\newblock {\em IEEE Trans. Inform. Theory}, 34(1):161--164, 1988.

\bibitem{Jedwab-Katz-Schmidt-2013-Advances}
J.~Jedwab, D.~J. Katz, and K.-U. Schmidt.
\newblock Advances in the merit factor problem for binary sequences.
\newblock {\em J. Combin. Theory Ser. A}, 120(4):882--906, 2013.

\bibitem{Jedwab-Katz-Schmidt-2013-Littlewood}
J.~Jedwab, D.~J. Katz, and K.-U. Schmidt.
\newblock Littlewood polynomials with small {$L^4$} norm.
\newblock {\em Adv. Math.}, 241:127--136, 2013.

\bibitem{Jensen-Hoholdt}
H.~E. Jensen and T.~H{\o}holdt.
\newblock Binary sequences with good correlation properties.
\newblock In {\em Applied Algebra, Algebraic Algorithms and Error-Correcting
  Codes}, volume 356 of {\em Lecture Notes in Comput. Sci.}, pages 306--320.
  Springer, Berlin, 1989.

\bibitem{Jensen-Jensen-Hoholdt}
J.~M. Jensen, H.~E. Jensen, and T.~H{\o}holdt.
\newblock The merit factor of binary sequences related to difference sets.
\newblock {\em IEEE Trans. Inform. Theory}, 37(3):617--626, 1991.

\bibitem{Karkkainen}
K.~H.~A. K{\"a}rkk{\"a}inen.
\newblock Mean-square cross-correlation as a performance measure for department
  of spreading code families.
\newblock In {\em IEEE Second International Symposium on Spread Spectrum
  Techniques and Applications}, pages 147--150, 1992.

\bibitem{Katz-2013}
D.~J. Katz.
\newblock Asymptotic {$L^4$} norm of polynomials derived from characters.
\newblock {\em Pacific J. Math.}, 263(2):373--398, 2013.

\bibitem{Kirilusha-Narayanaswamy}
A.~Kirilusha and G.~Narayanaswamy.
\newblock Construction of new asymptotic classes of binary sequences based on
  existing asymptotic classes.
\newblock Summer {S}cience {T}ech. {R}ep., Dept. Math. Comput. Sci., Univ.
  Richmond, VA, 1999.

\bibitem{Lidl-Niederreiter}
R.~Lidl and H.~Niederreiter.
\newblock {\em Finite {F}ields}, volume~20 of {\em Encyclopedia of Mathematics
  and its Applications}.
\newblock Cambridge University Press, Cambridge, second edition, 1997.

\bibitem{Littlewood-1966}
J.~E. Littlewood.
\newblock On polynomials {$\sum ^{n}\pm z^{m}$}, {$\sum ^{n}e^{\alpha
  _{m}i}z^{m}$}, {$z=e^{\theta _{i}}$}.
\newblock {\em J. London Math. Soc.}, 41:367--376, 1966.

\bibitem{Littlewood-1968}
J.~E. Littlewood.
\newblock {\em Some {P}roblems in {R}eal and {C}omplex {A}nalysis}.
\newblock D. C. Heath and Co. Raytheon Education Co., Lexington, MA, 1968.

\bibitem{Pursley-Sarwate}
M.~B. Pursley and D.~V. Sarwate.
\newblock Bounds on aperiodic cross-correlation for binary sequences.
\newblock {\em Electronics Letters}, 12(12):304--305, June 1976.

\bibitem{Sarwate}
D.~Sarwate.
\newblock Mean-square correlation of shift-register sequences.
\newblock {\em Communications, Radar and Signal Processing, IEE Proceedings F},
  131(2):101--106, 1984.

\bibitem{Sarwate-Pursley}
D.~V. Sarwate and M.~B. Pursley.
\newblock Crosscorrelation properties of pseudorandom and related sequences.
\newblock {\em IEEE Trans. Inform. Theory}, 68(5):593--619, 1980.
\newblock Correction in IEEE Trans. Inform. Theory 68(12):1554, 1980.

\bibitem{Scholtz-Welch}
R.~A. Scholtz and L.~R. Welch.
\newblock Group characters: sequences with good correlation properties.
\newblock {\em IEEE Trans. Inform. Theory}, 24(5):537--545, 1978.

\bibitem{Schroeder}
M.~R. Schroeder.
\newblock {\em Number {T}heory in {S}cience and {C}ommunication}, volume~7 of
  {\em Springer Series in Information Sciences}.
\newblock Springer-Verlag, Berlin, fourth edition, 2006.

\bibitem{Turyn}
R.~Turyn.
\newblock Optimum codes study, final report.
\newblock Technical report, Sylvania Electronic Products Inc., Waltham, MA,
  1960.

\bibitem{Weil}
A.~Weil.
\newblock On some exponential sums.
\newblock {\em Proc. Nat. Acad. Sci. U. S. A.}, 34:204--207, 1948.

\end{thebibliography}
\end{document}